\Crefname{algocf}{Algorithm}{Algorithms}
\crefname{algocfline}{line}{lines}
\Crefname{invariant}{Invariant}{Invariants}
\Crefname{claim}{Claim}{Claims}
\Crefname{subclaim}{Subclaim}{Subclaims}
\renewenvironment{leftbar}[1][\hsize]
{%
    \MakeFramed{\hsize#1\advance\hsize-\width\FrameRestore}%
}
{\endMakeFramed}
\definecolor{DarkGray}{rgb}{0.66, 0.66, 0.66}
\definecolor{DarkPowderBlue}{rgb}{0.0, 0.2, 0.6}
\definecolor{fluorescentyellow}{rgb}{0.8, 1.0, 0.0}
\newcommand{\rqq}{\ell_q}
\newcommand{\alert}[1]{{\color{red}#1}}
\newcounter{note}[section]
\renewcommand{\thenote}{\thesection.\arabic{note}}
\newcommand{\agnote}[1]{\refstepcounter{note}$\ll${\bf Anupam~\thenote:}
  {\sf \color{blue} #1}$\gg$\marginpar{\tiny\bf AG~\thenote}}
\newcommand{\myhl}[1]{\hspace*{-\fboxsep}\colorbox{yellow}{\parbox{5.5in}{%
    #1%
    }}}
\newcommand{\initOneLiners}{%
    \setlength{\itemsep}{0pt}
    \setlength{\parsep }{0pt}
    \setlength{\topsep }{0pt}
}
\newenvironment{OneLiners}[1][\ensuremath{\bullet}]
    {\begin{list}
        {#1}
        {\initOneLiners}}
    {\end{list}}
  \def\\{}%
  \def\texttt#1{<#1>}%
  \def\textsf#1{<#1>}%
  \def\mathsf#1{<#1>}%
  \def\ensuremath#1{#1}%
  \def\xspace{}%
  \def\Cref#1{<Label:#1>}%
  \def\eqref#1{<Eq.:#1>}%
\newtheorem{theorem}{Theorem}[section]
\newtheorem{lemma}[theorem]{Lemma}
\newtheorem{claim}[theorem]{Claim}
\newtheorem{corollary}[theorem]{Corollary}
\newtheorem{invariant}{Invariant}
\newtheorem{subclaim}[theorem]{Subclaim}
\theoremstyle{definition}
\newtheorem{defn}[theorem]{Definition}
\theoremstyle{remark}
\newtheorem{remark}[theorem]{Remark}
\renewcommand{\theinvariant}{(I\@arabic\c@invariant)}
\newcommand{\kser}{{\textsf{$k$-Server}}\xspace}
\newcommand{\ksertw}{{\textsf{$k$-ServerTW}}\xspace}
\newcommand{\wpage}{{\textsf{Weighted Paging}}\xspace}
\newcommand{\sse}{\subseteq}
\newcommand{\la}{\lambda}
\newcommand{\ga}{\gamma}
\newcommand{\level}{{\textsf{level}}}
\newcommand{\loc}{{\mathsf{loc}}}
\newcommand{\leafreq}{{\mathsf{EarliestLeafReq}}}
\newcommand{\inh}{{\mathsf{inh}}}
\newcommand{\cost}{{\textsf{cost}}}
\newcommand{\lcost}{{\textsf{logcost}}}
\newcommand{\rs}{{\cR^s}}
\newcommand{\ZZ}{\mathbb{Z}}
\newcommand{\rt}{\mathbbm{r}}
\newcommand{\rootvtx}{\rt}
\newcommand{\lca}{\operatorname{lca}}
\newcommand{\poly}{\operatorname{poly}}
\renewcommand{\emptyset}{\varnothing}
\newcommand{\floor}[1]{\lfloor#1\rfloor}
\newcommand{\fD}{\mathfrak{D}}
\newcommand{\fL}{\mathfrak{L}}
\newcommand{\fM}{\mathfrak{M}}
\newcommand{\fMtw}{\mathfrak{M}_{TW}}
\newcommand{\fR}{\mathfrak{R}}
\newcommand{\fT}{\mathfrak{T}}
\newcommand{\chF}{{\cF^{ch}}}
\newcommand{\cA}{{\mathcal {A}}}
\newcommand{\cF}{{\cal F}}
\newcommand{\cL}{\mathcal{L}}
\newcommand{\cR}{\mathcal{R}}
\newcommand{\cT}{{\mathcal{T}}}
\newcommand{\suppl}{{\mathsf{supply}}}
\newcommand{\DU}{\textsc{FullUpdate}\xspace}
\newcommand{\DUZ}{\textsc{SimpleUpdate}\xspace}
\newcommand{\FL}{{\textsc{FindLeaves}}\xspace}
\newcommand{\BT}{{\textsc{BuildTree}}\xspace}
\newcommand{\BW}{{\textsc{BuildWitness}}\xspace}
\newcommand{\RL}{\textsf{ReqLoc}}
\DeclareMathOperator{\RI}{\textsf{ReqInt}}
\newcommand{\y}[1]{y^{#1}}
\newcommand{\yv}{\y{v}}
\newcommand{\yu}{\y{u}}
\newcommand{\cons}{{\fL}}%
\newcommand{\constr}{{\cal C}}
\newcommand{\sib}{{\sf activesib}}
\newcommand{\ist}{{i^\star}}
\newcommand{\ust}{{u^\star}}
\newcommand{\vstar}{{v^\star}}
\newcommand{\tst}{{\tau^\star}}
\newcommand{\Sst}{{S^\star}}
\newcommand{\awake}{{\sf Awake}}
\newcommand{\unmarked}{{\sf prev}}
\newcommand{\prev}{\unmarked}
\newcommand{\fll}{{\sf fill}}
\newcommand{\loss}{{\sf loss}}
\newcommand{\nf}{\nicefrac}
\newcommand{\btau}{\pmb{\tau}}
\newcommand{\Diff}{D}
\newcommand{\junk}[1]{}
\newcommand{\eat}[1]{}
\newif\ifhideproofs
\newenvironment{subproof}[1][\proofname]{%
  \begin{proof}[#1]%
}{%
  \end{proof}%
}
\def\Mbound{\frac{5H \la^H k}{4 \gamma} + 1}
\begin{document}
\title{A Hitting Set Relaxation for $k$-Server \\ and an Extension to
  Time-Windows}

\author{
{Anupam Gupta\thanks{Computer Science Department, Carnegie Mellon University, Pittsburgh, PA. Email: {\tt anupamg@cs.cmu.edu}.}}
\and
{Amit Kumar\thanks{Department of Computer Science and Engineering, IIT Delhi, New Delhi, India. Email: {\tt amitk@cse.iitd.ac.in}.}}
\and
{Debmalya Panigrahi\thanks{Department of Computer Science, Duke University, Durham, NC. Email: {\tt debmalya@cs.duke.edu}.}}
}

\maketitle

\begin{abstract}
  We study the $k$-server problem with time-windows. In this problem,
  each request $i$ arrives at some point $v_i$ of an $n$-point metric space 
  at time $b_i$ and comes
  with a deadline $e_i$. One of the $k$ servers must be moved to $v_i$
  at some time in the interval $[b_i, e_i]$ to satisfy this
  request. We give an online algorithm for this problem with a
  competitive ratio of $\poly\log (n,\Delta)$, where $\Delta$
  is the aspect ratio of the metric space. Prior to our work, the
  best competitive ratio known for this problem was
  $O(k \poly \log(n))$ given by Azar et al. (STOC 2017).

  Our algorithm is based on a new covering linear program
  relaxation for $k$-server on HSTs. This LP naturally corresponds to 
  the min-cost flow formulation of $k$-server, and easily extends to the
  case of time-windows. We give an online algorithm for obtaining a 
  feasible fractional solution for this LP, and a primal dual analysis
  framework for accounting the cost of the solution. Together, they yield
  a new $k$-server algorithm with poly-logarithmic competitive ratio, 
  and extend to the time-windows case as well. Our principal technical 
  contribution lies
  in thinking of the covering LP as yielding a {\em truncated}
  covering LP at each internal node of the tree, which allows us to keep
    account of server movements across subtrees.
  We hope that this LP relaxation and the algorithm/analysis will 
  be a useful tool for addressing $k$-server and related problems.

\end{abstract}



\section{Introduction}
\label{sec:introduction}

The \kser problem, originally proposed by Manasse, McGeoch, 
and Sleator~\cite{ManasseMS90}, is perhaps the most well-studied
problem in online algorithms. Given an $n$-point
metric space and an online sequence of requests at various locations, the goal
is to coordinate $k$ servers so that each request is served by moving
a server to the corresponding location. The objective of the algorithm
is to minimize the total distance moved by the servers (i.e., the movement cost). 
It has been known for 
more than two decades that the best deterministic competitive ratio for
this problem is
between $k$~\cite{ManasseMS90} and $2k-1$~\cite{KoutsoupiasP95}, although 
determining the exact constant remains open. For randomized algorithms,
even obtaining a tight asymptotic bound is still open, although
there has been tremendous progress in the last decade culminating in a 
poly-logarithmic competitive
ratio~\cite{BBMN11,BubeckCLLM18,BuchbinderGMN19}. 

We focus on the \emph{$k$-server with time-windows}
(\ksertw) problem, where each request arrives at a location in the metric space at 
some time $b$ with a deadline $e \ge b$. 
The algorithm must satisfy
the request by moving a server to that location at any point during
this time interval $[b, e]$. (If $e=b$ for every request, this reduces
to \kser.) The techniques used to solve the standard \kser
problem seem to break down in the case of time-windows. Nonetheless,
an $O(k \poly \log n)$-competitive deterministic algorithm was given for the case
where the underlying metric space is a tree~\cite{AzarGGP17}; this gives an
$O(k \poly \log n)$-competitive \emph{randomized} algorithm for arbitrary metric
spaces using metric embedding results. 

For the special case of \ksertw on an unweighted star,
\cite{AzarGGP17} obtained competitive ratios of $O(k)$ and $O(\log k)$ using
deterministic and randomized algorithms respectively. The deterministic 
competitive ratio of $O(k)$ extended to weighted stars as well (which is same as \wpage),
but a randomized (poly)-logarithmic bound already turned out
to be more challenging; a bound of $\poly\log(n)$ was obtained 
only recently~\cite{Gupta0P20}. 
This raises the natural question: can we obtain a poly-logarithmic
competitive ratio for the \ksertw problem on general metric spaces?
The technical
gap between \wpage and \kser is substantial and bridging this gap for
randomized algorithms was the preeminent challenge in online
algorithms for some time. Moreover, the approaches eventually used to bridge 
this gap do not seem to extend to time-windows, so we have to devise a new 
algorithm for \kser as well in solving \ksertw.
We successfully answer this question. 

\begin{theorem}[Randomized Algorithm]
  \label{thm:main-tw}
  There is an $O(\poly\log (n\Delta))$-competitive randomized algorithm for
  \ksertw on any $n$-point metric space with aspect ratio~$\Delta$.
\end{theorem}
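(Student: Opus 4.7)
The plan is to reduce from general $n$-point metrics to weighted HSTs via the standard randomized tree embedding of Fakcharoenphol--Rao--Talwar, which incurs an expected distortion of $O(\log n)$. After rounding edge weights, the resulting HST has depth $O(\log(n\Delta))$ and preserves the optimal $k$-server cost up to a logarithmic factor in expectation, so it suffices to produce an $O(\polylog(n\Delta))$-competitive \emph{deterministic} algorithm for \ksertw on an HST; in what follows I focus on this tree version.

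The centerpiece will be the new covering LP relaxation for \ksertw on HSTs promised in the abstract. The natural min-cost flow view of $k$-server has a variable for each (edge, time) pair counting server crossings; I would dualize this into a covering form by writing, for every request window and every ``cut'' of the tree induced by an ancestor edge, a constraint asserting that a sufficient mass of server movement must occur on the correct side of the cut during that window. The deadline structure $[b_i,e_i]$ fits naturally here, because each request's window simply indexes which time intervals may contribute to that constraint. Crucially, at each internal node $u$ of the HST I intend to restrict attention to a \emph{truncated} covering LP that only records movement between the children of $u$, deferring finer-grained routing to the recursive subproblems; this is the device needed to decouple the subtrees while still tracking inter-subtree flow.

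Given this LP, the online algorithm proceeds level by level on the HST. At each request arrival, I would raise primal variables multiplicatively (in the style of online covering / online set cover) until the corresponding truncated constraint becomes satisfied, and then propagate the induced demand into the appropriate child subtree. The analysis is a primal-dual argument in which each still-violated constraint funds a dual variable whose growth can be charged against the offline optimum via LP duality; the logarithmic depth of the HST, together with a logarithmic number of geometric ``rate scales,'' contributes the $\polylog(n\Delta)$ factors. Summing across levels produces a fractional online solution of cost $\polylog(n\Delta)\cdot \opt$.

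The main obstacle, and the step I expect to be most delicate, is the online \emph{rounding} of the fractional solution into an integral server schedule that simultaneously respects the deadlines. Standard online rounding for $k$-server on HSTs -- maintain server marginals at each node and couple children using optimal transport -- has to be extended so that the rounded trajectory not only matches fractional marginals at each time, but actually places a physical server at the request location at some moment inside its window $[b_i,e_i]$. Here I would lean on the truncated LP at each node to certify that enough fractional mass has arrived in time, then use a node-by-node coupling to lift this into integral movements, paying another $O(\polylog(n\Delta))$ factor for the rounding loss. Composed with the FRT distortion and the fractional competitive ratio, this yields the claimed $\polylog(n\Delta)$ randomized competitive ratio for \ksertw on arbitrary metrics.
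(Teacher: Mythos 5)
Your high-level plan --- FRT embedding, a covering LP with truncated per-node LPs, an online primal-dual fractional algorithm on the HST, and then a rounding step --- mirrors the paper's structure, and the skeleton is sound. But there are two substantive points where the proposal diverges in ways that matter.

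First, you place most of the expected difficulty on the rounding step, proposing to modify the node-by-node coupling so that the rounded trajectory respects deadlines. In the paper this extra work is not needed: the fractional algorithm (\Cref{thm:frac-tw}) is already required to place a full unit of server at the requested leaf at some moment inside the window $[b_i,e_i]$. Given that guarantee, the rounding of \cite{BBMN11,BubeckCLLM18} applies essentially unchanged --- the deadline constraint is enforced entirely at the fractional level, not in the coupling. So the burden you are pushing onto rounding must in fact be discharged by the fractional algorithm.

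Second, and more seriously, your description of the fractional algorithm misses the idea that carries the time-windows extension. You would raise primal/dual variables ``at each request arrival'' and serve each request via its own truncated-constraint update. This is fine for ordinary \kser, but with windows it loses badly: an adversary can issue many requests with overlapping windows in a small region that the offline optimum serves with a single sweep, yet a strategy that handles each critical request independently pays a separate, comparable cost for every one of them, and the dual cannot grow fast enough to fund this. The paper's resolution is (i) to act only when a request becomes \emph{critical} (near its deadline), not on arrival, and (ii) to \emph{piggyback}: once the algorithm has paid to move servers to a critical leaf, it also serves nearby outstanding requests along an Earliest-Deadline-First tour of roughly comparable cost (\BT, \FL, \BW in \S\ref{sec:build-tree}). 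The accompanying \emph{charging forest} $\chF(v)$ and the low-congestion lemmas are then exactly what let the dual-fitting argument absorb the cost of those piggybacked requests. Without the piggybacking mechanism your primal-dual accounting does not close, so the proposal as written has a genuine gap rather than merely omitted detail.
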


\Cref{thm:main-tw} follows from our main technical result
\Cref{thm:frac-tw} below. Indeed, since
any $n$-point metric space can be probabilistically approximated using
$\lambda$-HSTs with height $H = O(\log_\lambda \Delta)$ and expected
stretch $O(\lambda \log_\lambda n)$~\cite{FakcharoenpholRT04}, we can set
$\lambda = O(\log \Delta)$ and use the rounding algorithm
from~\cite{BBMN11,BubeckCLLM18} to complete the reduction.

\begin{theorem}[Fractional Algorithm for HSTs]
  \label{thm:frac-tw}
  Fix $\delta' \leq \nf1{n^2}$. There is an
  $O(\poly (H, \la, \log n))$-competitive fractional algorithm for \ksertw
  using $\frac{k}{1-\delta'}$ servers such that for any instance on a $\la$-HST with height $H$ and $\lambda \geq 10H$, and for
  each request interval $R = [b,e]$ at some leaf $\ell$ in this instance, there is a
  time in this interval at which the number of servers at $\ell$ is at
  least~$1$.
\end{theorem}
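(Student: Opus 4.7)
The plan is to set up a hitting-set-style covering LP on the HST that captures server flow over time, and then run an online primal-dual algorithm maintaining a feasible primal together with a valid dual that lower-bounds the offline optimum. For the LP, at each internal node $u$ of the tree and each ``epoch'' $I$ (a maximal time interval between resets at $u$, induced by the history of when $T_u$ has had enough servers to absorb its own demand), I would write one covering inequality. Its primal variables are server-flow variables on HST edges across time, and its right-hand side is a \emph{truncated} estimate of the mass that must enter $T_u$ during $I$ in order to serve, within their time-windows, all requests in $T_u$ whose intervals lie in $I$. The truncation caps each right-hand side at $O(k)$ and is the central conceptual move: it prevents demand from propagating meaninglessly up to the root, and corresponds intuitively to the fact that a subtree can absorb at most $O(k)$ units of movement from outside. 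The dual variable $y_{u,I}$ of the constraint is charged against the length of level-$u$ edges, matching the min-cost flow cost of \kser on a $\la$-HST.

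For the algorithm, I would process requests in time order and touch a constraint $(u,I)$ only at the instant the earliest unsatisfied deadline it contains is about to expire. At that moment $y_{u,I}$ is raised continuously while the primal flows $x_e$ on the edges between $u$ and the root evolve under a multiplicative-plus-additive rule of the form $\mathrm{d}x_e/\mathrm{d} y_{u,I} \propto (x_e + \eta)$ for a small additive offset $\eta$, stopping as soon as the primal constraint is tight; at that instant the corresponding leaf carries mass $\geq 1$, yielding the in-window feasibility guarantee stated in the theorem. Deferring activation to the deadline is exactly what turns a classical \kser algorithm into a \ksertw algorithm. The additive offset contributes the $\log n$ factor of the competitive ratio, while the $\lambda \geq 10H$ assumption together with the $H$ levels of the HST contribute the $\poly(H,\la)$ factor. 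The $\frac{1}{1-\delta'}$ resource augmentation absorbs the polynomially small rounding slack incurred in converting fractional coverage into the strict $\geq 1$ inequality, which is why $\delta' \leq 1/n^2$ suffices.

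The main obstacle, and the reason the truncated LP is the key new ingredient, is \emph{cross-subtree accounting}: a server moving between two subtrees pays only for edges up to their $\lca$, yet potentially reduces the deficit at every ancestor of both endpoints. Without truncation, a single unit of movement could cover unboundedly many high-level constraints, and the dual built from the covering LP would fail to be feasible. With the right-hand sides capped at $O(k)$, each epoch's dual can grow only by a controlled amount, and at most $O(H)$ ancestor duals respond to any one movement, so weak duality converts the cumulative dual growth into a genuine lower bound on the offline cost. I expect most of the technical effort to lie in (i) defining the epochs and the precise truncation so that whenever a constraint activates the update rule reaches primal feasibility with only a bounded amount of simultaneous dual growth across its $O(H)$ ancestors, and (ii) verifying that the dual so constructed is feasible for an LP whose optimum lower-bounds the true offline \ksertw cost; together these would give the claimed $\poly(H,\la,\log n)$ competitive ratio.
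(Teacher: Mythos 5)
Your high-level sketch shares some genuine features with the paper's proof (a covering LP on server-flow variables, a ``truncated'' LP at internal nodes, a multiplicative-plus-additive primal update, and activating work only at deadlines), but it mischaracterizes the central truncation device and, more seriously, omits the machinery that actually makes the time-window extension work.

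First, the truncation. You describe it as ``capping each right-hand side at $O(k)$.'' That is not what the paper does, and a static cap would not suffice. The paper's truncated constraint at node $v$, say $\varphi_{A,\btau,v}$, has right-hand side $|A\cap T_v| - k_{v,\tau_v} - 2\delta(n-n_v)$, where $k_{v,\tau_v}$ is the amount of server mass \emph{the algorithm itself} currently has in $T_v$. These local LPs are thus state-dependent and are \emph{not} valid relaxations in general; only the root's local LP is implied by the true min-cost-flow relaxation. The whole point is that truncated constraints at the children of $v$ \emph{compose} to give truncated constraints at $v$ (your proposal never mentions composition), and the $-2\delta(n - n_v)$ weakening is precisely what makes that composition go through after accounting for inactive leaves (\Cref{cor:comp}). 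Without the composition rule you have no way to propagate a dual built at a child into a dual contribution at the parent, and no analogue of the invariant $(1 + \nicefrac1H)z_C \geq \sum_{\sigma\ni C} z_{C(v,\sigma,\tau')}$ that drives the $\beta = O(H\log(n\lambda))$ dual-feasibility bound. Your sketch instead waves at ``at most $O(H)$ ancestor duals respond to any one movement,'' but that is exactly what requires proof, and the proof structure is the recursive composition together with the slack/depleted bookkeeping and the $\awake$/$\prev$ machinery.

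Second, and this is the larger gap, ``deferring activation to the deadline is exactly what turns a classical \kser algorithm into a \ksertw algorithm'' is not correct, and it was already known from~\cite{AzarGGP17} that deadline-only service yields only $O(k\,\polylog)$. The crux of the time-window extension in the paper is \emph{piggybacking}: when a request at $\ell_q$ becomes critical, the algorithm runs $\BT$/$\FL$ to find a low-cost subtree of nearby leaves with outstanding requests (in roughly earliest-deadline order) and serves them all in the same trip, and runs $\BW$ to build a \emph{charging forest} $\chF(v)$ that witnesses a dual lower bound for these piggybacked services. The $\DUZ$ procedure then adds non-$\bot$ constraints to the local LPs using that forest, and the Low Congestion lemmas (\Cref{cl:lowcong1}, \Cref{lem:lowcongestion}) show the resulting intervals have congestion $O(H)$, which is what keeps the dual feasible. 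None of this appears in your proposal. Without piggybacking, an adversary can force you to repeatedly pay for long trips to the same neighborhood one request at a time, and you lose a factor $k$ (or worse). So your outline, as written, would at best recover a $k$-server bound but would not establish the $\poly(H,\lambda,\log n)$ competitiveness for \ksertw that the theorem claims.

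A smaller point: the $\frac{1}{1-\delta'}$ augmentation in the paper comes from the $2k$ dummy leaves initialized with mass $\nicefrac12$ and the $[\nicefrac{\delta}{2}, 1 - \nicefrac{\delta'}{2}]$ discretization invariant, not from ``converting fractional coverage into the strict $\geq 1$ inequality''; the LP is covering and the feasibility guarantee is that the requested leaf reaches mass $1 - \delta'$ within its window, which then gets scaled up.
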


Apart from the result itself, a key contribution of our paper is an
approach to solve a new covering linear program for \kser.
Previous results in \kser (e.g.,~\cite{BubeckCLLM18}) used a very
different LP relaxation, and it remains unclear how to extend that
relaxation to the case of time-windows. The covering LP in this paper
is easy to describe and flexible. It is quite natural, following from
the min-cost LP formulation for \kser (see \S\ref{sec:min-cost-relation}).  We hope that this relaxation,
and indeed our online algorithm and  accounting framework for obtaining a feasible solution
will be useful for other related problems.

\subsection{Our Techniques}
\label{sec:techniques}

The basis of our approach is a restatement of \kser (and thence
\ksertw) as a covering LP without box constraints. This LP has
variables $x(v,t)$ that try to capture the event that a server leaves
the subtree rooted at $v$ at some time $t$. There are several complications
with this LP: apart from having an exponential number
of constraints, it is too unstructured to directly tell us how to
move servers. E.g., the variable for a node may increase but that for its
parent or child edges may not. Or the online LP solver may increase
variables for timesteps in the past, which then need to be translated to
server movements at the present timestep.

Our principal technical contribution is to view this new LP as
yielding ``truncated'' LPs, one for each internal node $v$ of the
tree. This ``local'' LP for $v$ restricts the original LP to inequalities
and variables corresponding to the subtree below $v$. This truncation
is contingent on prior decisions taken by the algorithm, 
and so the constraints obtained may not be implied by those
for the original LP. However, we show how the primal---and just as
importantly---the dual solutions to local LPs can be composed to give
primal/dual solutions to the original LP. These are then crucial for
our accounting.

The algorithm for
\kser  proceeds as follows. Suppose a request comes at some leaf
$\ell$, and suppose $\ell$ has less than $1-\delta'$ amounts of server
at it (else we deem it satisfied):
\begin{enumerate}[leftmargin=2mm]
\item Consider a vertex $v_i$ on the \emph{backbone} (i.e., the path
  $\ell = v_0, v_1, \ldots, v_H = \rootvtx$ from leaf $\ell$ to the
  root $\rootvtx$). If $v_i$ has off-backbone children whose descendant
  leaves contain non-trivial amounts of server, we
  move servers from these descendants to $\ell$ until the total
  server movement has cost roughly some small quantity $\gamma$. Since
  the cost of server movement grows exponentially up the tree, and the
  movement cost is roughly the same for each $v_i$, more server mass is moved from
  closer locations. Since there are $H$ levels in the HST,
  the total movement cost is roughly $H \gamma$. This concludes one
  ``round'' of server movement. 
  This server movement is now repeated over multiple rounds until $\ell$
  has $1-\delta'$ amount of server at it. (This can be thought of as a
  discretization of a continuous process.)

\item To account for server movement at node $v_i$, we raise both
  primal and dual variables of the local LP at $v_i$. The primal
  increase tells us which children of $v_i$ to move the servers
  from. The dual increase allows us to account for the server
  movement. Indeed, we ensure that the total dual increase for the
  local LP at each $v_i$---and hence by our composition operations,
  the dual increase for the global LP---is also approximately $\gamma$
  in each round. Moreover, we show this dual scaled down by
  $\beta \approx O(\log n)$ is feasible. This means that the
  $O(H\gamma)$ cost of server movement in each round can be
  approximately charged to this increase of the global LP dual, giving us
   $H\beta = O(H \log n)$-competitiveness.

\item The choice of dual variables to raise for the local LP at node
  $v$ is dictated by the corresponding dual variables for the children
  of $v$. Each constraint in the local LP at $v$ is
  composed from the local constraints at some of its children. It is
  possible that there are several constraints at $v$ that are
  composed using the same constraint at a child $u$ of $v$. We
  maintain the invariant that the total dual values of the former is
  bounded by the dual value of the latter. Now, we can only raise
  those dual variables at $v$ where there is some slack in this
  invariant condition.

\end{enumerate}

Finally, to extend our results to \ksertw, we say that a request
$(\ell, I=[b,q])$ becomes critical (at time $q$) if the amount of
server mass at $\ell$ at any time during $I$ was at most
$1 - \delta'$. We proceed as above to move server mass to
$\ell$. However, after servicing $\ell$, we also service active
request intervals at nearby leaves: we service these {\em piggybacked
  requests} according to (a variation of) the earliest deadline rule while
ensuring that the total cost incurred remains bounded by (a factor
times) the cost incurred to service $\ell$. We use ideas
from~\cite{AzarGGP17} (for the case of $k=1$) to find this tour, but
we need a new dual-fitting-based analysis of this algorithm. Moreover,
new technical insights are needed to fit this dual-fitting analysis
(which works only for $k=1$) with the rest of our analytical
framework. Indeed, the power of our LP relaxation for \kser lies in
the ease with which it extends to $\ksertw$.

\subsection{Roadmap}

In~\S\ref{sec:lp}, we describe the covering LP relaxation for both
\kser and \ksertw. In~\S\ref{sec:trunclp} we define the notion of
``truncated'' constraints used to define local LPs at the internal
nodes of the HST, and show how constraints for the children's local
LPs can be composed to get constraints for the parent LP. We then give
the algorithm and analysis for the \kser problem
in~\S\ref{sec:algodesc} and~\S\ref{sec:analysis} respectively:
although we could have directly described the algorithm for \ksertw,
it is easier to understand and build intuition for the algorithm for
\kser first, and then see the extension to the case of
time-windows. This extension appears in~\S\ref{sec:main-windows}: the
algorithm is similar to that in~\S\ref{sec:algodesc}, the principal
addition being the issue of \emph{piggybacked} requests. We give the
analysis in \S\ref{sec:tw-analysis}: many of the ideas
in~\S\ref{sec:analysis} extend easily, but again new ideas are needed
to account for the piggybacked requests. We conclude with some open
problems in~\S\ref{sec:open}.

\subsection{Related Work}
\label{sec:related-work}

The \kser problem is %
arguably the most prominent problem in online algorithms. Early work
focused on deterministic algorithms~\cite{FiatRR94,KoutsoupiasP95},
and on combinatorial randomized algorithms~\cite{Grove91,BartalG00}.
\kser has also been studied
for special metric spaces, such as lines, (weighted) stars, trees:
e.g.,~\cite{chrobak1991new,chrobak1991optimal,FiatKLMSY91,McGeochS91,AchlioptasCN00,BansalBN12,Seiden01,CoteMP08,CsabaL06,BansalBN12a,BansalBN10}.
\cite{BorodinE98} gives more background on the
\kser problem.
Works obtaining poly-logarithmic competitive ratio are more recent,
starting with %
\cite{BansalBMN15}, and more recently, by
\cite{BubeckCLLM18} and~\cite{Lee18}; this resulted in the first
$\poly\log k$-competitive algorithm.
(\cite{BuchbinderGMN19} gives an alternate projection-based
perspective on \cite{BubeckCLLM18}.)  A new LP
relaxation was introduced by \cite{BubeckCLLM18}, who then use a
mirror descent strategy with a multi-level entropy regularizer to
obtain the online dynamics. However, it is unclear how to extend their
LP when there are time-windows, even for the case of star metrics.
Our competitive ratio for \kser on HSTs is $\poly\log (n\Delta)$
as against just $\poly \log (k)$ in their work, but this weaker bound
is in exchange for a more flexible algorithm/analysis that extends to time-windows.

Online algorithms where requests can be served within some time-window
(or more generally, with delay penalties) have recently been given for
matching~\cite{EmekKW16, AshlagiACCGKMWW17, AzarCK17},
TSP~\cite{AzarV16}, set cover~\cite{AzarCKT20}, multi-level aggregation~\cite{BienkowskiBBCDFJSTV16,BuchbinderFNT17,AzarT19},
$1$-server~\cite{AzarGGP17,AzarT19}, 
network design~\cite{AzarT20}, etc. %
The work closest to ours is that of~\cite{AzarGGP17} who show
$O(k \log^3 n)$-competitiveness for \kser with general delay
functions, and leave open the problem of getting poly-logarithmic
competitiveness. Another related work is 
\cite{Gupta0P20} who show $O(\log k \log n)$-competitiveness for \wpage, which is the same as \kser with delays for weighted star
metrics. This work also used a hitting-set LP: this was based on two different kinds of extensions of the
request intervals and was very tailored to the star metric, and is
unclear how to extend it even to $2$-level trees. Our new LP relaxation is
more natural, being implied by the min-cost flow relaxation for \kser,
and extends to time-windows.

Algorithms for the online set cover problem were first given
by~\cite{AAABN03}: this led to the general primal-dual approach for
covering linear programs (and sparse set-cover
instances)~\cite{BN-MOR}, and to sparse CIPs~\cite{GN12-mor}. Our
algorithm also uses a similar primal-dual approach for the local LPs
defined at each node of the tree; we also need to crucially use the
sparsity properties of the corresponding set-cover-like constraints.


\section{A Covering LP Relaxation}
\label{sec:lp}



For the rest of the paper,
we consider the \kser problem %
on hierarchically well-separated trees (HSTs) with $n$ leaves, rooted
at node $\rt$ and having height $H$. (The standard extension to general metrics
via tree embeddings was outlined in \S\ref{sec:introduction}.)
Define the {\em level} of a node as its combinatorial height, with the leaves at level $0$, and the root at level $H$.  For a
non-root node $v$, the length of the edge $(v, p(v))$ going to its
parent $p(v)$ is $c_v := \lambda^{\text{level}(v)}$.
So leaf edges have length $1$, and edges between the root and its
children have length $\lambda^{H-1}$. We assume that $\lambda \geq
10H$. 
For a vertex $v$, let $\chi_{v}$ be its children, $T_v$ be the subtree
rooted at $v$, and $L_v$ be the leaves in this subtree. Let $n_v := |T_v|$.
For a subset $A$ of  nodes of  a tree $T$, let $T^A$ denote the
minimal subtree of $T$ containing the root node and set $A$, i.e., the subtree 
consisting of all nodes in $A$ and their ancestors. 

\paragraph{Request Times and Timesteps.}
Let the request sequence be $\fR := r_1, r_2, \ldots$. For \kser, each request $r_i \in \fR$ is a tuple $(\ell_{q_i}, q_i)$
for some leaf $\ell_{q_i}$ and  
distinct {\em request time} $q_i \in \ZZ_+$, such that $q_{i-1} < q_i$
for all $i$.
In \ksertw
each request $r_i$ is a tuple $(\ell_{e_i}, I_i=[b_i, e_i])$ 
for a leaf $\ell_i$ and (request) interval $I_i=[b_i,e_i]$ with
\emph{arrival/start time} $b_i$ and \emph{end time} $e_i$. The algorithm sees this request $r_i$ at time $b_i$; again
$b_{i-1} < b_i$ for all $i$.
A solution must ensure that a server visit $\ell_i$ during interval
$I_i$.
The set of all starting and ending times of intervals are called {\em
  request times}; we assume these are distinct
integers.
\footnote{\kser (without
  time-windows) can be modeled by time-intervals of length $1$,
  where each $e_i = b_i+1$.} 

Between any two \emph{request times} $q$ and $q+1$, we define a large
collection of \emph{timesteps} (denoted by $\tau$ or $t$)---these
timesteps take on values $\{q + i\eta\}$ for some small value
$\eta \in (0,1)$. (Each request arrival time is also a timestep). We
use $\fT$ to denote the set of timesteps.  Our fractional
algorithm moves a small amount of server to the request location 
$r_q$ at some of the timesteps $t \in [q, q+1)$.
Given a timestep $\tau$, let $\floor{\tau}$ refer to 
the request time $q$ such that $\tau \in [q,q+1)$. 

\subsection{The Covering LP Relaxation}

We first give a covering LP relaxation for \kser, and then generalize
it to \ksertw. 
Consider an instance of \kser specified by an HST and a request sequence $r_1, r_2, \ldots$. 
Our LP relaxation $\fM$
has variables
$x(v,t)$ for every non-root node $v$ and timestep $t$, where 
$x(v,t)$ indicates the amount of server traversing the edge from $v$ to its parent
$p(v)$ at timestep $t$. %
The objective function
is
\begin{gather}
  \sum_{v \neq \rt} \sum_t c_v \; x(v,t).  \label{eq:obj-fn}
\end{gather}
There are exponentially many constraints. Let $A$ be a subset
of leaves. Let $\btau := \{\tau_u\}_{u \in T^A}$ be a set of timesteps for each node in $T^A$, i.e., nodes in $A$ and their ancestors.\footnote{We use boldface $\btau$ to denote a vector of timesteps, and $\tau_u$ to be the value of this vector for a vertex $u$.} These timesteps
must satisfy two conditions: (i) each (leaf) $\ell \in A$ has a
request at time $\floor{\tau_\ell}$, and (ii) for each internal node
$u \in T^A$, $\tau_u = \max_{\ell \in A \cap T_u} \tau_\ell$;
i.e., $\tau_u$ is the latest timestep assigned to a leaf in $u$'s
subtree by $\btau$.  For the tuple $(A, \btau)$, the LP relaxation contains the
constraint $\varphi_{A,\btau}$:
\begin{align}
\label{eq:basic-cons}
  \sum_{v \in T^A, v \neq \rt} x(v, (\tau_v, \tau_{p(v)}]) \geq |A|-k.
\end{align}
Define $x(v,I) := \sum_{t \in I} x(v,t)$ for any interval $I$. We now prove
validity of these constraints. (In \S\ref{sec:min-cost-relation} we show  these
constraints are implied by the usual min-cost flow formulation for
\kser, giving another proof of validity.)

\begin{figure}
    \centering
    \includegraphics[width=2.5in]{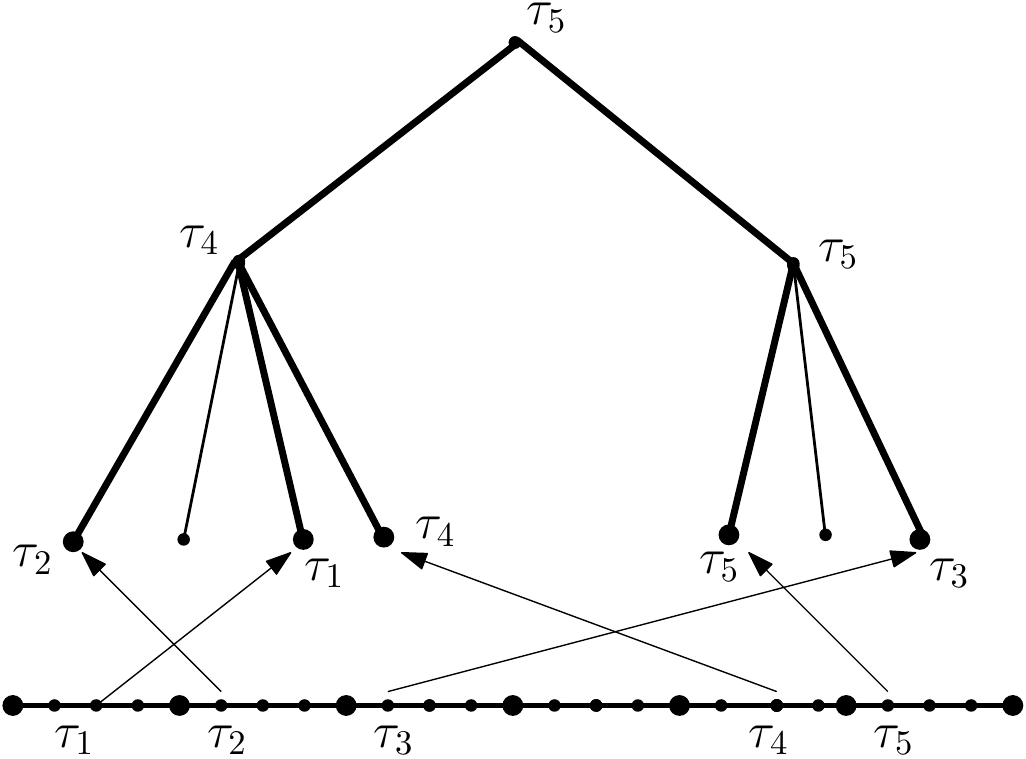}
    \caption{\small{Example of a  tuple $(A, \pmb{\tau})$: the set $A$ is given by the leaves in bold, the tree $T^A$ by the bold edges, and $\pmb{\tau}$ is shown against each vertex in $T^A$. Bold circles on the timeline denote request arrival times, the dots denote timesteps. Each shown timestep has the corresponding request arriving at the shown leaf at the arrival time (bold dot) preceding it.}}
    \label{fig:lp}
\end{figure}

\begin{claim}
  \label{cl:lp}
  The linear program $\fM$ is a valid relaxation for the \kser problem.
\end{claim}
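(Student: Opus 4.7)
The plan is to produce, from any feasible integer $k$-server schedule $\pi$, a fractional $x$ for $\fM$ whose objective equals the movement cost of $\pi$ and which satisfies every constraint $\varphi_{A,\btau}$; applying this to the optimal $\pi$ then gives $\text{LP}(\fM)\le\opt$.

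I would first normalize $\pi$ to be \emph{lazy} without changing the server-to-request assignment: each server delays every move until the timestep of the next request it will serve. This retiming leaves each server's trajectory (and hence the total movement cost) unchanged, because intermediate requests at the server's current location are still served during its extended stay. Since all request times are distinct integers and every $\tau_\ell$ lies in $[\floor{\tau_\ell},\floor{\tau_\ell}+1)$, laziness guarantees that the server assigned to the request at $\floor{\tau_\ell}$ sits at $\ell$ throughout $[\floor{\tau_\ell},\floor{\tau_\ell}+1)$ --- in particular at the timestep $\tau_\ell$. Now set $x(v,t)$ to be the number of crossings (in either direction) of the edge $(v,p(v))$ at timestep $t$ in this lazy schedule, so that $\sum_{v,t} c_v\,x(v,t)$ equals the movement cost of $\pi$.

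To verify a fixed constraint $\varphi_{A,\btau}$, I would introduce the auxiliary quantity $y_t(v):=|\{\text{servers in }T_v\text{ at timestep }t\}|$. The net outflow from $T_v$ across any time interval is dominated by the total crossings of $(v,p(v))$ in that interval, so $y_{\tau_v}(v)-y_{\tau_{p(v)}}(v)\le x(v,(\tau_v,\tau_{p(v)}])$. Summing over $v\in T^A\setminus\{\rt\}$ produces exactly the LHS of $\varphi_{A,\btau}$ on the right. On the left, I regroup the second telescoping endpoint by parent $u$ using $y_{\tau_u}(u)=y_{\tau_u}(C(u))+y_{\tau_u}(N(u))$, where $C(u)$ and $N(u)$ are the children of $u$ that do and do not lie in $T^A$. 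A one-line rearrangement then yields
\begin{align*}
\sum_{v\in T^A\setminus\{\rt\}}\!\!\bigl(y_{\tau_v}(v)-y_{\tau_{p(v)}}(v)\bigr)
\;=\; \sum_{\ell\in A} y_{\tau_\ell}(\ell)\;-\;y_{\tau_\rt}(\rt)\;+\!\!\sum_{u\text{ internal in }T^A}\! y_{\tau_u}(N(u))\;\ge\;|A|-k,
\end{align*}
using that the leaves of $T^A$ are exactly $A$; the three bounds are $y_{\tau_\ell}(\ell)\ge 1$ (laziness), $y_{\tau_\rt}(\rt)=k$ (the entire metric sits in $T_\rt$), and $y_{\tau_u}(N(u))\ge 0$.

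The hard part is the laziness normalization. Without it, the server serving some $\ell\in A$ may already have departed $\ell$ by $\tau_\ell$, forcing $y_{\tau_\ell}(\ell)=0$, in which case the same telescoping collapses to the useless bound $-k$. Arguing that any feasible schedule admits an equal-cost lazy re-timing---valid precisely because request times are unit-spaced, so the just-in-time arrival at the next requested location cannot conflict with intervening requests at the server's current location---is the only real work; once laziness is in place, the rest is a routine telescoping and regrouping on $T^A$.
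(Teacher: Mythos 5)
Your proof is correct, but it takes a different route from the paper's own argument for \Cref{cl:lp}. The paper partitions $A$ into the sets $A_1,\ldots,A_k$ served by the individual servers, defines $x_i$ as indicators of \emph{upward} crossings, and proves the per-server bound $\sum_{v} x_i(v,(\tau_v,\tau_{p(v)}])\geq |A_i|-1$ by a component argument: if fewer edges are crossed in their designated intervals, deleting them leaves two leaves of $A_i$ in one component, and a path-timing argument along their lca gives a contradiction. You instead avoid any per-server decomposition and run a global conservation argument: bounding the net outflow of $T_v$ over $(\tau_v,\tau_{p(v)}]$ by the crossings of $(v,p(v))$, then telescoping/regrouping the counts $y_{\tau_v}(v)$ over $T^A$ to get $\sum_{\ell\in A}y_{\tau_\ell}(\ell)-y_{\tau_\rt}(\rt)\geq |A|-k$. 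This is essentially the cut/supply argument that the paper gives in \S\ref{sec:min-cost-relation} via the time-expanded min-cost flow formulation (your regrouped identity is the supply computation for the set $S_{A,\btau}$ there), so your route is the "second proof of validity" the paper alludes to; it is arguably cleaner for this claim, while the paper's per-server path argument is the one that generalizes to the time-windows relaxation in \Cref{cl:lp-tw}. Both proofs hinge on the same normalization --- a server must actually sit at $\ell$ at the chosen timestep $\tau_\ell\in[q,q+1)$ --- and your lazy retiming delivers exactly that; note it matches what the appendix assumes ("the server stays at this leaf till time $q+1$"), whereas the main text's "eagerness" wording (moving at $q+\eta$) would not by itself justify presence at $\tau_\ell>q$, so your handling of this step is, if anything, the more careful one. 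Two cosmetic points: counting crossings in both directions only means your LP objective equals (rather than lower-bounds) the movement cost, which is fine; and "trajectory unchanged" should be read as "cost not increased," since a wandering original trajectory is only shortened by the retiming.
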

\begin{proof}
  Consider a solution to the $k$-server instance that ensures that for a
  request at a leaf $\ell$ at time $q$, there is a server at $\ell$ at
  time $q$. We assume that this solution has the {\em eagerness}
  property---if leaves $\ell$ and $\ell'$, requested at times $q$ and
  $q'$ respectively, are two consecutive locations visited by a
  server, %
  the server moves from $q$ to $q'$ at timestep $q+\eta$ (which is less than  $q'$). 
  
  Now for a constraint of the form~(\ref{eq:basic-cons}),
  let $A_1, A_2, \ldots, A_k$ be the subsets of $A$ that are served by
  the different servers (some of these sets may be empty). Define
  $x_i(v,t) \gets 1$ if server $i$ crosses the edge $(v,p(v))$
  \emph{from $v$ to $p(v)$ (i.e., upwards)} at timestep $t$, and $0$ otherwise. We show that
  \[ \sum_{v \in T(A_i), v \neq \rt} x_i(v, (\tau_v,
    \tau_{p(v)}]) \geq |A_i|-1. \]
  Defining
  $x(v,t) := \sum_i x_i(v,t)$ by summing over all $i$ gives~(\ref{eq:basic-cons}).
For any server $i$ and set $A_i$, define $E'$ to be the edges
  $(v,p(v))$ for which $x_i(v, (\tau_v, \tau_{p(v)}]) = 1$. If
  $|E'| < |A_i|-1$, then deleting the edges in $E'$ from the tree %
  leaves a connected component $C$ with at least two vertices from $A_i$.
  Server $i$ serves at least two leaf
  vertices $C \cap A_i$, say $v,w$, %
  requested at times $q_v = \floor{\tau_v}, q_w=\floor{\tau_w}$
  respectively. Say $q_v < q_w$, and let $u$ be the least common
  ancestor of $v,w$. Notice that $\tau_u \geq \tau_v$, and if
  the path from $v$ to $u$ is labeled $v_0=v, v_1, \ldots, v_h = u$,
  then the intervals $(\tau_{v_i}, \tau_{v_{i+1}}]$ partition
  $(\tau_v, \tau_u]$. Since the server is at $v$ at timestep $\tau_v$
  (by the construction above) and
  is at $w$ at time $q_w \leq \tau_w$, there must be an edge $(v_i, v_{i+1})$
  such that it crosses this edge upwards during
  $(\tau_{v_i}, \tau_{v_{i+1}}]$. Then this edge should be in $E'$, a
  contradiction. 
\end{proof}


\begin{remark}
  \label{rem:few-timesteps}
We could have replaced the constraint
  \eqref{eq:basic-cons} by its simpler version 
  involving $x_i(v, (q_v, q_{p(v)}])$, where $q_v := \floor{\tau_v}$:
  that would be valid and sufficient. However,  since our algorithm works at the level of
  timesteps, it is convenient to use~\eqref{eq:basic-cons}.
\end{remark}

\paragraph{Extension to Time-Windows.} We now extend these ideas to
\ksertw. In constraint~\eqref{eq:basic-cons} for a pair $(A,\btau)$,
the timesteps for ancestors of (a leaf in) $A$ could be inferred from
the values assigned by $\tau$ to $A$. We now generalize this by (i)
allowing $A$ to contain non-leaf nodes, as long as they are
independent (in terms of the ancestor-descendant relationship), and
(ii) the timestep assigned to an internal node is {\em at least} that
of each of its descendants in $A$. Formally, consider a tuple 
$(A, f, \btau)$, where $A$ is a subset of tree nodes such that no two of them
have an ancestor-descendant relationship, the function $f: A \to  \fR$
maps each node $v \in A$ to a 
request $(\ell_v, [b_v, e_v])$ given by a leaf $\ell_v \in T_v$ and an
interval $[b_v, e_v]$ at $\ell_v$, and the assignment $\btau$ maps
each node $u \in T^A$ to a timestep 
$\tau_u$ satisfying the following two (monotonicity)
properties:
\begin{OneLiners}
\item[(a)] For each node $v \in T^A$, $\tau_v \geq \max_{u \in A \cap T_v} e_u.$
\item[(b)] If $v_1, v_2$ are two  nodes in $T^A$ with $v_1$ being the ancestor of $v_2$, then $\tau_{v_1} \geq \tau_{v_2}.$
\end{OneLiners}
Given such a tuple
$(A, f,\btau)$, 
we define
the constraint $\varphi_{A,f,\btau}$ \footnote{The condition $v \neq
  \rt$ in the first summation is invoked only when $A = \{\rt\}$, in which case the LHS is empty.} 
\begin{align}
  \label{eq:basic-consnew}
  \sum_{v \in A, v \neq \rt} x(v, (b_v, \tau_{p(v)}]) + \sum_{v \in T^A \setminus A, v \neq \rt} x(v, (\tau_v, \tau_{p(v)}]) \geq |A|-k.
\end{align}
Note the differences with constraint~\eqref{eq:basic-cons}: the
LHS for a node $v \in A$ has a longer interval starting from $b_v$ instead of from $\tau_v$. %
Also,~\eqref{eq:basic-consnew} does not use the timesteps
$\{\tau_v\}_{v \in A}$: these will be useful later in defining the
truncated constraints. In the special case of \kser where $e_v = b_v
+1$, the above constraint is similar to~\eqref{eq:basic-cons}, though
the terms for nodes in $A$ differ slightly.  
The objective function is the same as~(\ref{eq:obj-fn}).
We denote this LP by $\fMtw$. 

\begin{figure}
    \centering
    \includegraphics[width=3.5in]{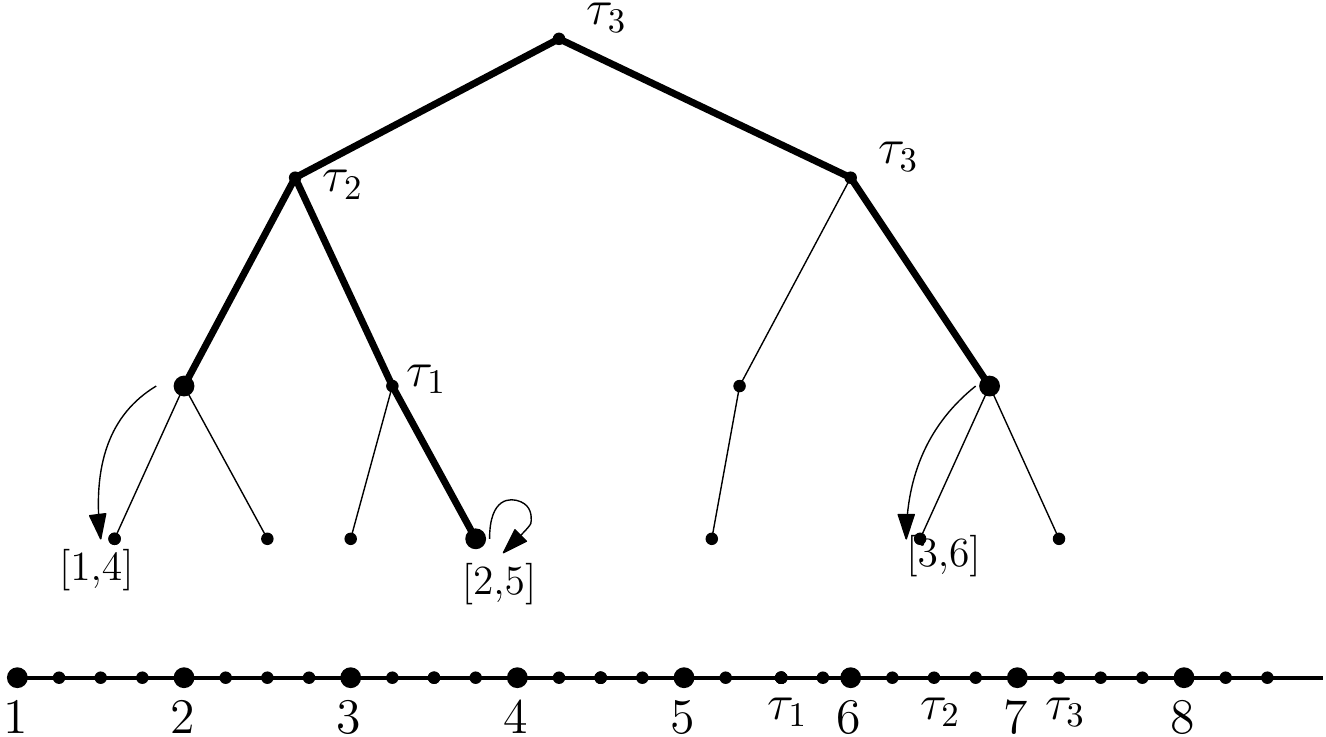}
    \caption{\small{Example of a tuple $(A,f, \pmb{\tau})$ for \ksertw: the set $A$ is given by the bold nodes, the tree $T^A$ by the bold edges, and $\pmb{\tau}$ is shown against each internal vertex in $T^A$. Arrows indicate the mapping $f$ to a leaf request interval.}}
    \label{fig:lpTW}
\end{figure}

\begin{restatable}{claim}{twrelax}
  \label{cl:lp-tw}
  The linear program $\fMtw$ is a valid relaxation for \ksertw. 
\end{restatable}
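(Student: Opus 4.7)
The plan is to mirror the proof of \Cref{cl:lp}, adapting it to handle internal nodes in $A$ and the longer intervals $(b_v,\tau_{p(v)}]$ that appear for $v\in A$ in~\eqref{eq:basic-consnew}. Given any \ksertw solution with $k$ servers, each request $v\in A$ is served by some server $\sigma_v$ at a time $s_v\in [b_v,e_v]$; partition $A=A_1\cup\cdots\cup A_k$ by server, and set $x_i(v,t)=1$ if server $i$ traverses $(v,p(v))$ upward at timestep $t$, with $x(v,t)=\sum_i x_i(v,t)$. Summing across servers, it suffices to show that for each server $i$ the left-hand side of~\eqref{eq:basic-consnew} (with $x$ replaced by $x_i$) is at least $|A_i|-1$.

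For each $i$, let $E_i'$ be the set of edges $(v,p(v))$ with $v\neq\rt$ such that server $i$ crosses $(v,p(v))$ upward at some time in the corresponding interval ($(b_v,\tau_{p(v)}]$ if $v\in A_i$, else $(\tau_v,\tau_{p(v)}]$). Since $A$ is an antichain, the leaves of $T^{A_i}$ are exactly $A_i$, so if $|E_i'|<|A_i|-1$, removing $E_i'$ from $T^{A_i}$ leaves a component containing two distinct nodes $v,w\in A_i$; WLOG $s_v<s_w$. Let $u=\lca(v,w)$ and label the path from $v$ to $u$ as $v=v_0,v_1,\ldots,v_h=u$; note $h\geq 1$ and, by the antichain property, $v_j\notin A$ for $j\geq 1$. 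Because $\ell_v\in T_{v_j}$ for all $j$ while $\ell_w\notin T_{v_j}$ for $j<h$ (as $w$ lies in a different subtree of $u$), server $i$ must cross each edge $(v_j,v_{j+1})$ upward at least once during $(s_v,s_w]$. Let $t_j$ denote the last such crossing time; a short exchange argument yields $t_0<t_1<\cdots<t_{h-1}$, for if $t_j>t_{j'}$ with $j<j'$, then right after $t_j$ the server lies in $T_{v_{j+1}}\subseteq T_{v_{j'}}$, contradicting that $t_{j'}$ was the last upward exit of $T_{v_{j'}}$ (the server cannot re-enter $T_{v_{j'}}$ afterward because it must reach $\ell_w\notin T_{v_{j'}}$ by time $s_w$).

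Set $I_0:=(b_v,\tau_{v_1}]$ and $I_j:=(\tau_{v_j},\tau_{v_{j+1}}]$ for $j\geq 1$. By monotonicity condition~(b) these intervals are disjoint with union $(b_v,\tau_u]$; condition~(a) at $u$ gives $\tau_u\geq e_w\geq s_w$, and combined with $s_v\geq b_v$ this places every $t_j$ inside $(b_v,\tau_u]$, so $t_j\in I_{\pi(j)}$ for some index $\pi(j)\in\{0,\ldots,h-1\}$. Because both $(t_j)$ and $(I_j)$ are ordered in time, $\pi$ is non-decreasing. If no edge on the $v$-to-$u$ path lies in $E_i'$, then $t_j\notin I_j$ and hence $\pi(j)\neq j$ for every $j$; an easy induction from $\pi(0)\geq 1$ then forces $\pi(j)\geq j+1$, contradicting $\pi(h-1)\leq h-1$. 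Therefore some edge on the path lies in $E_i'$, contradicting the assumed connectivity of $v$ and $w$ after removal of $E_i'$. The main subtlety beyond \Cref{cl:lp} is accommodating the longer $A$-endpoint interval $(b_v,\tau_{v_1}]$ and the fact that the service time $s_v$ can be anywhere in $[b_v,e_v]$; this is precisely what monotonicity conditions~(a) and~(b) on $\btau$ are designed to handle, ensuring the intervals $\{I_j\}$ exactly partition $(b_v,\tau_u]$ and that all $t_j$'s lie in this union.
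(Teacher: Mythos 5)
Your proof is correct and follows essentially the same approach as the paper's: partition $A$ by server, define the crossed-edge set $E_i'$, and derive a contradiction from two nodes of $A_i$ lying in the same component of $T^{A_i}$ after deleting $E_i'$. The only differences are that you make the final pigeonhole step explicit via the increasing last-crossing times $t_j$ and the nondecreasing index map $\pi$ (which the paper leaves terse), and you avoid the paper's WLOG assumption that servers move only when a request becomes critical by working directly with the service times $s_v \in [b_v,e_v]$.
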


\begin{proof}
Consider a solution to the instance that
    ensures a server moves only when a request becomes critical
    (although at this time, it can serve several outstanding
    requests).  Now for a constraint of the
    form~(\ref{eq:basic-consnew}), let $L$ denote the set of leaves
    corresponding to the nodes in $A$.  let $L_1, L_2, \ldots, L_k$ be
    the subsets of $L$ that are served by the different servers (some
    of these sets may be empty), and let $A_i$ be the subset of $A$
    corresponding to $L_i$. Define $x_i(v,t) \gets 1$ if server $i$
    crosses the edge $(v,p(v))$ at time $t$, and
    $x(v,t) := \sum_i x_i(v,t)$. We show that
    $$
     \sum_{v \in A, v \neq \rt} x_i(v, (b_v, \tau_{p(v)}]) + \sum_{v \in T^A \setminus A, v \neq \rt} x_i(v, (\tau_v, \tau_{p(v)}]) \geq |A_i|-1;$$
recall that $b_v, v \in A,$ is the starting time of the request interval given by $f(v)$. 
  Summing the above inequality over all $i$ gives~(\ref{eq:basic-consnew}). 
  
  For sake of brevity, let $I_v$ denote the interval $(b_v, \tau_{p(v)}]$ or $(\tau_v, \tau_{p(v)}]$ depending on whether $v \in A$. 
  Define $E'$ as the set of edges
  $(v,p(v))$ for which $x_i(v,I_v) \geq 1$. We need to show that $|E| \geq |A_i| - 1$. Suppose not.  
  Then deleting the edges in $E'$ from the tree $T$
  leaves a connected component with at least two vertices from $A_i$.
  
  Call this component $C$, and let $u, v$ be two distinct vertices in $A_i \cap C$.  
  Let $f(u)$ and $f(v)$ be $(\ell_u, R_u=(b_u,e_u]), (\ell_v, R_v=(b_v,e_v])$ respectively. 
  Let $w$ be the lca of $\ell_{u}$ and $\ell_{v}$. Note that $w$ is also the lca of $u$ and $v$.
  Suppose server $i$ satisfies $R_{u}$ before satisfying $R_{v}$. We claim that the server $i$ reaches $w$ at some time during $(b_{u}, \tau_{w}]$. To see this, we consider two cases: 
  \begin{itemize}
      \item Server $i$ visits $u$ at time $e_u$ when the request $R_{u}$ becomes critical: Since it reaches $v$ by time $e_v$, 
       it must have visited $w$ during $(e_{u}, e_{v}] \subseteq (b_u, \tau_{w}] $. 
      \item Server $i$ visits $u$ before $R_u$ becomes critical. In this case, it would have visited $u$ strictly after $b_{u}$ ((because all start and end times of requests are distinct). Since it reaches $w$ at or before $e_{v} \leq \tau_{w}$, the desired statement holds in this case as well. 
  \end{itemize}
  
  Let the sequence of nodes in $B$ from $u$ to $w$ be
  $v_0 = u, v_1, \ldots, v_h = w.$ Note that all the edges $(v_i, p(v_i)), i < h, $ lie below $w$ and so are not in $E'$. Observe that the intervals $I_{v_i}, i=0, \ldots, h-1$,  
  partition $(b_u, \tau_{w}]$. As outlined in the two cases above, the server $i$ leaves $u$ strictly after $b_u$ and reaches $w$ by time $\tau_{w}$. Therefore, there must be an edge $(v_i, p(v_{i})), i < h$, such that it crosses this edge during $I_{v_i}$. Then this edge should be in $E'$, a contradiction.
\end{proof}

\section{The Local LPs: Truncation and Composition}
\label{sec:trunclp}
We maintain a collection of \emph{local} LPs $\fL^v$, one for each
internal vertex $v$ of the tree. While the constraints of local LPs
for the non-root nodes are not necessarily valid for the original \kser instance,
those in the local LP $\fL^\rootvtx$ 
are implied by constraints of $\fM$ or $\fMtw$. This
gives us a handle on the optimal cost. The constraints in the local
LP at a node are related to those in its children's local LPs, 
allowing us to relate their primal/dual solutions, and %
their costs. %

To define the local LPs, we need some notation. 
Our (fractional) algorithm $\cA$ moves server mass around over
timesteps. In the local LPs, we define constraints based on the state
of our algorithm $\cA$. Let $k_{v,t}$ be the server mass that $\cA$ has in $v$'s
subtree $T_v$ at timestep $t$ (when $v$ is a leaf, this is the amount of server mass at $v$ at timestep $t$). We choose three non-negative parameters $\delta, \delta', \gamma$.
The first two help define lower and upper bounds on the amount of
(fractional) servers at any leaf, and $\gamma$ denotes the granularity at which movement of server mass happens. We ensure $\delta'
\gg \delta \gg \gamma$, and set $\delta' = \frac{1}{n^2}, \delta = \frac{1}{10n^3}, \gamma = \frac{1}{n^4}$.

\begin{defn}[Active and Saturated Leaves]
  Given an algorithm $\cA$, a leaf $\ell$ is \emph{active} if it has at
  least $\delta$ amount of server (and \emph{inactive} otherwise). The
  leaf is \emph{saturated} if $\ell$ has more than $1-\delta'$
  amount of server (and \emph{unsaturated} otherwise).
\end{defn}

The server mass at each location
should ideally lie in the interval $[\delta, 1-\delta']$, but since we move servers in
discrete steps,
we maintain the following (slightly weaker) invariant:

\begin{leftbar}
  \begin{invariant}
    \label{invar:active}
    The server mass at each leaf lies in the interval
    $[\nicefrac\delta2, 1-\nicefrac{\delta'}2].$
  \end{invariant}
\end{leftbar}

Constraints of $\fL^v$ are defined using \emph{truncations} of
the constraints $\varphi_{A,\btau}$. For a node $v$ and subset of
nodes $A$ in $T$, let the subtree $T_v^A$ be the minimal subtree of
$T_v$ containing $v$ and all the nodes in $A \cap T_v$. 
\begin{defn}[Truncated Constraints]
Consider a node $v$, a subset $A$ of leaves in $T$ and a set $\btau := \{\tau_u\}_{u \in T^A_v}$ of timesteps satisfying the conditions: (i) each (leaf) $\ell \in A$ has a request at time $\floor{\tau_\ell}$, and (ii) for each internal node $u \in T^A_v, \tau_u = \max_{\ell \in A} \cap T_v$. The {\em truncated constraint}  $\varphi_{A,\btau,v}$ is defined as: 
\begin{align}
\label{eq:implied1}
 \sum_{u: u \neq v, u \in T^A_v} \yv(u, (\tau_u,
    \tau_{p(u)}]) \geq |A \cap T_v| - k_{v, \tau_v} -2\delta(n-n_v) ;
\end{align}
recall that $k_{v, \tau_v}$ is the amount of server mass  in $T_v$ \emph{at the
end of} timestep $\tau_v$. 
We say that the truncated constraint $\varphi_{A,\btau,v}$ {\em ends at} $\tau_v$.
\end{defn}
The truncated constraint $\varphi_{A,\btau,v}$ can be thought of as truncating an actual LP constraint of the form~\eqref{eq:basic-cons} for the nodes in $T_v^A$ only. 
One subtle  difference is the last term that weakens the constraint slightly; 
we will see in \Cref{cor:comp} that this weakening is crucial.
The truncated constraint $\varphi_{A,f,\btau,v}$ in case of $\ksertw$ is defined analogously: 
given a node $v$, a tuple $(A, f, \btau)$ satisfying the conditions stated above~\eqref{eq:basic-consnew} with the restriction that $A$ lies in $T_v$ and $\btau$ is defined for nodes in $T_v^A$ only, 
the truncated constraint $\varphi_{A,f,\btau,v}$ (ending at $\tau_v$) is defined as (see~\Cref{def:twtrunc} for a formal definition): 
\begin{align}
 \label{eq:implied1new}
  \sum_{u \in A \cap T_v, u \neq v} \yv(u, (b_u, \tau_{p(u)}]) + \sum_{u \in T^A_v \setminus A, u \neq v} \yv(u, (\tau_u, \tau_{p(u)}]) \geq |A \cap T_v| - k_{v, \tau_v} -2\delta(n-n_v)
\end{align}

A few remarks about the truncation: first, this truncated
constraint uses \emph{local variables} $\yv$ that are ``private'' for
the node $v$ instead of the global variables $x$. In fact, we can
think of $x$ as denoting variables $\y{\rootvtx}$  local
to the root, and therefore $\varphi_{A,\btau,\rootvtx} = \varphi_{A,\btau}$
(or $\varphi_{A,f,\btau,\rootvtx} = \varphi_{A,f,\btau}$). Second, a
truncated constraint is \emph{not necessarily implied} by the LP
relaxation $\fM$ (or $\fMtw$) even when we replace $\yv$ by
$x$, since a generic algorithm is not constrained to 
maintain $k_{v,\tau_v}$ servers in subtree $T_v$ after timestep $\tau_v$.
But, at the root (i.e., when $v = \rootvtx$), we always have
$k_{v, \tau_v} = k$  and the last term is $0$, so replacing 
$y^{\rootvtx}$ by $x$ in its constraints
gives us constraints  of the form~\eqref{eq:basic-cons} 
from the actual LP.

\begin{defn}[$\bot$-constraints]
  \label{def:bot}
  A truncated constraint where
  $|A|=1$ is called a \emph{$\bot$-constraint}.
\end{defn}
Such $\bot$-constraints play a special role when a subtree has only one
active leaf, namely the requested leaf. 
In the case of \kser, if $|A|= 1$ then the
constraint \eqref{eq:implied1} has no terms on the LHS but a positive
RHS,
so it can never be satisfied. Nevertheless,
such constraints will be useful when forming new constraints by composition.

\paragraph{Composing Truncated Constraints.} 
The next concept is that of {\em constraint composition}:  a truncated constraint $\varphi_{A,\btau,v}$ can be obtained from the corresponding truncated constraints for the children of $v$. 
Consider a subset $X$ of $v$'s
children. For $u \in X$, let $C_u := \varphi_{A(u), \btau(u) ,u }$ be a
constraint in $\cons^u$ ending at $\tau_u := \tau(u)_u$, given by some linear inequality
$\langle a^{C_u}, \y{u} \rangle \geq b^{C_u}$. Then defining $A :=
\cup_{u \in X} A(u)$ and $\tau : T^A \to \fT$ obtained
by extending maps
$\btau(u)$ %
and setting $\tau_v = \max_{u \in X} \tau_u$,  the constraint
$\varphi_{A,\btau,v}$ is written as:
\footnote{The vector $a^{C_u}$ has
  one coordinate for every node in $T^A_u$, whereas $y^v$ has one
  coordinate for each node in $T^A_v \supseteq T^A_u$. We
   define the inner product $\langle a^{C_u}, \yv \rangle$ by adding
  extra coordinates (set to $0$) in the vector $a^{C_u}$. }
\begin{align}
  \label{eq:compose1}
  \sum_{u \in X} \bigg(\yv(u, (\tau_u, \tau_v]) + \langle a^{C_u}, \yv \rangle
  \bigg) \geq \sum_{u \in X} b^{C_u} - \Big( k_{v,\tau_v} - \sum_{u \in X}
  k_{u,\tau_u} \Big) + 2 \delta \Big(n_v - \sum_{u \in X} n_u \Big).
\end{align}
The constraints $\varphi_{A(u),\btau(u),u}$ used their local
variables $\yu$, whereas this new constraint uses $\yv$. Every
constraint in $\cons^v$ can be obtained this way, and so the constraints of
$\fL^{\rootvtx}$ (which are implied by $\fM$) can be obtained by
recursively composing truncated constraints
for its children's local LPs. In case of \ksertw, the composition operation holds for the
constraints $\varphi_{A,f,\btau,v}$: a minor change is that the terms in LHS involving a vertex $u \in A$ have $\yv(u,
(b_u, \tau_v])$, where $b_u$ is the starting time of the request
corresponding to $f(u)$. (We see the details later in~\eqref{eq:consnew}.)

\subsection{Constraints in Terms of Local Changes}

The local constraints~(\ref{eq:implied1}) and the composition rule
(\ref{eq:compose1}) are written in terms of $k_{u, \tau_u}$, the amount
of server that our algorithm $\cA$ places at various locations and
times. It will be more convenient to rewrite them in terms of
server movements in $\cA$.

\begin{defn}[$g,r,\Diff$]
  For a vertex $v$ and timestep $t$, let the \emph{give} $g(v,t)$ and
  the \emph{receive} $r(v,t)$ denote the total
  (fractional) server movement \emph{out of}  and \emph{into} the subtree $T_v$ on the edge
  $(v,p(v))$ at timestep $t$. 
  For interval
  $I$, let $g(v,I) := \sum_{t \in I} g(v,t)$ and define $r(v,I)$
  similarly, and define the ``difference'' $\Diff(v,I) := g(v,I) - r(v,I)$.
\end{defn}

Restating the composition rule in terms of the quantities $\Diff$ defined above shows the utility of the extra term on the RHS of the truncated constraint.
\begin{restatable}{lemma}{compositionrule}
\label{cor:comp}
Consider a vertex $v$,  a timestep $\tau$ and a subset $X$ of children of $v$ such that at timestep $\tau$ all active leaves in $T_v$ are descendants of the nodes in $X$. For each $u \in X$, consider a truncated constraint $C_u := \varphi_{A(u), \btau(u) ,u }$  given by some linear inequality
$\langle a^{C_u}, \y{u} \rangle \geq b^{C_u}$.
Define $(A, \btau)$ as in~\eqref{eq:compose1} with $\tau := \tau_v$, and assume \Cref{invar:active} holds. 
Then the truncated constraint $\varphi_{A,\btau,v}$ from~\eqref{eq:compose1}
implies the inequality~\footnote{When $y \geq 0$, a constraint $\langle a, y \rangle
  \geq b$ is said to \emph{imply} a constraint  $\langle a', y
  \rangle \geq b'$ if $a \leq a'$ and $b \geq b'.$}:
  \begin{gather}
    \label{eq:compose2}
    \sum_{u \in X} \bigg(\yv(u, (\tau_u, \tau_v]) + \langle a^{C_u}, \yv \rangle
    \bigg) \geq \sum_{u \in X} \bigg( \Diff(u,
    (\tau_u,\tau_v]) + b^{C_u} \bigg) + 
    \big(n_v - \sum_{u \in X} n_u\big) 
    \delta,
  \end{gather}
  We call this the {\em composition rule}.
  An analogous statement   holds for a  tuple $(A, f, \btau)$ for a
  vertex $v$ in the case of \ksertw, except that $\tau_u$ is
  replaced by $b_u$ for every vertex $u \in A$ on the LHS (see~\eqref{eq:consnew}).
\end{restatable}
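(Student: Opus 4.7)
The plan is to derive \eqref{eq:compose2} from \eqref{eq:compose1} purely by algebraic rearrangement, using two ingredients: (i) conservation of server mass inside each subtree $T_u$, and (ii) the activity hypothesis which bounds the mass sitting outside $X$. First I would observe that the left-hand sides of \eqref{eq:compose1} and \eqref{eq:compose2} coincide, so it suffices to compare the right-hand sides. After cancelling $\sum_{u \in X} b^{C_u}$ on both sides, the task reduces to showing
$$\sum_{u \in X} k_{u,\tau_u} - k_{v,\tau_v} + \delta\Big(n_v - \sum_{u \in X} n_u\Big) \;\geq\; \sum_{u \in X} \Diff(u,(\tau_u,\tau_v]).$$

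Next, I would invoke the bookkeeping identity $k_{u,\tau_u} - \Diff(u,(\tau_u,\tau_v]) = k_{u,\tau_v}$, which is just the statement that the net mass leaving $T_u$ between timesteps $\tau_u$ and $\tau_v$ is the difference of the masses at the two endpoints. Combined with $k_{v,\tau_v} = \sum_{u' \in \chi_v} k_{u',\tau_v}$ (an internal HST node itself holds no server), the desired inequality collapses to
$$\sum_{u' \in \chi_v \setminus X} k_{u',\tau_v} \;\leq\; \delta\Big(n_v - \sum_{u \in X} n_u\Big) \;=\; \delta\Big(1 + \sum_{u' \in \chi_v \setminus X} n_{u'}\Big),$$
where the last equality uses $n_v = 1 + \sum_{u' \in \chi_v} n_{u'}$.

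The final step, which I view as the crux, is to feed in the activity hypothesis. By assumption every active leaf of $T_v$ at time $\tau_v$ descends from some node in $X$; hence for each $u' \in \chi_v \setminus X$ every leaf of $T_{u'}$ is inactive and therefore carries strictly less than $\delta$ units of server (the lower bound in \Cref{invar:active} is not needed here). Summing the leaf masses gives $k_{u',\tau_v} < \delta\,|L_{u'}| \leq \delta\,n_{u'}$, and a second sum over $u' \in \chi_v \setminus X$ yields the displayed inequality.

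The only real obstacle is conceptual rather than calculational: recognizing that the $-2\delta(n-n_v)$ weakening built into the definition of truncated constraints in \eqref{eq:implied1}/\eqref{eq:implied1new} was calibrated exactly so that, upon composition, half of the accumulated slack $\delta(n_v - \sum_{u\in X} n_u)$ absorbs the mass trapped in inactive off-$X$ subtrees, while the other half is retained on the right of \eqref{eq:compose2} for use in further compositions higher up the tree. The extension to \ksertw introduces no new idea: for any $u \in A$ the corresponding LHS term becomes $\yv(u,(b_u,\tau_v])$, and the conservation identity to substitute is $k_{u,b_u} - \Diff(u,(b_u,\tau_v]) = k_{u,\tau_v}$; with this single replacement the argument above carries over verbatim.
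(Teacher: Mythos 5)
Your argument for the \kser case is correct and is in substance the same as the paper's: both proofs combine the conservation identities $k_{u,\tau_v} = k_{u,\tau_u} - \Diff(u,(\tau_u,\tau_v])$ and $k_{v,\tau_v} = \sum_{u'\in\chi_v} k_{u',\tau_v}$ with the observation that, since all active leaves of $T_v$ lie below $X$, the mass in the off-$X$ subtrees is at most $\delta\sum_{w\notin X} n_w \le \delta\big(n_v - \sum_{u\in X} n_u\big)$, so that half of the $2\delta\big(n_v-\sum_{u\in X} n_u\big)$ slack absorbs this stray mass and the other half survives on the right of \eqref{eq:compose2}. (Your side remark that only the inactivity of the off-$X$ leaves is needed, not the lower bound in \Cref{invar:active}, is accurate.)

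The one misstep is in your closing remark about \ksertw. In the time-window analogue (see \eqref{eq:consnew}) only the \emph{left-hand side} changes: for $u\in A$ the interval is extended from $(\tau_u,\tau_v]$ to $(b_u,\tau_v]$, while the right-hand side still carries $\Diff(u,(\tau_u,\tau_v])$, and the $\bot$-constraint $C_u$ still has $b^{C_u} = 1 - k_{u,\tau_u} - 2\delta(n-n_u)$, anchored at the timestep $\tau_u$ at which it was created, not at $b_u$. Hence no new conservation identity is needed: the extra variables $\yv(u,(b_u,\tau_u])$ only enlarge the LHS, which is harmless under the footnote's notion of ``implies,'' and the derivation is literally the one you gave for \kser. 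The identity you propose, $k_{u,b_u}-\Diff(u,(b_u,\tau_v]) = k_{u,\tau_v}$, would instead put $\Diff(u,(b_u,\tau_v])$ on the right-hand side, which is not what the lemma asserts and which your substitution cannot establish, since $k_{u,b_u}$ does not appear in $b^{C_u}$; the discrepancy is exactly $\Diff(u,(b_u,\tau_u])$, whose sign is uncontrolled. With that correction, your argument covers the time-window case as well.
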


\begin{proof}
Note that
    $k_{v,\tau_v} 
    = \sum_{u \in X} k_{u,\tau_v} + \sum_{w \not \in X} k_{w,\tau_v}
    = \sum_{u \in X} k_{u, \tau_u} - \sum_{u \in X} \Diff(u, (\tau_u, \tau_v]) + \sum_{w \not \in X} k_{w,\tau_v}.$
    in~\eqref{eq:compose1} gives
\begin{align*}
  \sum_{u \in X} \bigg(\yv(u, (\tau_u, \tau_v]) + \langle a^{C_u}, \yv \rangle
  \bigg) \geq \sum_{u \in X} \Big( \Diff(u, (\tau_u, \tau_v]) + b^{C_u} \Big) 
  - \sum_{w \not \in X} k_{w,\tau_v}
   + 2 \delta \Big(n_v - \sum_{u \in X} n_u \Big).
\end{align*}
Finally, since all active leaves in $T_v$ at timestep $\tau_v$ are descendants of $X$,
\Cref{invar:active} implies that
$\sum_{w \not \in X} k_{w,\tau_v} \le \delta  \sum_{w \notin X} n_w \leq
\delta \left( n_v - \sum_{u \in X} n_u \right)$. This is where
  the weakening in~(\ref{eq:implied1}) is useful.
\end{proof}

\subsection{Timesteps and Constraint Sets}

Recall that $\fT$ is the set of all timesteps. For each vertex $v$ we
define a subset $\cR(v) \sse \fT$ of \emph{relevant} timesteps, such
that
the local LP $\fL^v$ contains a non-empty set of constraints
$\cons^v(\tau)$ for each $\tau \in \cR(v)$. \alert{Should we say what the
  variables are in this LP?}
Each constraint in
$\cons^v(\tau)$ is of the form $\varphi_{A,\btau,v}$ for a tuple
$(A,\btau)$ ending at $\tau$.
Overloading notation, let
$\cons^v := \bigcup_{\tau \in \cR(v)} \cons^v(\tau)$ denote the set of all
constraints in the local LP at $v$. %
The objective function of this local LP is
$\sum_{u\in T_v, \tau} c_u \; \yv(u,\tau)$. \alert{What does $\tau$
  sum over?}

The timesteps in $\cR(v)$ are partitioned into $\cR^s(v)$ and
$\cR^{ns}(v)$, the \emph{solitary} and \emph{non-solitary} timesteps
for $v$. The decision whether a timestep belongs to $\cR(v)$ is made
by our algorithm. and is encoded by adding $\tau$ to either $\cR^s(v)$
or $\cR^{ns}(v)$. For each timestep $\tau \in \cR^s(v)$, the algorithm
creates a constraint set $\cons^v(\tau)$ consisting of a single
$\bot$-constraint (recall \Cref{def:bot}); for each timestep
$\tau \in \cR^{ns}(v)$ it creates a constraint set $\cons^v(\tau)$
containing only
non-$\bot$-constraints obtained
by composing constraints from $\cons^w(\tau_w)$ for some
children $w$ of $v$ and timesteps $\tau_w \in \cR(w)$, where
$\tau_w \leq \tau$.

For each $\tau$,
a constraint $C \in \cons^v(\tau)$ corresponds 
to a dual variable
$z_C$, which is raised only at timestep $\tau$. We ensure the
following invariant. %

\begin{leftbar}
  \begin{invariant}
    \label{invar:dual-raised}
    At the end of each timestep $\tau \in \cR^{ns}(v)$, the objective function value of the dual variables corresponding
    to constraints in $\cons^v(\tau)$ equals $\gamma$. I.e., if a
    generic constraint $C$ is given by
    $\langle a^C \cdot \yv \rangle \geq b^C$, then
    \begin{align}
      \label{eq:dualinvariant}
      \sum_{C \in \cons^v(\tau)} b^C \cdot z_C = \gamma \qquad \forall
      \tau \in \cR^{ns}(v). \tag{I2}
    \end{align}
    Furthermore, $b^C > 0$ for all $C \in \fL^v(\tau)$ and $\tau \in \cR(v)$.
  \end{invariant}
\end{leftbar}
No dual variables $z_C$ are defined for $\bot$-constraints, and
(the first statement of) \Cref{invar:dual-raised} does not apply to timesteps $\tau \in \cR^s(v)$.
In the following sections, we show how to maintain a dual solution that is feasible for $\fD^v$ (the dual LP for $\fL^v$) when scaled
down by some factor $\beta = \poly \log (n\lambda)$.

\paragraph{Awake Timesteps.} For a vertex $v$, we maintain a subset
$\awake(v)$ of \emph{awake} timesteps. The set $\awake(v)$ has the
property that it contains all the solitary timesteps, i.e., $\cR^s(v)$,
and some 
non-solitary ones. Hence
$\cR^s(v) \sse \awake(v) \sse \cR^{s} \cup \cR^{ns}(v) = \cR(v)$.
Whenever we add a timestep to $\cR(v)$, we initially add it to
$\awake(v)$; some of the non-solitary ones subsequently get removed.
A timestep $\tau$ is awake for vertex $v$ at some moment in the
algorithm if it belongs to $\awake(v)$ at that moment. For any vertex
$v$, define
\begin{gather}
  \prev(v, \tau) := \arg\max\{\tau' \in \awake(v) \mid \tau' \leq \tau \} \label{eq:prev}
\end{gather}
Note that as the set $\awake(v)$ evolves over time, so does the identity
of $\prev(v,\tau)$. We show in~\Cref{cl:unmarkeddefined} that
$\prev$ is well-defined for all relevant $(v, \tau)$
pairs. \alert{Motivate this better?}

\paragraph{Starting configuration.} At the beginning of the algorithm,
assume that the root has $2k$ ``dummy'' leaves as children, each of
which has server mass $\nicefrac{1}{2}$ at time $q=0$. All other leaves of the tree
have mass $\delta/2$. (This ensures \Cref{invar:active} holds.) No
requests arrive at any dummy leaf $v$; moreover, we add a
$\bot$-constraint $\varphi_{A, \btau, v}$, where $A = \{v\}$ and
$\tau_v =0$. \alert{Should we say why?} Assuming this starting
configuration only changes the cost of our solution by at most an
additive term of $O(k\Delta)$, where $\Delta$ is the aspect ratio of the
metric space.


\section{Algorithm for \texorpdfstring{\kser}{k-server}}
\label{sec:algodesc}

We now describe our algorithm for \kser. At request time $q$, the
request arrives at a leaf $\rqq$. The \emph{main procedure} calls \emph{local
  update} procedures for each ancestor of $\rqq$. Each such local update
possibly moves servers to $\rqq$, and also adds constraints to the
local LPs and raises the primal/dual values to account for this
movement. We use $\RL(\tau)$ to denote the  location of request with deadline at time $\floor{\tau}$, i.e., $\ell_{\floor{\tau}}$.

\subsection{The Main Procedure}
\label{sec:main-basic}

In the main procedure of \Cref{algo:main}, let the
\emph{backbone} be the leaf-root path $\rqq = v_0, v_1, \ldots, v_H = \rt$. %
We move servers to $\rqq$
from other leaves until it is saturated: this server movement happens
in small discrete increments over several timesteps. Each
iteration of the \textbf{while} loop in line~\eqref{l:while}
corresponds to a distinct timestep $\tau$. %
Let %
$\sib(v, \tau)$ be the siblings $v'$ of
$v$ %
with active leaves in
their subtrees $T_{v'}$ (at timestep $\tau$). 
Let $i_0$ be the smallest index with non-empty $\sib(v_{i_0}, \tau)$. 
The procedure $\DUZ$ adds a $\bot$-constraint to each of the sets $\fL^{v_i}(\tau)$
for $i = 0, \ldots, i_0$. For $i > i_0$, the procedure $\DU$ adds
(non-$\bot$) constraints to $\fL^{v_i}(\tau)$. If $\sib(v_i, \tau)$ is
non-empty, it also transfers some servers from the subtrees below
$\sib(v_i, \tau)$ to $\rqq$. 
\medskip

\begin{algorithm}[H]
  \caption{Main Procedure}
  \label{algo:main}
  \ForEach{$q = 1,2, \ldots$}{ get request $r_q$; let the path from
    $r_{q}$ to the root be $\rqq = v_0, v_1, \ldots, v_H =
    \rt$.\; 
    $\tau \gets q+\eta$, the first timestep after $q$. \;
    \While{$k_{v_0, \tau} \leq 1 - \delta'$ \label{l:while} }{
      let $i_0 \gets$ smallest index such that $\sib(v_{i_0},
      \tau) \neq \emptyset$.       \label{l:eq}  \\
        \lFor{$i=0,\ldots, i_0$}{call $\DUZ(v_i, \tau)$.
        }
        \lFor{$i=i_0+1, \ldots, H$\label{l:ffor1}}{call
          $\DU(v_i,\tau)$.\label{l:call-dual} 
        }
        $\tau \gets \tau+\eta$. \tcp*[f]{move to the next
          timestep} \label{l:ffor3}
      }
    }
\end{algorithm}

\subsection{The Simple Update Procedure}
\label{sec:simple-local-basic}

This procedure adds timestep $\tau$ to both $\cR^s(v)$ and $\awake(v)$, and  
creates a $\bot$-constraint in the  LP $\fL^v$.

\begin{algorithm}
  \caption{$\DUZ(v, \tau$)}
  \label{algo:simpledual}
  let $v_0 \gets \RL(\tau)$. \;
  add timestep $\tau$ to the event set
  $\cR^s(v)$ and to $\awake(v)$. \label{l:addp} \tcp*[f]{``solitary'' 
timestep for $v$} \;
  $\cons^{v}(\tau) \gets$ the $\bot$-constraint
  $\varphi_{A,\btau,v}$, where $A = \{v_0\}$ and \label{l:addp1}
  $\tau_w = \tau$ for nodes $w$ on the $v_0$-$v$ path. 
\end{algorithm}

\subsection{The Full Update Procedure}
\label{sec:full-local-basic}

The $\DU(v,  \tau$) procedure is called for backbone nodes $v$
that are above $v_{i_0}$ (using the notation of~\Cref{algo:main}).
It has two objectives. First, it transfers servers to the requested leaf
node  $v_0$ from the subtrees of the off-backbone children of $v$,
incurring a total cost of at most $\gamma$. Second, it defines the
constraints $\cons^{v}(\tau)$ and runs a primal-dual update on these
constraints until the total dual value raised is exactly
$\gamma$. This dual increase is at least the server transfer cost, which we
use to bound the algorithm's cost. We now explain the steps of
\Cref{algo:dual} in more detail. (The notions of
slack and depleted constraints are in \Cref{def:slack}.)

\begin{algorithm}
  \caption{$\DU(v, \tau$)}
  \label{algo:dual}
  let $h \gets \level(v)-1$ and $u_0 \in \chi_{v}$ be child containing
  the current request $v_0 := \RL(\tau)$. \;
  let $U \gets \{u_0\} \cup \sib(u_0,\tau)$; say $U = \{u_0, u_1,
  \ldots, u_\ell\} $, $L_U \leftarrow $ active leaves below $U \setminus \{u_0\}$. \;
  add timestep $\tau$ to event set $\cR^{ns}(v)$ and to  $\awake(v)$. \tcp*[f]{``non-solitary''
    timestep for $v$}\;
  set timer $s \gets 0$. \;
  \Repeat{the dual objective corresponding to constraints in $\cons^v(\tau)$ becomes $\gamma$.}{
    \For{$u \in U$}{
      let $\tau_u \gets \prev(u,\tau)$ and $I_u = (\tau_u,\tau]$.
          \label{l:Iu} \;
      let $C_u$ be a slack constraint in $\cons^u(\tau_u)$. \tcp*[f]{slack constraint exists since $\prev(u,\tau)$ is awake}
      \label{l:chooseC}
    }
    let $\sigma \gets (C_{u_0}, C_{u_1}, \ldots, C_{u_\ell})$ be the resulting tuple of
    constraints. \;
    add new constraint $C(v,\sigma, \tau)$ to constraint set $\cons^v(\tau)$.\;

    \While{all constraints $C_{u_j}$ in $\sigma$ are slack
      \textbf{and} dual objective for $\cons^v(\tau)$ less than $\gamma$}{ \label{l:dual}
      increase timer $s$ at uniform rate. \;
      increase $z_{C(v,\sigma, \tau)}$ at the same rate as $s$. \label{l:draise1}\;
      for all $u \in U$, define $S_u := I_u \cap \left( \cR^{ns}(u) \cup \{\tau_u+\eta\} \right).$ \label{l:su} \;
      increase $\yv(u, t)$ for $u \in U, t \in S_u$
      according to 
      $\frac{d \yv(u,t)}{d s} = \frac{\yv(u,t)}{\lambda^{h}} +
        \frac{\gamma}{Mn \cdot \lambda^h}$.     \label{l:draise2}
      \;
      \underline{\textbf{transfer}} server mass from $ T_u$ into $v_0$ at rate $\frac{d
        \yv(u,I_u)}{d s} + \frac{
          b^{C_u}}{\lambda^h}$ using the leaves in $L_U \cap T_u$, for each $u \in U \setminus \{u_0\}$  \label{l:dtr} \;
    }
    \ForEach{constraint $C_{u_j}$ that is depleted}{
      \label{l:mark2}
      \lIf{\textit{all} the constraints in $\cons^{u_j}(\tau_{u_j})$
        are depleted}{remove $\tau_{u_j}$ from $\awake(u_j)$.     }
    }
  }
\end{algorithm}

Consider a call to $\DU(v, \tau)$ with $u_0$ being the child of $v$ on
the path to the request $v_0$ (See~\Cref{fig:du}). Each iteration of the {\bf repeat} loop
adds a constraint $C$ to $\cons^v(\tau)$ and raises the dual variable
$z_C$ corresponding to it. For each node $u$ in
$U := \{u_0\} \cup \sib(u_0, \tau)$, define $\tau_u :=\prev(u, \tau)$
to be the most recent timestep 
currently in
$\awake(u)$. This timestep $\tau_u$ may move backwards over the
iterations as nodes are removed from $\awake(u)$ in 
line~\eqref{l:mark2}. One exception is the node $u_0$: we will show that $\tau_{u_0}$ stays equal to $\tau$ for the entire run of
$\DU$. Indeed, we add $\tau$ to $\awake(u_0)$ during
$\DUZ(u_0, \tau)$ or $\DU(u_0, \tau)$ before calling $\DU(v, \tau)$,
and \Cref{cl:u0} shows that $\tau$ stays awake in $\cR(u_0)$ during
$\DU(v, \tau$).

\begin{enumerate}[leftmargin=5mm]
\item %
  We add a constraint $C(v,\sigma,\tau)$ to
  $\cons^v(\tau)$ %
  by taking one constraint $C_u \in \cons^u(\tau_u)$ for each
  $u \in U$ and setting $\sigma := (C_1, \ldots, C_{|U|})$.  (The
  choice of constraint from $\cons^u(\tau_u)$ is described in
  item~\ref{item:DU3} below.) Each %
  $C_u$ has form $\varphi_{A(u), \btau(u), u}$ ending at $\tau_u :=
  \btau(u)_u$ for some tuple
  $(A(u),\btau(u))$. The new constraint $C(v,\sigma, \tau)$ is the composition
  $\varphi_{A,\btau,v}$ as in (\ref{eq:compose1}),
  where $I_u := (\tau_u, \tau]$.
  Since $U$ contains all the children of $v$ whose subtrees contain
  active leaves at $\tau$, the set $A = \cup_u A(u)$ and the $\btau$ obtained by
  extending the $\btau(u)$ functions 
  both satisfy the conditions of~\Cref{cor:comp}, which shows that
  $\varphi_{A(u), \btau(u), u}$ implies:
  \begin{align}
    \label{eq:cons}
    \underbrace{\sum_{u \in U} \left( \yv(u,
    I_u) + a^{C_u} \cdot \yv \right)}_{a^{C(v,\sigma,\tau)} \cdot \yv} \geq 
    \underbrace{\sum_{u \in U} \left( \Diff(u, I_u) + b^{C_u} \right)  + 
    (n_v - \sum_{u \in U} n_u) \delta}_{\leq b^{C(v,\sigma, \tau)}}.
  \end{align}

\item Having added constraint $C(v,\sigma,\tau)$, we raise the
  new dual variable $z_{C(v, \sigma, \tau)}$ at a constant rate in
  line~\eqref{l:draise1}, and the primal variables $\yv(u, t)$ 
  for each $u \in U$ and any $t$ in some index set $S_u$ 
  using %
  an exponential update rule in
  line~\eqref{l:draise2}. The index set $S_u$ consists of all
  timesteps in $I_u \cap \cR^{ns}(u)$ and the first timestep of
  $I_u$---which is $\tau_u + \eta$ if $I_u$ is
  non-empty.\footnote{This timestep may not belong to
    $\cR(u)$, but all other timesteps in $S_u$ lie in $\cR(u)$; see
    also~\Cref{fig:du}.} We will soon show that $S_u$ is not too large, yet captures all the ``necessary'' variables that should be raised (see~\Cref{fig:du}).
  Moreover, we transfer servers from active leaves in $T_u$ into
  $\RL(q)$ in line~\eqref{l:dtr}. This transfer is done arbitrarily,
  i.e., we move servers out of any of the leaf nodes that were active
  at the beginning of this procedure. Our definition of
  $\sib(u_0, \tau)$ means that $T_u$ has at least one active leaf and
  hence at least $\delta$ servers to begin with. Since we move at most
  $\gamma \ll \delta$ amounts of server, we maintain
  \Cref{invar:active}, as shown in \Cref{cl:rec}. The case of $u_0$ is
  special: since
  $\tau_{u_0} = \tau$, the interval $I_{u_0}$ is empty so no variables
  $\yv(u_0, t)$ are raised.

  Somewhat unusually for an online primal-dual algorithm, both the
  primal and dual variables are used to account for our algorithm's
  cost, and not for actual algorithmic decisions (i.e., the server
  movements). This allows us to increase primal variables from the
  past, even though the corresponding server movements are always
  executed at the current timestep.
\end{enumerate}

To describe the stopping condition for this process, we need to
explain the relationships between these local LPs, and define the
notions of \emph{slack} and \emph{depleted} constraints.  We use the
fact that we have an almost-feasible dual solution
$\{z_C\}_{C \in \cons^u(\tau_u)}$ for each $u \in U$. This in turn
corresponds to an increase in primal values for variables
$\yu(u',\tau')$ in $\fL^u$. It will suffice for our proof to ensure
that when we raise $z_{C(v, \sigma, \tau)}$, we constrain it as
follows:
\begin{leftbar}
  \begin{invariant}
    \label{invar:duals-match}
    For every $u \in \chi_{v}, t \in \cR^{ns}(u),$ and every constraint
    $C \in \cons^u(t)$ (which by definition of $\cR^{ns}(u)$ is not a $\bot$-constraint):
    \begin{align}
      \label{eq:bound}
      \left( 1 + \frac{1}{H} \right) z_C \geq \sum_{\tau' \geq t} \sum_{\sigma:
      C \in \sigma} z_{C(v, \sigma, \tau')}. \tag{I3}
    \end{align}
  \end{invariant}
\end{leftbar}

\begin{defn}[Slack and Depleted Local Constraints]
\label{def:slack}
  A non-$\bot$ constraint $C \in \cons^u$ is \emph{slack}
  if~(\ref{eq:bound}) is satisfied with a strict inequality, else it
  is \emph{depleted}. By convention,
  $\bot$-constraints are always slack.
\end{defn}

We can now explain the remainder of the local update.

\begin{enumerate}[leftmargin = 5mm]
  \setcounter{enumi}{2}
\item \label{item:DU3} %
  The choice of the constraint in line~\eqref{l:chooseC} is now easy:
  $C_u$ is chosen to be any slack constraint in $\cons^u(\tau_u)$. If
  $\tau_u \in \cR^s(u)$, this is the unique $\bot$-constraint in
  $\cons^u(\tau_u)$.

  The primal-dual update in the \textbf{while} loop proceeds as long
  as all constraints $C_u$ in $\sigma$ are slack: once a constraint
  becomes tight, some other slack constraint $C_u \in \cons^u(\tau_u)$
  is chosen to be in $\sigma$. If there are no more slack constraints
  in $\cons^u(\tau_u)$, the timestep $\tau_u$ is removed from the
  awake set (in line~\eqref{l:mark2}).  In the next iteration,
  $\tau_u$ gets redefined to be the most recent awake timestep before
  $\tau$ (in line~\eqref{l:Iu}). 
  \Cref{cl:unmarkeddefined} shows that there is always an awake
  timestep on the timeline of every vertex.

\item The dual objective corresponding to constraints in
  $\cons^v(\tau)$ %
  is $\sum_{C \in \cons^v(\tau)} b^C\, z_C$, where
  $C$ is of the form %
  $\langle a^C, \yv\rangle \geq b^C$. The local update process ends
  when the increase in this dual objective due to raising variables
  $\{ z_C \mid C \in \cons^v(\tau)\}$ equals $\gamma$.
  
\end{enumerate}

\begin{figure}
    \centering
    \includegraphics[width=4.5in]{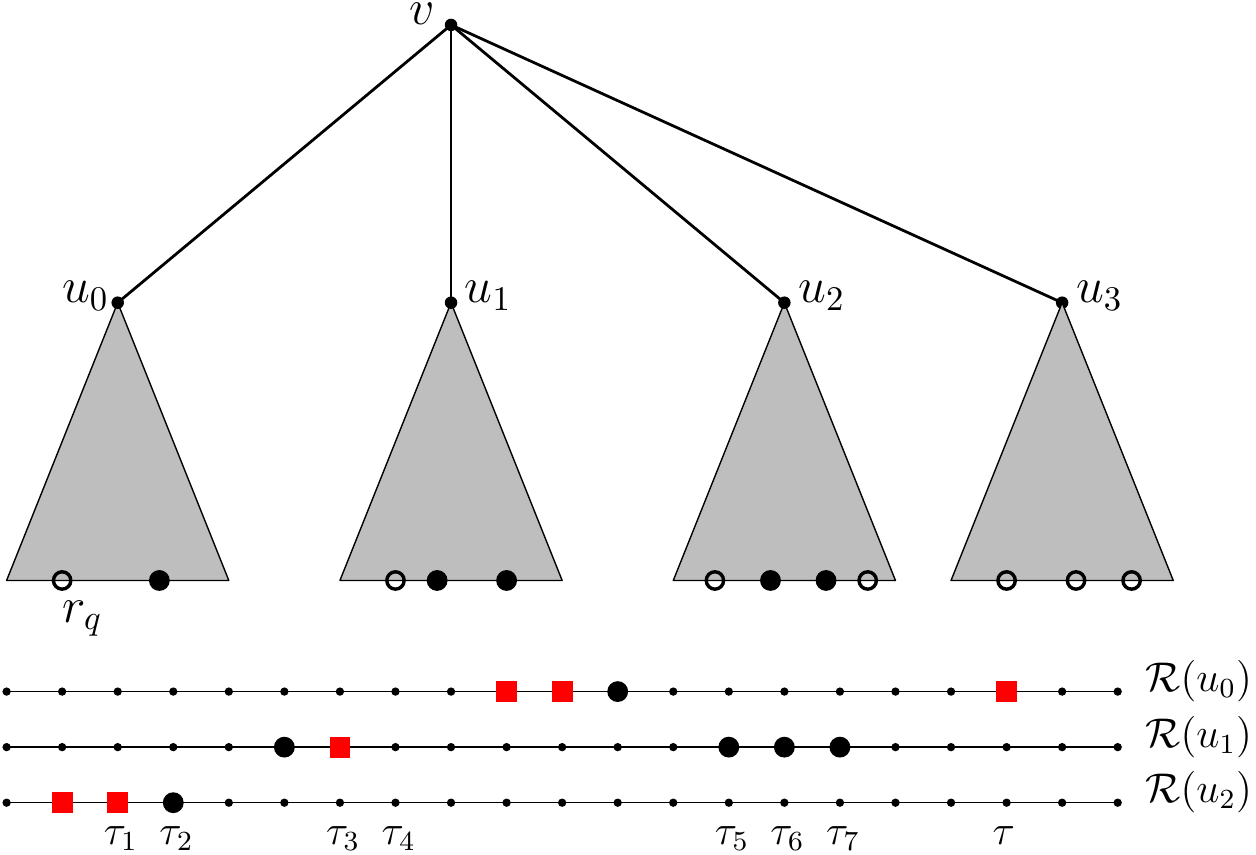}
    \caption{Illustration of $\DU(v, \tau)$: leaves with filled dots
      are active and open dots are inactive, so
      $\sib(u_0, \tau) = \{u_1, u_2\}$. The bold red squares or black
      circles denote timesteps in
      $\cR(u) = \cR^s(u) \cup \cR^{ns}(u)$, with the red squares being
      awake at timestep $\tau$. Hence,
      $I_{u_0} = S_{u_0} = \emptyset$,
      $I_{u_1} = (\tau_3, \tau], S_{u_1} = \{\tau_4, \tau_5, \tau_6,
      \tau_7\}, I_{u_2} = (\tau_1, \tau], S_{u_2} = \{\tau_2\}.$
      Timesteps in $\cR^s(u)$ always remain awake. }
    \label{fig:du}
\end{figure}
For a constraint $C \in \cons^u(t)$, the variable $z_C$ is only
raised in the call $\DU(u,t)$. Subsequently, only the right side
of~(\ref{eq:bound}) can be raised. Hence, once a constraint $C$
becomes depleted, it stays depleted. %
It is worth discussing the special case when $\sib(u_0, \tau)$ is
empty, so that $U = \{u_0\}$. In this case, no server transfer can
happen, and the constraint $C(v,\sigma, \tau)$ is same as a slack
constraint of $\fL^{u_0}(\tau)$, but with an additive term of
$(n_v - n_{u_0}) \delta$ on the RHS, as in~\eqref{eq:cons}. We still
raise the dual variable $z_{C(v, \sigma, \tau)}$, and  prove
that the dual objective value rises by $\gamma$.

There is a parameter $M$ in line~\eqref{l:draise2} that specifies the
rate of change of $\yv$. This value $M$ should be an upper bound on
the size of the index set $S_u$ over all calls to \DU, and over all
$u \in U$. 
\Cref{cor:2} gives a bound of
$M \leq \Mbound$, independent of the trivial bound $M \leq T$, where
$T$ is the length of the 
input sequence.


\section{Analysis Details}
\label{sec:analysis}

The proof rests on two lemmas: the first (proved in
\S\ref{sec:proving-invariants}) bounds the movement cost in
terms of the increase in dual value, and the second (proved in
\S\ref{sec:appr-dual-feas}) shows
near-feasibility of the dual solutions.

\begin{lemma}[Server Movement]
  \label{lem:DUcost}
  The total movement cost during an execution of the
  procedure $\DU$ is at most $2\gamma$, and the objective value
  of the dual $\fD^v$ increases by exactly $\gamma$.
\end{lemma}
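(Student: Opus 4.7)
My plan is to handle the two assertions separately. The first — that the dual objective of $\fD^v$ rises by exactly $\gamma$ during the call to $\DU$ — is immediate from the stopping rule of the outer \textbf{repeat} loop: each iteration creates one new constraint $C := C(v,\sigma,\tau)$ and the inner \textbf{while} loop raises $z_C$ at the same rate as the timer $s$, so the dual objective grows at rate $b^C$, and the repeat loop terminates precisely when the cumulative increase equals $\gamma$.

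For the $2\gamma$ bound on movement cost, I would analyze a single outer iteration and then sum. Fix such an iteration with active constraint $C$ and inner-loop duration $t_\sigma$; the dual contribution from this iteration is exactly $b^C t_\sigma$. Mass is transferred from $T_u$ (for $u \in U \setminus \{u_0\}$) into $v_0$ at rate $\rho_u := d\yv(u,I_u)/ds + b^{C_u}/\lambda^h$, and since both source leaves and the destination lie inside $T_v$, whose leaf-to-leaf diameter is $2\sum_{i=0}^{h}\lambda^{i} \leq 2\lambda^h\cdot\lambda/(\lambda-1) \leq 2\lambda^h(1+O(1/H))$ using $\lambda \geq 10H$, each unit of transferred mass costs at most $2\lambda^h$ to leading order. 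Thus the iteration's movement cost is bounded by $2\lambda^h \sum_{u \neq u_0}\int_{0}^{t_\sigma}\rho_u\, ds$.

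I would then decompose $2\lambda^h \rho_u = 2\lambda^h \cdot d\yv(u,I_u)/ds + 2 b^{C_u}$ and handle the two pieces separately. The $b^{C_u}$ piece integrates to $2 b^{C_u} t_\sigma$, and summing over $u \neq u_0$ gives at most $2 t_\sigma\sum_u b^{C_u} \leq 2 b^C t_\sigma$; the last step uses the composition identity of \Cref{cor:comp}, namely $b^C = \sum_u \Diff(u,I_u) + \sum_u b^{C_u} + (n_v - \sum_u n_u)\delta$, together with $\Diff(u, I_u) \geq 0$ (which holds for $u \neq u_0$ since mass only flows outward across $(u, p(u))$ during the iteration, and trivially for $u_0$ since $I_{u_0} = \emptyset$) and $n_v \geq \sum_u n_u$. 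For the $d\yv/ds$ piece, the exponential update rule $d\yv(u,t)/ds = \yv(u,t)/\lambda^h + \gamma/(Mn\lambda^h)$ is calibrated so that the aggregate primal growth $\lambda^h \Delta \yv(u,I_u)$ is dominated by a combination of the $\Diff(u,I_u)$-budget already present inside $b^C$ at creation time and the additive slack $(n_v - \sum_u n_u)\delta$, through the small multiplicative factor $t_\sigma/\lambda^h$ that appears after linearizing the exponential on the short window $[0,t_\sigma]$.

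The main obstacle is this last comparison: $b^C$ is frozen at the instant of constraint creation, whereas the physical quantities $\Diff(u,I_u)$ and $\yv(u,I_u)$ grow continuously during the inner while loop. I expect the cleanest route is a differential inequality that tracks dual increase against primal increase as $s$ advances, exploiting $t_\sigma \leq \gamma/b^C$ together with $\lambda^h \gg t_\sigma$ so the exponential behaves essentially linearly, and using that the additive $\gamma/(Mn\lambda^h)$ term is tiny by the already-cited bound on $M$. Once the per-iteration estimate (movement cost) $\leq 2 b^C t_\sigma$ is established, summing over all outer iterations of the $\DU$ call yields the desired total bound of $2\gamma$.
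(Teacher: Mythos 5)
You treat the ``dual objective rises by exactly $\gamma$'' part as immediate from the stopping rule, but this is actually the delicate half of the lemma: the stopping rule only says the \textbf{repeat} loop exits \emph{once} the dual reaches $\gamma$; it does not ensure the process gets there. The inner \textbf{while} loop also exits when constraints in $\sigma$ become depleted, so one must show the procedure never stalls: there is always a slack constraint to choose in line~\ref{l:chooseC} (\Cref{cl:unmarkeddefined}, \Cref{cl:u0}), the new constraint's $b^C$ is strictly positive so the dual actually grows (final part of \Cref{cor:1}), and the case $\sib(u_0,\tau)=\emptyset$ needs a separate argument (\Cref{cl:0}). All of this is established as part of a simultaneous induction over pairs $(v,\tau)$ under $\prec$ that proves \Cref{invar:dual-raised,invar:movement} together with this lemma; the ``$=\gamma$'' claim cannot be peeled off as a one-line observation.

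The movement-cost half is also incomplete, and the decomposition as you wrote it does not close: bounding the $b^{C_u}$ piece by $2 t_\sigma\sum_u b^{C_u}\leq 2 b^C t_\sigma$ already spends the full $2 b^C t_\sigma$ budget, leaving nothing for the strictly positive $d\yv/ds$ piece. You need the sharper split $b^C \geq \sum_u\Diff(u,I_u) + \sum_u b^{C_u} + (n_v-\sum_u n_u)\delta$, charging the $b^{C_u}$ piece only to $\sum_u b^{C_u}$ and the primal-growth piece to the remaining $\Diff$-plus-$\delta$ slack. The obstacle you flag --- that $b^C$ is frozen while $\Diff$ and $\yv$ evolve during the call --- is exactly the crux, and the paper's resolution is specific: \Cref{subcl:Y} shows the total $\yv$-growth $Y(s)$ over the \emph{entire} $\DU$ call stays below $\gamma$, which combined with $\Diff_0(u,I_u(s)) \geq \yv_0(u,I_u(s))$ from \Cref{cor:1}(ii,iii) and the slack $\delta \geq 2\gamma$ (absorbing the additive $\gamma/(Mn\lambda^h)$ terms summed over $u$) closes the differential inequality $b^C\,ds \geq \lambda^h\cdot[\text{transfer in }ds] + b^{C_{u_0}}\,ds$ with no linearization at all. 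Your proposed shortcut of ``linearizing the exponential on the short window since $\lambda^h\gg t_\sigma$'' is unfounded --- $t_\sigma$ is not a priori small relative to $\lambda^h$ --- and is not the mechanism that makes the argument work.
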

\begin{lemma}[Dual Feasibility]
  \label{lem:dual-feasible}
  For each vertex $v$, the dual solution to $\fL^v$ is
  feasible if scaled down by a factor of $\beta$, where $\beta = O(\log \frac{nMk}{\gamma}) = O(H
  \log (n\lambda))$. 
\end{lemma}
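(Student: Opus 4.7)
My plan is to follow the standard primal-dual schema for online covering LPs, adapted to the hierarchy of local LPs $\fL^v$. First, I would integrate the update rule on line~\eqref{l:draise2}: since $\mathrm{d}\yv(u,t)/\mathrm{d}s = \yv(u,t)/c_u + \gamma/(Mn\,c_u)$ whenever $(u,t)$ is actively being raised (i.e.\ $t \in S_u$ at that moment in the call), aggregating over all such $s$-time yields the exponential relation
\[
\yv(u,t) \;=\; \tfrac{\gamma}{Mn}\bigl(\exp(Z(u,t)/c_u) - 1\bigr),
\]
where $Z(u,t) := \sum_{C} z_C$ is summed over constraints $C \in \cons^v$ raised while $(u,t)$ was in $S_u$ for the relevant call. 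Inverting gives $Z(u,t) = c_u \log(1 + Mn\,\yv(u,t)/\gamma)$, which reduces dual feasibility at the child level to a pointwise upper bound on $\yv(u,t)$.

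Second, I would prove the pointwise bound $\yv(u,t) = O(k)$ for every $u \in \chi_v$ and every timestep $t$; this is the main obstacle. I expect the argument to tie the primal update on line~\eqref{l:draise2} to the actual server transfer on line~\eqref{l:dtr}, whose rate is $\mathrm{d}\yv(u,I_u)/\mathrm{d}s + b^{C_u}/c_u$. Because the total server mass is $k$ and \Cref{invar:active} keeps each leaf mass in $[\delta/2,\,1-\delta'/2]$, the growth of $\yv(u,t)$ cannot outpace the cumulative server flux across the edge $(u,p(u))$, which is controlled by $k$ together with the number of rounds (itself bounded by $M \le \Mbound$). Substituting this into the exponential relation gives $Z(u,t) \le c_u \log(1 + kMn/\gamma) = O(c_u \log(nMk/\gamma))$; plugging in $M \le \Mbound$, $\gamma = 1/n^4$, and $\lambda \ge 10H$ then simplifies to $\beta\,c_u$ with $\beta = O(H \log(n\lambda))$ as claimed.

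Third, I would bridge the gap between $Z(u,t)$ (dual accumulated while $\yv(u,t)$ was actually being raised) and the full dual load at the primal position $(u,t)$ in the LP (what feasibility demands). For $u \in \chi_v$, the definition $S_u = I_u \cap (\cR^{ns}(u) \cup \{\tau_u + \eta\})$ is crafted so that whenever a constraint $C(v,\sigma,\tau)$ has coefficient $1$ on $(u,t)$, either $t \in S_u$ during its raising (so that $z_{C(v,\sigma,\tau)}$ is already counted in $Z(u,t)$) or $t$ is the single boundary timestep $\tau_u + \eta$ for that call, whose contributions across varying $\tau$ form a small residual that is easily absorbed. For a deeper primal variable $\yv(u',t')$ with $u'$ strictly descended from some child of $v$, I would invoke the composition invariant~\eqref{eq:bound}: each $z_{C(v,\sigma,\tau)}$ whose constraint touches $(u',t')$ does so through a parent constraint $C_u \in \cons^u$ at an ancestor $u$ of $u'$, and~\eqref{eq:bound} bounds $\sum_{\sigma,\tau} z_{C(v,\sigma,\tau)} \le (1+1/H)\,z_{C_u}$. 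Telescoping across the at most $H$ levels between $v$ and $p(u')$ introduces a multiplicative factor $(1+1/H)^H \le e$, so the dual load at $(u',t')$ in $\fL^v$ is at most $e$ times the corresponding load in $\fL^{p(u')}$, which has already been bounded. Combining these yields dual feasibility for $\fL^v$ after scaling down by $\beta$.
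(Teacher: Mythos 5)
Your overall architecture matches the paper's proof: integrate the exponential update of line~\eqref{l:draise2} to turn dual load into a logarithm of a $\yv$ value, cap the $\yv$ values using the fact that every increase of $\yv(u,\cdot)$ is accompanied by server mass leaving $T_u$ (so the cap involves $k$ and $M$, as in \Cref{cl:uppery}), and handle variables $\yv(u',t')$ for deeper descendants $u'$ by induction on height together with \Cref{invar:duals-match}, paying $(1+\nicefrac1H)$ per level and hence a constant overall. The last part of your step three is exactly the paper's case~II.

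The genuine gap is in your bridging claim for child-level variables. A constraint $C(v,\sigma,\tau)$ contains $\yv(u,t)$ with coefficient $1$ for \emph{every} timestep $t$ in the interval $I_u=(\tau_u,\tau]$, whereas only the timesteps in $S_u = I_u \cap (\cR^{ns}(u)\cup\{\tau_u+\eta\})$ are raised in line~\eqref{l:draise2}. So for a timestep $t\in I_u$ with $t\notin\cR^{ns}(u)$ and $t\neq\tau_u+\eta$ (e.g.\ a generic timestep not in $\cR(u)$ at all), the dual $z_{C(v,\sigma,\tau)}$ is raised while $\yv(u,t)$ is not touched; your stated dichotomy (``either $t\in S_u$ or $t=\tau_u+\eta$'') is false, and the second alternative is anyway contained in the first. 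Moreover the uncovered load is not a ``small residual'': it accrues over arbitrarily many later calls $\DU(v,\tau')$ in which $\prev(u,\tau')$ has receded before $t$, so nothing a priori bounds it. Consequently your quantity $Z(u,t)$ (dual accumulated while $(u,t)$ is being raised) does not upper-bound the load in the dual constraint for $(u,t)$, which is what feasibility requires. The paper closes this hole with \Cref{cl:two}: it charges the entire load on $(u,t)$ to the two \emph{fixed} proxy timesteps $t_1$ (the last timestep of $\cR(u)$ at or before $t$) and $t_2=t_1+\eta$, showing that whenever any constraint containing $\yv(u,t)$ has its dual raised, one of $\yv(u,t_1),\yv(u,t_2)$ is simultaneously raised; then one proxy absorbs at least half the timer duration, and its value, capped by $4\gamma M+k$ via \Cref{cl:uppery}, yields the $O(\la^h\log\frac{nMk}{\gamma})$ bound. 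Your step two would also need to be made precise along the lines of \Cref{cl:uppery} (the bound is $4\gamma M + k$, not $O(k)$, though this only changes constants inside the logarithm), but the missing proxy argument is the essential fix.
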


\begin{theorem}[Competitiveness for $k$-server]
  \label{thm:main-simple}
  Given any instance of the $k$-server problem on a $\la$-HST with
  height $H \leq \la/10$, \Cref{algo:main} ensures that each request
  location $\rqq$ is saturated at some timestep in $[q,q+1)$. The total
  cost of (fractional) server movement is
  $O(\beta H) = O(H^2 \log(n\la))$ times the cost of the optimal
  solution.
\end{theorem}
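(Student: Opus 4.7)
The plan is to combine \Cref{lem:DUcost} and \Cref{lem:dual-feasible} via a per-iteration charging argument that routes all of the movement cost through the dual of the root's local LP $\fL^\rootvtx$. Since $\fL^\rootvtx$ coincides with the valid relaxation $\fM$, this root dual is controlled by $\opt$; the extra factor of $H$ in the competitive ratio comes from accumulating cost across the $H$ levels of the HST within one iteration of the \textbf{while} loop, all of which is charged against a single root-level dual increment.

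I would first handle the \emph{saturation} half of the statement by showing that the \textbf{while} loop of \Cref{algo:main} terminates in fewer than $1/\eta$ iterations. Whenever $\rqq = v_0$ has mass at most $1-\delta'$, the remaining mass (at least $\delta'$) must lie off the backbone; since each inactive leaf has mass strictly below $\delta$ and $n\delta = \nf{1}{10n^2} < \delta'$, at least one off-backbone leaf is active. This ensures $i_0$ is well-defined with $i_0 \le H-1$, so $\DU(\rootvtx, \tau)$ is invoked every iteration. To finish termination I would extract a polynomial lower bound on the mass transferred into $v_0$ per iteration, using that the transfer rate on line~\eqref{l:dtr} is strictly positive while the \textbf{repeat} loop runs, and that this loop lasts positive time since it must accumulate $\gamma$ of dual objective. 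Choosing $\eta$ polynomially small then forces $\rqq$ to become saturated within $[q,q+1)$.

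For the cost bound I would analyze one iteration of the \textbf{while} loop. The $\DUZ$ calls at levels $0,\ldots,i_0$ incur no server movement, while each of the at most $H$ calls to $\DU$ above level $i_0$ moves at most $2\gamma$ of server mass (by \Cref{lem:DUcost}) and raises the dual objective $\sum_{C\in\cons^v(\tau)} b^C z_C$ of its local LP by exactly $\gamma$ (by \Cref{lem:DUcost} together with \Cref{invar:dual-raised}). In particular, the iteration costs at most $2H\gamma$, whereas the root's dual objective grows by exactly $\gamma$ through the call $\DU(\rootvtx, \tau)$. Summing over all iterations and all requests yields
\[
  \text{AlgCost} \;\le\; 2H \sum_{C \in \cons^\rootvtx} b^C z_C .
\]
Now $\fL^\rootvtx$ is literally $\fM$: at the root we have $k_{\rootvtx,\tau} = k$ and $n - n_\rootvtx = 0$, so the slack in \eqref{eq:implied1} vanishes and $y^\rootvtx \equiv x$. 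Hence every constraint in $\cons^\rootvtx$ is a constraint of $\fM$, whose LP optimum is at most $\opt$ by \Cref{cl:lp}. \Cref{lem:dual-feasible} says $\{z_C/\beta\}$ is a feasible dual to $\fM$, so weak duality gives $\sum_C b^C z_C \le \beta \opt$, and hence $\text{AlgCost} \le 2H\beta \opt = O(H^2 \log(n\lambda))\cdot\opt$.

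The step I expect to be the main obstacle is the quantitative termination claim. Although each iteration visibly moves a positive amount of server mass, bounding the per-request iteration count requires controlling the minimum transfer rate on line~\eqref{l:dtr} even when $b^{C_u}$ is tiny because many depleted constraints have forced $\prv{u}{\tau}$ far into the past. A careful joint analysis of the exponential primal-update rule on line~\eqref{l:draise2} and the transfer rate, using the explicit bound $M \le \Mbound$ from \Cref{cor:2}, should make this rigorous and yield the polynomial choice of $\eta$.
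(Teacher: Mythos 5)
Your cost-bound argument is essentially the paper's own proof: each \textbf{while}-loop iteration costs $O(H\gamma)$ by \Cref{lem:DUcost} while the root dual increases by exactly $\gamma$, and $\fL^\rootvtx$ is literally $\fM$, so \Cref{lem:dual-feasible} and weak duality finish. Where you genuinely add something is the saturation claim: the paper's proof is silent on it (implicitly treating the dynamics as a continuum limit), whereas you rightly observe that the \textbf{while} loop must terminate in fewer than $1/\eta$ iterations, which requires a per-iteration lower bound on the mass moving into $v_0$. Your sketch is viable: at level $i_0+1$ the sibling set is non-empty, line~\eqref{l:draise2} gives a transfer rate at least $\frac{\gamma}{Mn\lambda^{i_0}}$ per unit of timer for each sibling (since $|S_u| \ge 1$), and the timer must advance at least $\gamma/n$ units because $b^C \le |A| \le n$ for every constraint $C$ added while the dual objective accumulates $\gamma$; combined with \Cref{cor:2}'s bound on $M$, this yields a polynomial per-iteration transfer lower bound, hence a polynomial iteration count and a valid polynomial choice of $\eta$. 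It is worth noting the paper carries out exactly this style of termination argument for the time-windows variant (\Cref{cl:duiterations}), after artificially enforcing a minimum transfer via line~\eqref{l:more} of $\DU$ --- precisely because, as the authors acknowledge in \S\ref{sec:local-updatenew}, the unmodified $\DU$ can move very little server mass when most dual increase comes from $b^{C_{u_0}}$. One minor slip: \Cref{lem:DUcost} bounds movement \emph{cost}, not mass, by $2\gamma$; your downstream $2H\gamma$ per-iteration cost estimate remains correct.
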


\begin{proof}
  All the server movement happens within calls to \DU. By
  \Cref{lem:DUcost}, each iteration of the \textbf{while} loop of
  line~\eqref{l:while} in \Cref{algo:main} incurs a total movement cost of $O(H \gamma)$
  over at most $H$ vertices on the backbone. Moreover, the call
  $\DU(\rootvtx, \tau)$ corresponding to the root vertex $\rootvtx$ increases the value of the dual solution to
  the LP $\fL^\rootvtx$ by $\gamma$. This means the total movement
  cost is at most $O(H)$ times the dual solution value.  Since all
  constraints of $\fL^\rootvtx$ are implied by the relaxation $\fM$,
  any feasible dual solution gives a lower-bound on the optimal
  solution to $\fM$. %
 By \Cref{lem:dual-feasible}, 
 the dual solution is feasible when scaled down by $\beta$, 
  and so the
  (fractional) algorithm is
  $O(\beta H)=O(H^2 \log (n \lambda))$-competitive. 
\end{proof}
As mentioned in the introduction, using $\lambda$-HSTs with $\lambda =
O(\log \Delta)$ allows us to extend this result to general metrics
with a further loss of $O(\log^2 \Delta)$.

\subsection{Bounds on Server Transfer and Dual Increase}
\label{sec:proving-invariants}

The dual
increase of $\gamma$ claimed by \Cref{lem:DUcost} 
will follow from the proof of
\Cref{invar:dual-raised}. The upper bound on the server movement will
follow from a new invariant, which we state below. Then in \S\ref{sec:proving-invariants} we
show both invariants are indeed maintained throughout the algorithm.

We first define the
notion of the ``lost'' dual increase.  Consider a call
$\DU(v,  \tau)$. Let $u$ be $v$'s child such that request location
$v_0$ lies in $T_u$. We say that $u$ is $v$'s \emph{principal} child at
timestep $\tau$. %
We prove (in~\Cref{cl:u0}) that
$\tau \in \cR(u)$ remains in the awake set and hence
$\tau_u = \tau$ throughout this procedure call.  The dual
update raises $z_{C(v,\sigma,\tau)}$ in line~\eqref{l:draise1} and
transfers servers from subtrees $T_{u'}$ for $u' \in \sib(u, \tau)$
into subtree $T_u$ in line~\eqref{l:dtr}. This transfer has two
components, which we consider separately. The first is the
{\em local component} $\frac{d\yv(u,t)}{ds}$, and the second is the {\em
  inherited component} $\frac{b^{C_u}}{\la^h}$. In a sense, the
inherited component matches the dual increase corresponding to the
term $\sum_{u' \in \sib(u, \tau)} b^{C_{u'}}$ on the RHS
of~\eqref{eq:cons}. The only term without a
corresponding server transfer is $b^{C_u}$ itself, where $C_u \in \cons^u(\tau)$
is the constraint in $\sigma$ corresponding to the principal child $u$. Motivated by this, we give the following definition.

\begin{defn}[Loss]
  For vertex $u$ with parent $v$, consider a timestep
  $\tau \in \cR^{ns}(v)$ such that $\tau \in \cR(u)$ as well.  If $\tau \in \cR^s(u)$,
  define $\loss(u, \tau) := 0$. Else  $\tau \in \cR^{ns}(u)$, in which case.
  \begin{gather}
    \label{eq:lossdef}
    \loss(u, \tau) := \sum_{C \in \cons^u(\tau)} \; \sum_{C(v, \sigma, \tau): C \in \sigma}
    b^C \; z_{C(v,\sigma,\tau)} ~~.
  \end{gather}
\end{defn}

\begin{leftbar}
  \begin{invariant}
    \label{invar:movement}
    For node $v$ and timestep $\tau \in \cR^{ns}(v)$, let $u$ be $v$'s
    principal child at timestep $\tau$.  The server mass entering subtree
    $T_{u}$ during the procedure $\DU(v, \tau)$ is at most
    \begin{gather}
      \frac{\gamma - \loss(u, \tau)}{\la^{\level(u)}}. \tag{I4}
    \end{gather}
    Moreover, timestep $\tau \in \cR(u)$ stays
    awake during the call 
    $\DU(v, \tau)$. 
  \end{invariant}
\end{leftbar}

Multiplying the amount of transfer by the cost of this transfer, we
get that the total movement cost is at most $O(\gamma)$.
\Cref{invar:dual-raised,invar:movement} prove \Cref{lem:DUcost}. We
now show these invariants hold over the course of the algorithm.

\subsubsection{Proving \Cref{invar:dual-raised,invar:movement}}

To prove these invariants, we define a total order on pairs
$(v, \tau)$ with $\tau \in \cR(v)$ as follows:
\begin{gather*}
  \text{define: } (v_1, \tau_1) \prec (v_2, \tau_2) \text{ if } \tau_1 < \tau_2, \text{
    or if $\tau_1 = \tau_2$ and $v_1$ is a descendant of $v_2$}. 
\end{gather*}
Since calls to $\DU$ are made in this order, we also prove the
invariants by induction on this ordering: Assuming both invariants hold
for all pairs $(v, \tau) \prec (\vstar, \tst)$, we prove them for the
pair $(\vstar, \tst)$.  The base case is easy to settle: at $q=0$, we
only have $\bot$-constraints at the dummy leaf nodes. The only
non-trivial statement among~\Cref{invar:dual-raised,invar:movement}
for these nodes is to check that $b^C > 0$ for any such
$\bot$-constraint $C$ at a dummy leaf $v$. Note that $b^C = 1 -
k_{v,0} - 2 \delta(n-1) = \frac{1}{2} - 2 \delta(n-1) >
0$. \alert{DOuble-check this.}

We start off with some supporting claims before proving the inductive step 
\Cref{invar:dual-raised,invar:movement}. First, we show that the
notion of $\prev$ timestep in the \DU procedure is well-defined.  
\begin{claim}
  \label{cl:unmarkeddefined}
  Let $u$ be any non-root vertex. Then the first timestep in $\cR(u)$
  corresponds to a $\bot$-constraint. Therefore, for any timestep
  $\tau$ such that $T_u$ has an active leaf at timestep $\tau$,
  $\prev(u, \tau)$ is well-defined.
\end{claim}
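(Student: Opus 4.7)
The plan is to prove the claim in two stages, deriving part~2 from part~1. The key ingredient is a mass-tracking observation: in $\DU(v,\tau)$ the transfer step pushes mass only into $v_0 = \RL(\tau)$, and $\DUZ$ never moves mass at all, so a non-dummy leaf can gain server mass only when it is itself the requested location. Since the dummy leaves sit only under the root, this yields, for any non-root $u$, the implication ``$T_u$ has an active leaf at time $\tau$'' $\Rightarrow$ ``some non-dummy leaf in $T_u$ was the request location at some earlier time $\tau_0 \leq \tau$'' $\Rightarrow$ ``$u$ lay on the backbone of that earlier request and hence was touched by a local-update call at time $\tau_0$.''

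For part~1, consider the very first request made at some leaf $\ell \in T_u$, say at time $\tau_1$; this is the first moment a timestep is added to $\cR(u)$. By the mass-tracking observation, $T_u$ has no active leaf just before $\tau_1$ is processed. Writing the backbone as $\ell = v_0, v_1, \ldots, v_h = u, \ldots, v_H = \rootvtx$, for every index $i < h$ the siblings of $v_i$ lie inside $T_u$ (they share the parent $v_{i+1} \in T_u$), so $\sib(v_i,\tau_1) = \emptyset$. Hence the threshold $i_0$ from line~\eqref{l:eq} of \Cref{algo:main} satisfies $i_0 \geq h$, and at level $h$ the main procedure executes $\DUZ(u, \tau_1)$, which puts $\tau_1$ into $\cR^s(u)$ with the $\bot$-constraint specified in \Cref{algo:simpledual}. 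Thus the first timestep in $\cR(u)$ carries a $\bot$-constraint.

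Part~2 follows quickly. $\DUZ$ also inserts $\tau_1$ into $\awake(u)$, and the removal step in line~\eqref{l:mark2} of \Cref{algo:dual} can only fire when every constraint of $\cons^u(\tau_1)$ has become depleted; but the lone $\bot$-constraint is slack by convention and never depleted, so $\tau_1$ stays in $\awake(u)$ forever. For any timestep $\tau$ at which $T_u$ has an active leaf, the mass-tracking observation forces $\tau_1 \leq \tau$, so $\awake(u) \cap (-\infty, \tau]$ contains $\tau_1$ and hence $\prev(u,\tau)$ from \eqref{eq:prev} is well-defined.

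The main obstacle is justifying the mass-tracking observation itself. Although intuitively obvious, it requires a careful audit of every line in $\DU$ that touches server mass, in particular the transfer rule of line~\eqref{l:dtr}, to confirm that mass flows only toward $v_0$ and never to any other leaf of $T_{u_0}$ or of any sibling subtree. Everything else in the argument is a direct structural consequence of how $i_0$, $\cR^s$, and $\awake$ are defined.
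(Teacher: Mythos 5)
Your proof follows essentially the same approach as the paper's: identify the first request below $u$, argue via the ``mass-tracking observation'' that no leaves of $T_u$ can yet be active, conclude that $\DUZ(u,\cdot)$ rather than $\DU(u,\cdot)$ fires and hence a $\bot$-constraint is created, and use the fact that $\bot$-constraints are perpetually slack so that timestep stays awake forever. You spell out the $i_0 \geq h$ step and the mass-tracking audit that the paper compresses into the single clause ``because there are no active leaves below $u$,'' which is a fine thing to make explicit.

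There is, however, a gap: the argument does not cover the case where $u$ is one of the $2k$ dummy leaves. For a dummy leaf $u$, no request ever arrives in $T_u = \{u\}$, so ``the very first request made at some leaf $\ell \in T_u$'' is undefined, and your mass-tracking implication (``$T_u$ has an active leaf $\Rightarrow$ some non-dummy leaf of $T_u$ was requested'') is false outright -- the dummy leaf itself is active from time $0$ with mass $\nicefrac12$. The paper dispenses with this via a one-line base case: the starting configuration explicitly plants a $\bot$-constraint at timestep $0$ at each dummy leaf, which is then both the first timestep of $\cR(u)$ and forever awake. The case is not vacuous: dummy leaves can appear in $\sib(u_0,\tau)$ when $\DU(\rootvtx,\tau)$ is called, so $\prev(u,\tau)$ for a dummy leaf $u$ is genuinely queried, and you need the claim to hold for them too.
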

\begin{proof}
  If $u$ is any of the dummy leaf nodes, then this follows by
    construction, the first timestep has a $\bot$-constraint.  Else,
  let $q$ be the first time when a request arrives below $u$. Let
  $\tau_f$ be the first timestep after $q$. In the first iteration of
  the {\bf while} loop in~\Cref{algo:main} (corresponding to timestep
  $\tau_f$), we would call $\DUZ(u, \tau_f)$ because there are no
  active leaves below $u$ at this timestep. Hence we would add a
  $\bot$-constraint at timestep $\tau_f$, proving the first part of
  the claim.  To show the second part, let $\tau$ be a timestep such
  that $T_u$ has an active leaf below it at timestep $\tau$. This
  means that $\tau \geq \tau_f$. %
  Since $\fL^u(\tau_f)$ is a
  $\bot$-constraint, $\tau_f$ is awake, and so 
  $\prev(u, \tau)$ is  well-defined.
\end{proof}

Next, we define $\fll(u,\tau)$ to to be the set of timesteps that
\emph{load the constraints in $\fL^u(\tau)$}. Formally, we have
\begin{defn}[\fll]
  Given a node $u$ and its parent $v$, timestep $\tau \in \cR^{ns}(u)$,
  and constraint $C \in \cons^u(\tau)$, define $\fll(C)$ to be the
  timesteps $\tau'$ such that some constraint $C' \in \cons^v(\tau')$
  appears on the RHS of inequality~(\ref{eq:bound}) corresponding
  to $C$. All these timesteps $\tau'$ must be after $\tau$.
  Extending this, let
  \begin{align}
    \label{eq:filldef}
    \fll(u,\tau) := \bigcup_{C \in \cons^u(\tau)} \fll(C).
  \end{align}
\end{defn}
In other words, $\fll(u,\tau)$ is the set of timesteps
$\tau' \in \cR^{ns}(v)$ such that when we called $\DU(v,\tau'$), the
node $u$ was either the $v$'s principal child at timestep $\tau'$ or
else belonged to the active sibling set, and moreover
$\prev(u, \tau') = \tau$. The following lemma shows part of their
structure. Recall that $(\vstar, \tst)$ denotes the current pair in
the inductive step.

\begin{claim}[Structure of \fll\  times]
  \label{cl:fill}
  Fix a node $u$ with parent $v$, and a timestep $\tau \in \cR^{ns}(u)$
  such that $(v, \tau) \prec (\vstar, \tst)$.  Then for any
  $\tau' \in \fll(u, \tau)$, either (a)~$\tau'=\tau$, and
  $u$ is the principal child of $v$ at timestep $\tau'$, or else
  (b)~$\tau' > \tau$, and $u$ is \emph{not} $v$'s principal child at
  timestep $\tau'$.
\end{claim}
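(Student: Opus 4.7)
The plan is to unpack the definition of $\fll(u,\tau)$ and trace how the tuple $\sigma$ is assembled inside $\DU(v,\tau')$, then split into cases based on whether $\tau' = \tau$ or $\tau' > \tau$. By inspecting lines~\eqref{l:Iu}--\eqref{l:chooseC} of $\DU$, a constraint $C \in \cons^u(\tau)$ can be selected into some tuple $\sigma$ used at timestep $\tau'$ only if at the moment of selection we had $\prev(u,\tau') = \tau$; this observation is the hinge of both cases. Note that $\tau' < \tau$ is immediately ruled out by the range of summation in the invariant~\eqref{eq:bound}, so only Cases (a) and (b) need to be discussed.

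For Case (a), with $\tau' = \tau$, I would argue directly from the fact that $\tau \in \cR^{ns}(u)$. The only way a timestep can be added to $\cR^{ns}(u)$ is via an invocation of $\DU(u,\tau)$ in the main procedure, and such a call is issued only when $u$ lies on the backbone for the request at time $\tau$, i.e.\ when $u$ is an ancestor of $\RL(\tau)$. Since $v = p(u)$ is then also an ancestor of this request, $u$ is necessarily $v$'s principal child at timestep $\tau' = \tau$.

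For Case (b), with $\tau' > \tau$, I would argue by contradiction: assume $u$ is the principal child of $v$ at $\tau'$. In the main procedure, before $\DU(v,\tau')$ is called we have already called $\DUZ(u,\tau')$ or $\DU(u,\tau')$, each of which inserts $\tau'$ into $\awake(u)$. Applying the second statement of \Cref{invar:movement} to the pair $(v,\tau')$ (which is inductively available because $\DU(v,\tau')$ has already completed, putting $\tau'$ into $\fll(u,\tau)$, so $(v,\tau') \prec (\vstar,\tst)$), we conclude that $\tau'$ remains awake in $\cR(u)$ throughout $\DU(v,\tau')$. Hence $\prev(u,\tau') = \tau'$ at every moment of that call, so the chosen $C_u$ is drawn from $\cons^u(\tau')$ and not from $\cons^u(\tau)$; this contradicts $C \in \sigma$.

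The main obstacle is a mild induction bookkeeping issue in Case (b): the argument consumes \Cref{invar:movement} at $(v,\tau')$, which is legitimate only if $(v,\tau') \prec (\vstar,\tst)$. I would handle this by noting that $\tau' \in \fll(u,\tau)$ presupposes the constraint $C(v,\sigma,\tau')$ has already been instantiated, hence the entire call $\DU(v,\tau')$ has completed by the time we are at pair $(\vstar,\tst)$. Once this is spelled out, the rest of the proof is a short case analysis with no further calculation, and the claim follows.
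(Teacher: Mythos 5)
Your proof takes essentially the same route as the paper: Case~(a) unwinds $\tau \in \cR^{ns}(u)$ to conclude $u$ lies on the backbone (hence is the principal child), and Case~(b) argues by contradiction via the second statement of \Cref{invar:movement}, showing $\prev(u,\tau')$ would equal $\tau'$ throughout $\DU(v,\tau')$ and thus $\tau' \notin \fll(u,\tau)$. Both steps mirror the paper's argument.

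One small quibble concerns your handling of the bookkeeping obstacle in Case~(b). You write that the instantiation of $C(v,\sigma,\tau')$ implies the entire call $\DU(v,\tau')$ has completed; that inference does not hold in general, since a constraint can be instantiated mid-call. The paper also glosses over this point, but the correct resolution is contextual: \Cref{cl:fill} is only invoked within \Cref{lem:rel1}, where $\tau$ ranges over timesteps already removed from $\awake(u)$ before $\DU(\vstar,\tst)$ begins. Once $\tau$ leaves $\awake(u)$, the set $\fll(u,\tau)$ stops growing, so every $\tau' \in \fll(u,\tau)$ satisfies $\tau' < \tst$ and hence $(v,\tau') \prec (\vstar,\tst)$, making \Cref{invar:movement} legitimately available by induction. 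You correctly identified the issue; the fix just needs to appeal to the freezing of $\fll$ rather than to instantiation implying completion.
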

\begin{proof}
  Suppose $\tau = \tau'$. Since we call \DU\ only for ancestors of the
  requested node $v_0$, and $\tau \in \cR^{ns}(u)$, so $v_0$ belongs to
  $T_u$ (and hence $u$ is the principal child of $v$ at timestep $\tau$). Else suppose $\tau' > \tau$, and suppose $u$ is indeed $v$'s
  principal child at this timestep. Then during the call
  $\DU(v, \tau')$, we have $\prev(u, \tau') = \tau'$
  throughout the execution of $\DU(v,\tau')$ (by the second statement in~\Cref{invar:movement}), and
  hence $\tau' \notin \fll(u, \tau)$, giving a contradiction.
\end{proof}

We now give an upper bound on the server mass
entering a subtree at any timestep $\tau < \tst$.

\begin{claim}
  \label{cl:serverupperbound}
  Let $\tau \in \cR(u)$, $\tau < \tst$. The server mass
  entering $T_u$ at timestep $\tau$ is at most
   \[ \left( 1 + \frac{1}{\lambda-1} \right) \frac{\gamma}{\lambda^{\level(u)}} -
   \frac{\loss(u, \tau)}{\lambda^{\level(u)}}.  \]
\end{claim}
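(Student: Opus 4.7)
The plan is to decompose the mass entering $T_u$ at timestep $\tau$ as a sum over the calls $\DU(v_i, \tau)$ along the backbone above $u$, apply the inductive hypothesis (\Cref{invar:movement}) to each such call, and then evaluate the resulting geometric series.

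Let $h := \level(u)$, and let $v_0 = \RL(\tau), v_1, \ldots, v_H = \rt$ be the backbone at timestep $\tau$. Since $\tau \in \cR(u)$, the vertex $u$ must lie on this backbone, so $u = v_h$. First I would argue that mass can cross the edge $(u, p(u))$ downward at timestep $\tau$ only because of calls $\DU(v_i, \tau)$ with $i \geq h+1$. Calls to $\DUZ$ move no server mass at all, and for $i \leq h$ the transfer step in $\DU(v_i, \tau)$ moves mass only among nodes lying entirely inside $T_u$, so it contributes nothing to the crossing. Conversely, for each $i \geq h+1$, the source subtrees (of siblings of $v_{i-1}$) lie outside $T_{v_{i-1}} \supseteq T_u$, while the destination $v_0$ is inside $T_u$, so every unit of mass transferred by that call crosses $(u, p(u))$ downward and is counted.

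Next, since $\tau < \tst$, every pair $(v_i, \tau)$ satisfies $(v_i, \tau) \prec (\vstar, \tst)$, so by the inductive hypothesis I can invoke \Cref{invar:movement} for each $\DU(v_i, \tau)$ to bound the mass entering $T_{v_{i-1}}$ by $(\gamma - \loss(v_{i-1}, \tau))/\la^{i-1}$. There is a boundary case in which $\DU(v_i, \tau)$ is not actually called --- this happens exactly when $i \leq i_0$, where $i_0$ is as in \Cref{algo:main} --- but in that case no mass is transferred, and by construction $\tau \in \cR^s(v_{i-1})$, so $\loss(v_{i-1}, \tau) = 0$ and the same bound holds trivially. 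Summing over $i = h+1, \ldots, H$ and discarding $\loss$ at levels strictly above $u$, I obtain $(\gamma - \loss(u,\tau))/\la^h$ from the $i=h+1$ term plus a geometric tail of at most $\gamma/(\la^h(\la-1))$, which rearranges to the claimed bound.

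The main obstacle I anticipate is the boundary case in which no $\DU(v_{h+1}, \tau)$ is actually made (i.e.\ $h < i_0$), since then there is no call at level $h+1$ to which the inductive Invariant can be directly applied. The resolution is to note that in precisely this scenario the main procedure invokes $\DUZ(u, \tau)$, placing $\tau$ in $\cR^s(u)$ and so forcing $\loss(u, \tau) = 0$; after this remark the two cases fit into a single uniform bound and the geometric summation goes through cleanly.
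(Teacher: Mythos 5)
Your proof is correct and follows essentially the same approach as the paper's: decompose the mass crossing $(u,p(u))$ downward at timestep $\tau$ by which ancestor's call to $\DU$ caused it, bound each term via the inductive hypothesis (\Cref{invar:movement}), keep the $\loss$ term only at the immediate-parent level, and sum the geometric tail $\sum_{k\ge 1}\gamma/\la^{h+k}=\gamma/(\la^h(\la-1))$. Your explicit handling of the boundary case where $\DUZ$ rather than $\DU$ is invoked at levels $\le i_0$ (observing these calls move no mass and force $\loss(u,\tau)=0$ via $\tau\in\cR^s(u)$) is a correct detail that the paper's proof leaves implicit.
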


\begin{proof}
  Since $\tau < \tst$, we can apply the induction hypothesis to all
  pairs $(v, \tau)$ where $v$ is an ancestor of $u$.  Servers enter
  $u$ at timestep $\tau$ because of $\DU(w,\tau)$ for some ancestor $w$
  of $u$. When $w$ is the parent of $u$, \Cref{invar:movement} shows
  this quantity is at most $ \frac{\gamma - \loss(u, \tau)}{\la^h}, $
  where $h = \level(u)$. For any other ancestor $w$ of $v$,
  \Cref{invar:movement} implies a weaker upper bound of
  $\frac{\gamma}{\la^{h+k}}$, where $\level(w) = h+k+1$.  Simplifying
  the resulting geometric sum
  $ \frac{\gamma - \loss(u, \tau)}{\la^h} + \sum_{h' \geq h+1}
  \frac{\gamma}{\la^{h'}}$ completes the proof.
\end{proof}

Next, we give a lower bound on the amount of server \emph{moving out}
of some subtree $T_w$. Such transfers out of $w$ takes place in
line~\eqref{l:dtr} with $w$ being either the node $u$ referred to on
this line, or
a descendant of such a node. Moreover,  the server movement out of $T_w$ at timestep
$\tau$ is denoted $g(w, \tau)$, which is non-zero only for those timesteps $\tau$ when
$w$ is not on the corresponding backbone. We split this transfer
amount into two:
\begin{OneLiners}
\item[(i)] $g^\loc(w,\tau)$: the \emph{local component} of the transfer, i.e.,
  due to the increase in $\yv$ variables.
\item[(ii)] $g^\inh(w,\tau)$: the \emph{inherited component} of
  the transfer, i.e., due to the $b^{C_u}$ term.
\end{OneLiners}

\begin{lemma}
  \label{lem:rel1}
 Let $u$ be a non-principal child of $\vstar$ at timestep $\tst$, and $I := (\tau_1, \tst]$ for some timestep $\tau_1 < \tst$. 
  Let
  $S$ be the timesteps in $\cR^{ns}(u) \cap (\tau_1, \tst]$
  that have been removed from $\awake(u)$  
  by the moment when $\DU(\vstar,\tst)$ is called. %
  Then
  \begin{gather*}
    g^\inh(u, (\tau_1,\tst]) \geq \left( 1 + \frac{1}{H} \right) |S|\,
    \frac{\gamma}{\lambda^{\level(u)}} - \sum_{\tau \in S} \frac{\loss(u,
      \tau)}{\lambda^{\level(u)}}.
  \end{gather*}
\end{lemma}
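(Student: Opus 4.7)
The plan is to account for the inherited-component transfer out of $T_u$ by tracking, for each $\tau \in S$, the ``depletion budget'' of the constraints in $\cons^u(\tau)$, and relating its spend to the server moves triggered in line~\eqref{l:dtr}. Each $\tau \in S$ has been removed from $\awake(u)$, so by line~\eqref{l:mark2} every $C \in \cons^u(\tau)$ is depleted; by \Cref{def:slack} and~\eqref{eq:bound},
\begin{align*}
  \bigl(1 + \tfrac{1}{H}\bigr)\, z_C \;=\; \sum_{\tau' \geq \tau} \sum_{\sigma:\, C \in \sigma} z_{C(\vstar, \sigma, \tau')}.
\end{align*}
Here the parent of $u$ is $\vstar$ because $u$ is a non-principal child of $\vstar$ at $\tst$.

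Next, I would classify the terms on the right using \Cref{cl:fill}: each pair $(\sigma, \tau')$ with $C \in \sigma$ falls into exactly one of two cases---either (a) $\tau' = \tau$ and $u$ is $\vstar$'s principal child at time $\tau'$, or (b) $\tau' > \tau$ and $u$ is \emph{not} $\vstar$'s principal child at $\tau'$. In case (b), during the call $\DU(\vstar, \tau')$ the chosen constraint in $\cons^u(\tau)$ is exactly $C$ (since $\prev(u,\tau') = \tau$), and while $z_{C(\vstar,\sigma,\tau')}$ is raised at unit rate, line~\eqref{l:dtr} moves inherited mass out of $T_u$ at rate $b^C/\la^{\level(u)}$. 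Thus case (b) contributes exactly $(b^C/\la^{\level(u)}) \cdot z_{C(\vstar, \sigma, \tau')}$ to $g^\inh(u, \cdot)$, and (crucially) all such $\tau'$ lie in $(\tau, \tst) \subseteq (\tau_1, \tst]$ because $\DU(\vstar,\tst)$ has not yet started and $\tau > \tau_1$.

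Rearranging the depletion equality to isolate the case-(b) part and multiplying by $b^C/\la^{\level(u)}$ gives, for each $\tau \in S$ and $C \in \cons^u(\tau)$,
\begin{align*}
  \text{case-(b) inherited transfer for $C$} \;=\; \frac{b^C}{\la^{\level(u)}}\left[\bigl(1+\tfrac{1}{H}\bigr) z_C - \sum_{\sigma:\, C \in \sigma} z_{C(\vstar, \sigma, \tau)}\right].
\end{align*}
Summing over $C \in \cons^u(\tau)$, the first term becomes $(1 + 1/H)\,\gamma/\la^{\level(u)}$ by \Cref{invar:dual-raised}, and the second term is precisely $\loss(u,\tau)/\la^{\level(u)}$ by~\eqref{eq:lossdef}. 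Hence the contribution of $\tau \in S$ to $g^\inh(u, (\tau_1,\tst])$ is at least
$\bigl[(1+1/H)\gamma - \loss(u,\tau)\bigr]/\la^{\level(u)}$.

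Finally, I would sum over $\tau \in S$. No double-counting arises because each dual variable $z_{C(\vstar, \sigma, \tau')}$ contains exactly one constraint from $\cons^u(\cdot)$, so it is counted for at most one $(\tau, C)$ pair. The main subtlety---and the step I expect to require the most care---is the bookkeeping between the two cases of \Cref{cl:fill}: I must argue both that (i)~each case-(b) transfer actually happens during a call $\DU(\vstar, \tau')$ with $\tau' \in (\tau_1, \tst]$ (using that $\DU(\vstar,\tst)$ has not yet begun and that $\tau < \tst$ because $u$ is non-principal at $\tst$), and (ii)~exactly the case-(a) terms are absorbed into $\loss(u,\tau)$, matching the definition in~\eqref{eq:lossdef} which restricts the inner sum to $\tau' = \tau$. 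Putting these together yields the claimed inequality.
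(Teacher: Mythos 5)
Your proof is correct and follows essentially the same route as the paper's: both start from the depletion equality for each $C \in \cons^u(\tau)$ (since $\tau$ was removed from $\awake(u)$), split the RHS into $\tau'=\tau$ (absorbed into $\loss(u,\tau)$ via~\eqref{eq:lossdef}) and $\tau'>\tau$ (which by \Cref{cl:fill} triggers inherited transfer at rate $b^C/\la^{\level(u)}$ in line~\eqref{l:dtr}), invoke \Cref{invar:dual-raised} to get $\sum_C b^C z_C = \gamma$, and sum over $\tau\in S$. The paper tracks the same quantities through explicit timer phases $[s_1(\tau',C),s_2(\tau',C)]$, which your argument replaces with the more direct observation that each dual variable $z_{C(\vstar,\sigma,\tau')}$ pairs with exactly one $(\tau,C)$ via $\sigma_u$; these are equivalent bookkeeping devices for the same argument.
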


\begin{proof}
  Consider timesteps $\tau\in S$ and $\tau' \geq \tau$ such that
  $\tau' \in \fll(u, \tau)$. (We use the term phase here to denote a range of values of
  the timer $s$.) Consider the phase during $\DU(\vstar, \tau')$
  when $\tau_u' := \prev(u, \tau')$ equals $\tau$: since
  $\tau' \in \fll(u, \tau)$, we know that there will be such a
  phase.  %
  Whenever
  we raise the timer $s$ by a small $\varepsilon$ amount during this phase,
  we raise some dual variable $z_{C(\vstar,\sigma,\tau')}$ by the
  same amount, where $\sigma$ contains a constraint $C$ from
  $\cons^u(\tau)$. Thus we contribute $\varepsilon$ to the LHS
  of~\eqref{eq:bound} for constraint $C$.  For such a constraint
  $C$, let $[s_1(\tau',C), s_2(\tau',C)]$ be the range of the timer
  $s$ during which we raise a dual variable of the form
  $z_{C(\vstar,\sigma,\tau')}$ such that $C \in \sigma$.

  The timestep $\tau$ was removed from $\awake(u)$ by
  line~\eqref{l:mark2} because~\eqref{eq:bound} became tight for all
  constraints $C \in \cons^u(\tau)$, so:
  \begin{align}
   \left( 1 + \frac{1}{H} \right) \sum_{C \in \cons^u(\tau)} b^C z_C &
     = \sum_{C \in \cons^u(\tau)} b^C \sum_{C(\vstar, \sigma, \tau'): C \in \sigma} z_{C(\vstar,\sigma,\tau')} \notag \\
    \intertext{Now the definition of $\loss(u, \tau)$  allow us to split the expression on
    the RHS as follows:}
   & = \loss(u, \tau) + \sum_{C \in \cons^u(\tau)} b^C \sum_{C(\vstar,
     \sigma, \tau'): C \in \sigma, \tau' > \tau} z_{C(\vstar,\sigma,\tau')} \notag \\
   & = \loss(u, \tau) + \sum_{\tau' \in \fll(u, \tau), \tau' > \tau} \sum_{C \in \cons^u(\tau)} b^C \Big(s_2(\tau',C)-s_1(\tau',C)\Big). \label{eq:calc1} 
  \end{align}

  We now bound the second expression on the RHS in another way.
  For a timestep $\tau' \in \fll(u, \tau)$ with
  $\tau' > \tau$, consider the phase when timer $s$ lies
  in the range $[s_1(\tau',C),s_2(\tau',C)]$ for a constraint
  $C \in \cons^u(\tau)$. Since $\tau' > \tau$, \Cref{cl:fill} implies
  that $u$ is not the principal child of $\vstar$ at timestep $\tau'$, so
  raising $s$ by $\varepsilon$ units during this phase means that
  line~\eqref{l:dtr} moves $\varepsilon\cdot \frac{b^C }{\la^h}$ servers
  \emph{out} of $T_u$, where $h := \level(u)$. Hence the increase in $g^\inh(u, I)$ due to
  transfers corresponding to timestep $\tau \in S$ is at least
  \begin{align*}
  \sum_{\tau' \in \fll(u, \tau), \tau' > \tau} \sum_{C \in \cons^u(\tau)} \frac{b^C (s_2(\tau',C)-s_1(\tau',C))}{\la^h} 
  & ~~\stackrel{\text{by~\eqref{eq:calc1}}}{=}~~
  \left( 1 + \frac{1}{H} \right) \sum_{C \in \cons^u(\tau)} \frac{b^C z_C}{\la^h} - \frac{\loss(u, \tau)}{\la^h} \\
  &= \left( 1 + \frac{1}{H} \right) \frac{\gamma}{\la^h} - \frac{\loss(u, \tau)}{\la^h}. 
  \end{align*}
  The final equality above uses $(u,\tau) \prec (\vstar, \tst)$,
  because $\tau$ had been removed from $\awake(u)$ %
  before the call to $\DU(\vstar, \tst)$, which means we can use the
  induction hypothesis \Cref{invar:dual-raised} for timestep $\tau
  \in \cR^{ns}(u)$. Finally, summing
  over all timesteps in $S$ completes the proof.
\end{proof}

\begin{corollary}
  \label{cor:1}
  Let $u$ be a non-principal child of $\vstar$ at timestep $\tst$, and
  $I := (\tau_1, \tst]$. Consider the moment when $\DU(\vstar, \tst)$
  is called. If none of the timesteps in $I \cap \cR(u)$ belong to
  $\awake(u)$, then
  \begin{OneLiners}
  \item[(i)] $g^\inh(u,I) \geq r(u,I)$,
  \item[(ii)] $0 \leq g^\loc(u,I) \leq \Diff(u,I)$, and
  \item[(iii)] $g^\loc(u, I) \geq \y{\vstar}(u,I)$.
  \end{OneLiners}
  Finally, $b^C > 0$ for any constraint $C \in \fL^\vstar(\tst)$ of the form $a^C \cdot \yv  \geq b^C$. 
\end{corollary}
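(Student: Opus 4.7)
The plan is to prove (i)--(iv) in order. For (i), observe that $\cR^s(u) \sse \awake(u)$ always (solitary timesteps are never removed), so the hypothesis forces $I \cap \cR^s(u) = \emptyset$ and $I \cap \cR(u) = S \sse \cR^{ns}(u)$. Since server mass only enters $T_u$ at timesteps in $\cR(u)$, we have $r(u,I) = \sum_{\tau \in S} r(u,\tau)$; applying \Cref{cl:serverupperbound} to each $\tau \in S$ and summing gives
\[
  r(u,I) \;\le\; \left(1+\tfrac{1}{\lambda-1}\right)|S|\tfrac{\gamma}{\lambda^{h}}-\sum_{\tau\in S}\tfrac{\loss(u,\tau)}{\lambda^{h}},
\]
with $h=\level(u)$. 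Comparing this with the lower bound on $g^\inh(u,I)$ from \Cref{lem:rel1} and using $\lambda \geq 10H$ (so $1/H \geq 1/(\lambda-1)$) yields $g^\inh(u,I) \geq r(u,I)$. Part (ii) is then immediate: $g^\loc(u,I) \geq 0$ because $\y{\vstar}$ only rises, and $\Diff(u,I) - g^\loc(u,I) = g^\inh(u,I) - r(u,I) \geq 0$ by (i).

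For (iii), the primal variable $\y{\vstar}(u,t)$ is raised only inside some past call $\DU(\vstar,\tau')$ with $\tau' \leq \tst$, and only for $t \in S_u^{(\tau')} \sse (\tau_u^{(\tau')}, \tau']$, so necessarily $t \leq \tau'$. Hence any positive raise of $\y{\vstar}(u,t)$ with $t \in I = (\tau_1,\tst]$ must come from a call $\DU(\vstar,\tau')$ with $\tau' \geq t > \tau_1$, i.e.\ $\tau' \in I$. During such a call the local component of the transfer across $(u,\vstar)$ integrates to $\sum_{t \in S_u^{(\tau')}} \Delta \y{\vstar}(u,t)$, so
\[
  g^\loc(u,I)=\sum_{\tau'\in I}\sum_{t\in S_u^{(\tau')}}\Delta\y{\vstar}(u,t)\;\ge\;\sum_{\tau'\in I}\sum_{t\in I\cap S_u^{(\tau')}}\Delta\y{\vstar}(u,t)=\y{\vstar}(u,I),
\]
where the last equality uses that raises at $t \in I$ cannot originate from $\tau' \notin I$: past calls at $\tau' \leq \tau_1$ only raise $\y{\vstar}$ at $t \leq \tau_1$, and future calls have not yet occurred.

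For (iv), every $C \in \fL^{\vstar}(\tst)$ is of the form $C(\vstar,\sigma,\tst)$ with
\[
  b^C \;=\; \sum_{u\in U}\bigl(\Diff(u,I_u)+b^{C_u}\bigr)+\bigl(n_{\vstar}-\tsty\sum_{u\in U}n_u\bigr)\delta,
\]
by~\eqref{eq:cons}. For the principal child $u_0$ the interval $I_{u_0}$ is empty, so $\Diff(u_0,I_{u_0})=0$. For each non-principal $u \in U \setminus \{u_0\}$, $\tau_u = \prev(u,\tst)$ guarantees no awake timestep in $I_u \cap \cR(u)$, so part~(ii) applies and $\Diff(u,I_u) \geq g^\loc(u,I_u) \geq 0$. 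The induction hypothesis \Cref{invar:dual-raised} (applicable since $(u,\tau_u) \prec (\vstar,\tst)$) gives $b^{C_u} > 0$ for each $C_u \in \cons^u(\tau_u)$. Since the elements of $U$ are distinct children of $\vstar$ with pairwise disjoint subtrees, $n_{\vstar} - \sum_{u \in U} n_u \geq 1$, so the additive $\delta$-term is strictly positive, yielding $b^C > 0$. The main obstacle is the accounting in~(iii): because $\y{\vstar}(u,t)$ can be raised at a ``past'' $t$ during a \emph{current} \DU\ call at $\vstar$, the sum $\y{\vstar}(u,I)$ is not naturally a per-call quantity, and one must carefully argue that any raise at $t \in I$ forces the triggering call's timestep $\tau'$ to also lie in $I$.
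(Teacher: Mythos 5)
Your treatment of (i)--(iii) matches the paper's proof essentially step for step: (i) uses the same combination of \Cref{cl:serverupperbound} (upper bound on $r$) and \Cref{lem:rel1} (lower bound on $g^\inh$), cancelling the $\loss$ terms and using $\lambda \geq 10H$; (ii) is the identity $\Diff - g^\loc = g^\inh - r$; and (iii) is the same ``raising $\yv(u,t)$ at a past $t$ during a call at time $\tau'' \geq t$'' argument, with your bookkeeping being slightly more explicit than the paper's one-sentence version. Your added observation that $I_{u_0} = \emptyset$ for the principal child in part (iv) is a reasonable and correct refinement of the paper's more terse ``$\Diff(u,I_u)\geq 0$ by (ii) above.''

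The one genuine divergence is in the final statement, $b^C > 0$. You assert that \emph{every} $C \in \fL^\vstar(\tst)$ is of the form $C(\vstar,\sigma,\tst)$, which is true when $\DU(\vstar,\tst)$ is the routine being executed (so $\tst \in \cR^{ns}(\vstar)$). The paper's proof, however, also explicitly handles the case where $C$ is a $\bot$-constraint added by $\DUZ(\vstar,\tst)$, showing $b^C = 1 - k_{\vstar,\tst} - 2\delta(n-n_\vstar) \geq \delta' - 2\delta n > 0$ since $k_{\vstar,\tst} \leq 1-\delta'$ (the \textbf{while} loop has not terminated). This case is needed to complete the inductive verification of the ``$b^C>0$ for all $C \in \fL^v(\tau), \tau \in \cR(v)$'' clause of \Cref{invar:dual-raised} for solitary timesteps $\tau \in \cR^s(\vstar)$; there is no other place in the paper that establishes this. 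Your reading of ``consider the moment when $\DU(\vstar,\tst)$ is called'' is defensible given the literal wording of the corollary, but if the corollary is meant to carry the full inductive burden for $b^C > 0$ (as the paper's own proof suggests), the $\bot$-constraint case is missing from your argument and should be supplied.
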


\begin{proof}
  Since timesteps in $\cR^s(u)$ always stay awake,
  $I \cap \cR(u) = I \cap \cR^{ns}(u)$; call this set $S$. Since $u$
  is a non-principal child at timestep $\tst$, we have
  $\tst \not\in \cR^{ns}(u)$.  This means $\tau < \tst$ for any
  $\tau \in S$, and so \Cref{cl:serverupperbound} gives an \emph{upper
    bound} on the server movement \emph{into} $u$ at timestep $\tau$,
  and \Cref{lem:rel1} gives a \emph{lower bound} on the server
  movement \emph{out of} $u$. Combining the two,
  \begin{align}
    g^\inh(u,I) - r(u,I) \geq 
    \left( \frac1H - \frac{1}{\lambda-1} \right) \frac{\gamma
    |S|}{\lambda^h} \geq \frac{4}{5H}\cdot \frac{\gamma
    |S|}{\lambda^h} \geq 0,  \label{eq:g-minus-r}
  \end{align}
  since $\la \geq 10H$ and $H \geq 2$, which proves~(i). To prove~(ii),
  \[ g^\loc(u,I) = (g-g^\inh)(u,I) \stackrel{\text{by~(i)}}{\leq} (g-r)(u,I)
    \stackrel{\text{by defn.}}{=} \Diff(u,I). \]
To prove~(iii), whenever we raised $\y{\vstar}(u, \tau')$ for some
  timestep $\tau'$, we 
  raised $g^\loc(u, \tau'')$ for some $\tau'' \geq \tau'$) with the same
  rate. 
  Both timesteps $\tau', \tau''$ appear before $\tst$, because
  we consider the moment when we call $\DU(\vstar, \tst)$. Since
  interval $I$ ends at $\tst$, it must contain either only $\tau''$ or
  both $\tau', \tau''$, giving us that
  $g^\loc(u, I) \geq \y{\vstar}(u,I)$.

  \agnote{Stopping here.}
  We now prove the final statement. If $C \in \fL^\vstar(\tst)$ is a
  $\bot$-constraint $C$ added by $\DUZ(\vstar, \tst)$. 
  $b^C = 1 - k_{\vstar, \tst} - 2 \delta(n-n_\vstar)$ (using~\eqref{eq:implied1}). Since $k_{\vstar,\tst} \leq 1 - \delta'$ (otherwise the {\bf while} loop in \Cref{algo:main} would have terminated), we see that $b^C \geq \delta' - 2 \delta n > 0$. 
  The other case is when $C$ is of the form $C(\vstar,\sigma,\tst) $
  as in~\eqref{eq:cons}. By the induction hypothesis (\Cref{invar:dual-raised}), $b^{C_u} > 0$ and $\Diff(u, I_u) \geq 0$ by~(ii) above. Since $n_\vstar > \sum_{u \in U} n_u$, it follows that $b^{C} > 0$.  
\end{proof}

Having proved all the supporting claims,
we start off with proving that the second statement in~\Cref{invar:dual-raised} holds at $(\vstar, \tst)$. 
\begin{claim}[Principal Node Awake]
  \label{cl:u0}
  Suppose we call $\DU(\vstar,  \tst)$. If $u$ is the principal child of $\vstar$  at timestep
  $\tst$, this call does not remove the timestep $\tst$ from 
  $\awake(u)$.
\end{claim}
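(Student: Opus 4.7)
My plan is to argue by contradiction, leveraging the inductive hypothesis for Invariant~\eqref{eq:dualinvariant} at the strictly smaller pair $(u, \tst) \prec (\vstar, \tst)$, combined with the composition rule~\eqref{eq:cons}. I split into two cases depending on whether $\tst$ is solitary or non-solitary for $u$.

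If $\tst \in \cR^s(u)$, then $\cons^u(\tst)$ consists of a single $\bot$-constraint, which is slack by convention; since \Cref{l:mark2} removes $\tst$ from $\awake(u)$ only when all constraints in $\cons^u(\tst)$ are depleted, this case is immediate.

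For $\tst \in \cR^{ns}(u)$, the main procedure has already invoked $\DU(u, \tst)$ before $\DU(\vstar, \tst)$, so the inductive form of~\eqref{eq:dualinvariant} at $(u, \tst)$ gives $\sum_{C \in \cons^u(\tst)} b^C z_C = \gamma$ at the start of (and hence throughout) $\DU(\vstar, \tst)$, since these $z_C$ are no longer modified. Suppose toward contradiction that at some moment during $\DU(\vstar, \tst)$ the timestep $\tst$ is removed from $\awake(u)$. Until this moment $\tst \in \awake(u)$, so $\prev(u, \tst) = \tst$, and every composed constraint $C(\vstar, \sigma, \tst)$ raised in the repeat-loop up to this moment picks its principal-child component from $\cons^u(\tst)$; call this component $C^\sigma \in \cons^u(\tst)$. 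Moreover, no call $\DU(\vstar, \tau')$ with $\tau' > \tst$ has yet occurred, so the right-hand side of~\eqref{eq:bound} for any $C \in \cons^u(\tst)$ has been populated solely by $\{z_{C(\vstar, \sigma, \tst)}\}$. Summing the tightness of~\eqref{eq:bound} weighted by $b^C$ over all $C \in \cons^u(\tst)$ gives
\[
(1 + 1/H)\, \gamma \;=\; \sum_{\sigma} b^{C^\sigma}\, z_{C(\vstar, \sigma, \tst)}.
\]

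It remains to compare the right-hand side with the dual objective of $\cons^\vstar(\tst)$. The composition rule~\eqref{eq:cons}, \Cref{cor:1}(ii) (giving $\Diff(u', I_{u'}) \geq 0$ for non-principal children $u' \in U$), the identity $\Diff(u, I_u) = 0$ (since $I_u = \emptyset$ for the principal child), and $n_\vstar \geq 1 + \sum_{u' \in U} n_{u'}$ together imply $b^{C(\vstar, \sigma, \tst)} \geq b^{C^\sigma}$ for every $\sigma$. Hence
\[
\sum_\sigma b^{C^\sigma} z_{C(\vstar, \sigma, \tst)} \;\leq\; \sum_\sigma b^{C(\vstar, \sigma, \tst)} z_{C(\vstar, \sigma, \tst)} \;\leq\; \gamma,
\]
the last inequality because $\DU(\vstar, \tst)$ halts as soon as this dual objective hits $\gamma$. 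Combining yields $(1 + 1/H)\gamma \leq \gamma$, a contradiction. The main obstacle is the bookkeeping: one must keep the induction grounded at $(u, \tst) \prec (\vstar, \tst)$, certify that each $\sigma$ raised prior to the putative removal of $\tst$ draws its principal-child component from $\cons^u(\tst)$ (which itself depends on $\tst$ still being awake up to that moment), and then exploit $b^{C(\vstar, \sigma, \tst)} \geq b^{C^\sigma}$ so that the $1/H$ slack in~\eqref{eq:bound} cannot be absorbed.
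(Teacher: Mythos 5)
Your proof is correct and follows essentially the same contradiction argument as the paper: you use the inductive hypothesis (\Cref{invar:dual-raised}) at $(u,\tst)$ to get $\sum_{C \in \cons^u(\tst)} b^C z_C = \gamma$, use depletion to express $(1+\nicefrac1H)z_C$ as a sum of the new dual variables $z_{C(\vstar,\sigma,\tst)}$, and then invoke $b^{C(\vstar,\sigma,\tst)} \geq b^{C^\sigma}$ (via~\eqref{eq:cons}, \Cref{cor:1}(ii), and $n_\vstar > \sum_{u'\in U} n_{u'}$) to pit $(1+\nicefrac1H)\gamma$ against the stopping bound $\gamma$. The only cosmetic difference is that you explicitly split out the $\tst \in \cR^s(u)$ case (immediate, since $\bot$-constraints stay slack), which the paper leaves implicit.
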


\begin{proof}
  At the beginning of the call to $\DU(\vstar, \tst)$, the timestep
  $\tst$ has just been added to $\cR(u)$ (and to
  $\awake(u)$) in the call to $\DU(u, \tst)$ or to $\DUZ(u, \tst)$, and
  cannot yet be removed from $\awake(u)$. So we start with
  $\tau_{u} = \tst$. For a contradiction, if we remove $\tst$ from
  $\awake(u)$ in line~\eqref{l:mark2}, then all the constraints in
  $\cons^{u}(\tst)$ must have become depleted. For each such
  constraint $C \in \cons^{u}(\tst)$, the contributions to the RHS
  in~\eqref{eq:bound} during this procedure come only from the
  newly-added constraints
  $C(\vstar,\sigma,\tst) \in \cons^\vstar(\tst)$. So if all
  constraints in $\cons^{u}(\tst)$ become depleted, the total dual
  objective raised during this procedure is at least
  $$ \sum_{C \in \cons^{u}(\tst)}\; \sum_{C(\vstar,\sigma,\tst) \in
    \cons^\vstar(\tst): C \in \sigma} b^{C(\vstar,\sigma,\tst)} \;
  z_{C(\vstar,\sigma,\tst)} \geq \left(1 + \nicefrac1H \right) \sum_{C \in
    \cons^{u}(\tst)} b^C\,z_C, $$ where we use that
  $b^{C(\vstar,\sigma,\tst)} \geq b^C$ (because in~\eqref{eq:cons}, $b^{C_u} \geq 0$ by the induction hypothesis (\Cref{invar:dual-raised}) and  $\Diff(u,I_u) \geq 0$  by~\Cref{cor:1}),  and that each
  constraint in $\cons^{u}(\tst)$ satisfies~(\ref{eq:bound}) at
  equality. The induction hypothesis \Cref{invar:dual-raised}
  applied to $(u, \tst)$ implies that
  $\sum_{C \in \cons^{u}(\tst)} b^C\,z_C = \gamma$, so the RHS above
  is $(1+\nicefrac1H) \gamma$. So the total dual increase
  during $\DU(\vstar,\tst)$, which is at least the LHS above, is
  strictly more than $\gamma$, contradicting the stopping condition of $\DU(\vstar, \tst)$.
\end{proof}

Next, we prove the remainder of the inductive step, namely that
\Cref{invar:dual-raised,invar:movement} are satisfied with respect
to $(\vstar, \tst)$ as well. 

\begin{claim}[Inductive Step: Active Siblings Exist]
  \label{cl:dualtr}
  Consider the call $\DU(\vstar, \tst)$, and let $u_0$ be 
  the principal child of $\vstar$ at this timestep. Suppose
  $\sib(u_0, \tst) \neq \emptyset$. Then the dual objective value
  corresponding to the constraints in $\cons^\vstar(\tst)$ equals
  $\gamma$; i.e.,
  \[ \sum_{C \in \cons^\vstar(\tst)} z_C \, b^C = \gamma. \] Moreover, 
  the server mass entering $T_{u_0}$ going to the requested node in this
  call is at most
  \[  \frac{\gamma - \loss(u_0, \tst)}{\lambda^{\level(u)}}. \]
\end{claim}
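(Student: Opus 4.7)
The plan is to prove the two assertions separately.

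For the first, the inner \textbf{while} loop of $\DU(\vstar, \tst)$ raises $z_{C(\vstar, \sigma, \tst)}$ at rate~$1$ in the timer~$s$, so the dual objective $\sum_C b^C z_C$ grows at the strictly positive rate $b^{C(\vstar, \sigma, \tst)}$ (positivity from the final statement of~\Cref{cor:1}). Thus the procedure can only fail to reach the stopping threshold $\gamma$ if line~\eqref{l:chooseC} runs out of slack constraints with which to replace a depleted one in $\sigma$. For $u_0$ this is ruled out by~\Cref{cl:u0}, which asserts $\tst$ stays in $\awake(u_0)$ throughout the call, which by definition means $\cons^{u_0}(\tst)$ always contains at least one slack constraint. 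For any $u \in U \setminus \{u_0\}$, once every constraint in $\cons^u(\tau_u)$ has been depleted, line~\eqref{l:mark2} removes $\tau_u$ from $\awake(u)$ and we refresh $\tau_u := \prev(u, \tst)$; \Cref{cl:unmarkeddefined} guarantees this is well-defined, and the regression eventually reaches a $\bot$-constraint which is slack by convention. Hence the process does not stall, so the \textbf{repeat} loop terminates with $\sum_{C \in \cons^\vstar(\tst)} b^C z_C = \gamma$ exactly.

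For the second assertion, I combine the identity just proved with the definition $\loss(u_0, \tst) = \sum_\sigma b^{C_{u_0}(\sigma)} z_{C(\vstar,\sigma,\tst)}$ to obtain
\[
\gamma - \loss(u_0, \tst) \;=\; \sum_\sigma z_{C(\vstar,\sigma,\tst)} \Bigl(b^{C(\vstar,\sigma,\tst)} - b^{C_{u_0}(\sigma)}\Bigr),
\]
and expand the parenthesized factor via the composition rule~\eqref{eq:cons}, using $I_{u_0} = \varnothing$ (so $\Diff(u_0, I_{u_0}) = 0$). Separately, integrating the transfer rate of line~\eqref{l:dtr} over the whole procedure expresses the total server mass entering $T_{u_0}$ as a \emph{local} contribution $\sum_{u \ne u_0} \Delta\y{\vstar}(u, \cdot)$ plus an \emph{inherited} contribution $\tfrac{1}{\lambda^h}\sum_\sigma z_{C(\vstar,\sigma,\tst)} \sum_{u \ne u_0} b^{C_u(\sigma)}$. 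The inherited piece matches the corresponding chunk of the expansion exactly, so the claim reduces to the pointwise bound (at every moment of the inner loop):
\[
\sum_{u \ne u_0} \Bigl(\y{\vstar}(u, S_u) + \tfrac{|S_u|\gamma}{Mn}\Bigr) \;\leq\; \sum_{u \ne u_0} \Diff(u, I_u) + \Bigl(n_\vstar - \sum_{u \in U} n_u\Bigr) \delta.
\]

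To verify this pointwise inequality, the crucial input is that for each $u \in U \setminus \{u_0\}$ the interval $I_u = (\tau_u, \tst]$ contains no awake timesteps of $\cR(u)$ (otherwise $\prev(u, \tst) > \tau_u$), so the hypotheses of~\Cref{cor:1} are met. Combining~\Cref{cor:1}(ii) and~(iii) with the intermediate estimate $g^\inh(u, I_u) - r(u, I_u) \geq \tfrac{4\gamma \lvert I_u \cap \cR^{ns}(u)\rvert}{5H\la^h}$ extracted from the proof of~\Cref{lem:rel1} yields $\Diff(u, I_u) - \y{\vstar}(u, S_u) \geq \tfrac{4\gamma(|S_u|-1)}{5H\la^h}$, where the $-1$ sentinel accounts for $\tau_u + \eta \in S_u$ possibly not lying in $\cR^{ns}(u)$. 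The hypothesis $\la \geq 10H$ together with the parameter choice $M \geq \Mbound$ ensures $4Mn \geq 5H\la^h$, so the $\lvert S_u\rvert\gamma/(Mn)$ slack on the LHS is dominated by the $\Diff$-surplus on the RHS; the residual constant from the $-1$ sentinel is absorbed by the additive $(n_\vstar - \sum_{u \in U} n_u)\delta$ that was explicitly baked into the truncated constraint~\eqref{eq:implied1}. Integrating over $s$ then produces the desired upper bound on the server mass entering $T_{u_0}$.

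The principal technical obstacle is precisely this last absorption step: the multiplicative update rule of $\y{\vstar}$ carries a small additive floor $\gamma/(Mn\la^h)$ with no natural inherited counterpart on the transfer side, so the inequality only closes by combining (i)~the $\Diff$-surplus from the prior unmarking of timesteps, quantified by~\Cref{lem:rel1}, and (ii)~the $-2\delta(n-n_v)$ weakening of the truncated constraint~\eqref{eq:implied1}. The parameter choices $M = \Mbound$, $\delta = \tfrac{1}{10n^3}$, and $\gamma = \tfrac{1}{n^4}$ were tuned for exactly this accounting.
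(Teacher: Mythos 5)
Your proof of the second assertion takes a genuinely different route from the paper's. The paper introduces the cumulative quantity $Y(s)$ (total rise of the $\y{\vstar}$ variables since the call started) and proves the Subclaim $Y(s) < \gamma$; this is used to pass from $\Diff_0(u,I_u(s))$ (frozen at the start of the call, where \Cref{cor:1} is actually established) to $\y{\vstar}_s(u,I_u(s))$, with the leftover $\gamma$ absorbed by the $\delta$ margin in~\eqref{eq:implied1}. You instead try to prove a pointwise surplus $\Diff_s(u,I_u(s)) - \y{\vstar}_s(u,S_u(s)) \gtrsim (|S_u|-1)\gamma/(Mn)$ directly from the per-asleep-timestep estimate inside \Cref{lem:rel1}, and absorb the $|S_u|\gamma/(Mn)$ slack of the update rule into that surplus plus the $\delta$ margin. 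The arithmetic with $M$, $\delta$, $\gamma$, $\la \geq 10H$ does close. This is a cleverer and more local accounting that avoids the $Y(s)<\gamma$ subclaim entirely; it's a real alternative decomposition, not a rephrasing.

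That said, two things need to be filled in. First, \Cref{cor:1} (and the estimate $g^\inh(u,I)-r(u,I) \geq 4\gamma|S|/(5H\la^h)$ you extract from \Cref{lem:rel1}) is proved only ``at the moment $\DU(\vstar, \tst)$ is called'' --- that is, at timer $s=0$, with $S$ the set of timesteps already asleep at that moment. You invoke these bounds at every value of the timer, including after additional timesteps of $u$ are put to sleep \emph{during} the current call (which is what makes $I_u$ and $S_u$ grow). The paper's $Y(s)<\gamma$ subclaim is precisely the device that lets the argument avoid re-proving \Cref{cor:1} and \Cref{lem:rel1} mid-call; having dropped it, you must argue that the per-timestep surplus also accrues for timesteps depleted during the current $\DU(\vstar,\tst)$ (it does --- the dual raise that depletes them comes with the matching inherited transfer in line~\eqref{l:dtr} of the \emph{same} call --- but this needs to be said). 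Second, your proof of the first assertion only rules out stalling because of an exhausted slack-constraint pool; it omits the paper's observation that each subtree $T_u$, $u \in \sib(u_0,\tst)$, starts with at least $\delta$ server mass and that the total outflow is $\leq \gamma \ll \delta$, so the transfer prescribed by line~\eqref{l:dtr} never runs dry and \Cref{invar:active} stays intact. Without this, the existence of slack constraints alone does not guarantee the procedure is well-defined all the way until the dual objective reaches $\gamma$.

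Finally, note a mild ordering tension: you prove the first assertion and then feed the identity $\sum_C b^C z_C = \gamma$ into the second, while the paper proves the second assertion first --- using only that the dual objective is \emph{at most} $\gamma$ (the stopping condition) --- and only afterward uses the resulting server-transfer bound to close the first. Your ordering is not circular as written, but if you were to repair the server-mass gap in the first assertion the paper's ordering is the one that works without strengthening any intermediate step.
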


\begin{proof}
  Let $U' := \sib(u_0, \tst)$ be the non-principal children of
  $\vstar$ at timestep $\tst$; let $U := \{u_0\} \cup U'$ as in \DU.
  The identity of the timesteps $\tau_u$ and intervals $I_u$ change
  over the course of the call, so we need notation to track them
  carefully.  Let
  $I_u(s')$ be the set $I_u$ when the timer value is $s'$; similarly, 
  let $\Diff_{s}(u, I_u(s'))$ be the value of
  $\Diff(u,I_u(s'))$ when the timer value is $s$, and  
  $\y{\vstar}_{s}(u, I_u(s'))$ is defined similarly.

  For $u \in U'$, \Cref{cor:1}(ii,iii) implies that for
  any interval $I_u(s)$, 
  \begin{gather}
    \Diff_{0}(u, I_u(s)) \geq \y{\vstar}_{0}(u,I_u(s)). \label{eq:delta-y}
  \end{gather}
  Since the timestep $\tst$ stays awake for the principal child $u_0$ (due
  to \Cref{cl:u0}), the interval $I_{u_0}(s)$ equals $(\tst, \tst]$,
  which is empty, for all values of the timer $s$.
  
  The dual increase is \emph{at most} $\ga$ due to the stopping
  criterion for \DU, so we need to show this quantity reaches $\ga$.
  Indeed, suppose we raise the timer from $s$ to $s+ds$ when
  considering some constraint $C_s(v, \sigma, \tau)$---the subscript
  indicates the constraint considered at that value of timer $s$. The dual objective increases by 
   $b^{C_s(v, \sigma, \tau)} \, ds$. We now use the definition of $b^{C_s(v, \sigma, \tau)}$
  from~(\ref{eq:cons}), substitute
  $(n_v - \sum_{u \in U} n_u) \geq 1$, and use that all $b^{C_u}$ terms in the
  summation are non-negative (by~\Cref{invar:dual-raised}) to drop these terms. This gives the first inequality below (recall that $I_{u_0}(s)$ stays empty):
  \begin{align}
    \label{eq:transy}
    b^{C_s(v, \sigma, \tau)} &\geq \sum_{u \in U'} \Diff_{s}(u, I_u(s)) +
                             \delta \geq \sum_{u \in U'}
                             \Diff_{0}(u, I_u(s)) + \delta 
                            \geq \sum_{u \in U'} \y{\vstar}_{0}(u,I_u(s))  + \delta.
  \end{align}
  The second inequality above uses that $\Diff_s \geq \Diff_0$ for
  non-principal children, and the third uses~(\ref{eq:delta-y}).
  Let
  \[ Y(s):= \sum_{\tau'} \sum_{u \in U'} \Big(\y{\vstar}_s(u, \tau') -
    \y{\vstar}_0(u, \tau')~\Big) \] to be the total increase in
the   $\y{\vstar}$ variables during $\DU(\vstar, \tst)$ until the timer
  reaches $s$. This is also the total amount of server transferred to the
  requested node due to the \emph{local component} of transfer in
  line~\eqref{l:dtr} until this moment.

  \begin{subclaim}
  \label{subcl:Y}
    $Y(s) < \gamma$. %
  \end{subclaim}
  \begin{subproof}
    Suppose not, and let $s^\star$ be the smallest value of the timer such that
    $Y(s^\star) = \gamma$. Note that $Y(s)$ is a continuous non-decreasing function of $s$. 
    For any $s \in [0,s^\star)$, we get
    $Y(s) < Y(0) + \gamma$, where $Y(0)=0$ by definition. Since the
    intervals $I_u(s') \sse I_u(s)$ for $s' \leq s$, all the increases
    in the $\y{\vstar}$ variables during $[0,s]$ correspond to timesteps in $I_u(s)$. Thus
    for any $s < s^\star$,
    \begin{gather}
      Y(0) + \gamma > Y(s) \quad \implies \quad \sum_{u \in U'}
      \y{\vstar}_0(u, I_u(s)) + \gamma > \sum_{u \in U'}
      \y{\vstar}_s(u, I_u(s)). \label{eq:YY}
    \end{gather}
    The dual increase during
    $[s, s+ds]$ is
    \begin{align*}
      b^{C_s(v, \sigma, \tau)} \, ds
      &\stackrel{\text{by~(\ref{eq:transy},\ref{eq:YY})}}{>} \Big( \sum_{u \in U'}
                                       \y{\vstar}_s(u, I_u(s)) +
                                       \delta - \gamma \Big)\, ds \\
      &= \Big(\la^h \,dY(s) - \frac{\gamma}{Mn } \sum_{u \in U'} |S_u| %
      \, ds\Big) + (\delta-\gamma)\, ds > \la^h \,
        dY(s) \geq dY(s).
    \end{align*}
    The second line uses (a) the update rule in
    line~(\ref{l:draise2}) with $dY(s)$ denoting $Y(s+ds) -
    Y(s)$, (b) that $M \geq |S_u|$ %
    and $|U'| \leq
    n$, so the
    second expression is bounded by $\gamma$, and (c) that
    $\delta > 2\gamma$. Integrating over $[0,s^\star]$, the total dual
    increase is strictly more than
    $Y(s^\star) = \gamma$, which contradicts the stopping condition of $\DU$.
  \end{subproof}
  Combining \Cref{subcl:Y} (and specifically its implication \eqref{eq:YY}) with \eqref{eq:delta-y} implies that for all
  values $s$ of the timer:
  \begin{align}
      \label{eq:timery1}
      \sum_{u \in U'} \y{\vstar}_s(u, I_u(s)) < \sum_{u \in U'} \Diff_{0}(u, I_u(s)) + \gamma.  
  \end{align}
  Therefore, the increase in dual objective during $[s, s+ds]$ is at least
  \begin{align*}
        b^{C_s(v, \sigma, \tau)}\, ds &\stackrel{\eqref{eq:cons}}{\geq} \bigg(\sum_{u \in U'} \Big( \Diff_{0}(u, I_u(s)) +
                                                                     b^{C_{u,s}} \Big)  + \delta  + b^{C_{u_0}} \bigg)\, ds  \\
    &\stackrel{\eqref{eq:timery1}}{>} \bigg(\sum_{u \in U'} \Big( \y{\vstar}_s(u, I_u(s)) + 
      b^{C_{u,s}} \Big)  + (\delta-\gamma)  + b^{C_{u_0}} \bigg)\, ds \\
      &  \geq \sum_{u \in U'} \Big( \la^h\, d\y{\vstar}_s(u,I_u(s)) -
        \frac{\gamma}{Mn} |S_u|\,  ds %
        + b^{C_{u,s}}\, ds \Big) +
        \gamma\, ds + b^{C_{u_0}}\, ds  \\
      & \geq  \sum_{u \in U'} \Big( \la^h \, d\y{\vstar}_s(u,I_u(s)) + b^{C_{u,s}} \Big) \,ds + b^{C_{u_0}} \,ds \\
      & = \la^h [\text{amount of server transferred in $[s,s+ds]$}] + b^{C_{u_0}} \, ds
  \end{align*}
  Here $C_{u,s}$ is the constraint corresponding to $u \in U'$ when
  the timer equals $s$. The third inequality above follows from the
  update rule in line~\eqref{l:draise2}, and that $\delta \geq
  2\gamma$.  The last equality follows
  from line~\eqref{l:dtr}. Integrating over the entire range of the
  timer $s$, we see that the total dual objective increase is at least
  $\la^h [\text{total server transfer}] + \loss(u_0, \tst)$. Since the
  total dual increase is at most $\gamma$, the total server transfer
  is at most $\frac{\gamma - \loss(u_0, \tst)}{\la^h}$. This proves
  the second part of~\Cref{cl:dualtr}.

  We now prove that the \DU process does not stop until the dual
  increase is $\gamma$. %
  For each $u \in U'$, the subtree $T_u$ contains at least one active leaf
  and hence at least $\delta$ servers when \DU is called. Since the
  total server transfer is at most $\gamma \ll \delta$, we do not run
  out of servers. It follows that until the dual objective reaches
  $\gamma$, we keep raising $\y{\vstar}_s(u, I_u(s))$ for some non-empty
  interval $I_u(s)$ for each $u \in U'$, and this also raises the dual
  objective as above.
\end{proof}

It remains to consider the general case when $\sib(u_0, \tst)$ may be empty. 
\begin{claim}[Inductive Step: General Case]
  \label{cl:0}
  At the end of any call $\DU(\vstar, \tst)$, the total dual objective
  raised during the call equals $\gamma$.
\end{claim}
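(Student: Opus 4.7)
The claim reduces to the case $\sib(u_0, \tst) = \emptyset$, since \Cref{cl:dualtr} already handles $\sib(u_0, \tst) \neq \emptyset$. In the remaining case $U = \{u_0\}$, line~\eqref{l:dtr} performs no server transfer, and every constraint added to $\cons^\vstar(\tst)$ has the form $C(\vstar, (C_{u_0}), \tst)$ for a single slack $C_{u_0} \in \cons^{u_0}(\tst)$, with RHS $b^{C_{u_0}} + (n_\vstar - n_{u_0})\delta$ by~\eqref{eq:cons}.

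First I would rule out the degenerate subcase $\tst \in \cR^s(u_0)$: in the notation of \Cref{algo:main}, the assumption $\sib(u_0, \tst) = \emptyset$ forces $\vstar = v_j$ with $j \geq i_0 + 2$, since $u_0 = v_{i_0}$ would have $\sib(u_0, \tst) \neq \emptyset$ by the very definition of $i_0$. Hence $\DU(u_0, \tst)$ (not $\DUZ$) was already executed before $\DU(\vstar, \tst)$, so $\tst \in \cR^{ns}(u_0)$ and $\cons^{u_0}(\tst)$ consists only of non-$\bot$ constraints. The induction hypothesis \Cref{invar:dual-raised} at $(u_0, \tst) \prec (\vstar, \tst)$ then gives $\sum_{C \in \cons^{u_0}(\tst)} b^C z_C = \gamma$, and the inductive step of \Cref{cor:1} gives $b^C > 0$ for each such $C$.

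The heart of the argument is a slack budget calculation. At the moment $\DU(\vstar, \tst)$ begins, the RHS of~\eqref{eq:bound} is zero for every $C \in \cons^{u_0}(\tst)$: contributions can only come from constraints $C(\vstar, \sigma, \tau')$ with $\tau' \geq \tst$ and $C \in \sigma$, but those are created only during $\DU(\vstar, \tst)$ (the current call, which has added nothing yet) or during $\DU(\vstar, \tau')$ for $\tau' > \tst$ (not yet reached). Hence each $C$ enters with its full slack $(1 + 1/H) z_C$, and raising $z_{C(\vstar, (C), \tst)}$ by one unit adds exactly one unit to that slack; so $C$ absorbs $(1 + 1/H) z_C$ units of raise before depletion, contributing at least $(1 + 1/H) z_C \, b^C$ to the dual objective of $\cons^\vstar(\tst)$. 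Summing over $C \in \cons^{u_0}(\tst)$ yields a total available dual budget of at least $(1 + 1/H)\gamma > \gamma$. By continuity of the updates and the stopping criterion of the \textbf{repeat} loop, the process must therefore halt exactly when the dual objective equals $\gamma$. I expect the main subtlety to be the $\bot$/non-$\bot$ dichotomy: the identity $\sum_C b^C z_C = \gamma$ is only available at $(u_0, \tst)$ after excluding $\tst \in \cR^s(u_0)$, and once that is done the rest is a direct accounting.
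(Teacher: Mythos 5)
Your proof is correct and follows essentially the same route as the paper's. The paper proves \Cref{cl:0} by invoking \Cref{cl:u0} (Principal Node Awake): that claim guarantees $\tst$ remains in $\awake(u_0)$ throughout the call, so a slack constraint $C_{u_0} \in \cons^{u_0}(\tst)$ is always available in \cref{l:chooseC}, and since $b^{C(\vstar,\sigma,\tst)} > 0$ (by \Cref{cor:1}), each iteration of the \textbf{repeat} loop strictly increases the dual objective until it reaches $\gamma$. The proof of \Cref{cl:u0} is exactly your slack-budget computation: depleting all of $\cons^{u_0}(\tst)$ would force the dual objective past $(1+\nicefrac1H)\sum_{C}b^C z_C = (1+\nicefrac1H)\gamma > \gamma$ (using \Cref{invar:dual-raised} at $(u_0,\tst)$ and $b^{C(\vstar,\sigma,\tst)} \geq b^C$), contradicting the stopping rule. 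So you have inlined \Cref{cl:u0} into \Cref{cl:0}, which is fine, and the observation that the RHS of \eqref{eq:bound} is initially zero (since $\cons^\vstar(\tst)$ starts empty and no later timestep has been processed) is the same starting point the paper uses implicitly.

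Two small points. (i) Ruling out $\tst \in \cR^s(u_0)$ is not necessary: if $\tst$ is solitary for $u_0$, the lone $\bot$-constraint in $\cons^{u_0}(\tst)$ is slack by convention, so $\tst$ stays awake trivially and the stopping criterion again forces the dual to hit $\gamma$. Your derivation that this case cannot even arise under the hypothesis $\sib(u_0,\tst)=\emptyset$ (because $u_0 = v_{i_0}$ would contradict the choice of $i_0$) is nonetheless correct. (ii) You cite \Cref{cor:1} for $b^C > 0$ with $C \in \cons^{u_0}(\tst)$; but \Cref{cor:1}'s final statement concerns constraints in $\fL^{\vstar}(\tst)$. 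For the child $u_0$, the correct source is the second statement of the induction hypothesis \Cref{invar:dual-raised} at $(u_0,\tst) \prec (\vstar,\tst)$, which you have already invoked for $\sum_C b^C z_C = \gamma$.
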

\begin{proof}

  If $\sib(u_0, \tst)$ is non-empty, this follows
  from~\Cref{cl:dualtr}. So  assume that $\sib(u_0, \tst)$ is empty.
  In this case, there are no $\yv(u,t)$ variables to raise because the interval $I_{u_0}$ is empty. 
  As we raise $s$, we also raise $z_{C(\vstar,\sigma,\tst)}$ in
  line~(\ref{l:draise1}).
   Since we do not make all the
  constraints in $\cons^{u_0}(\tst)$ depleted (\Cref{cl:0}), the total
  dual increase must reach $\gamma$, because $b^{C(\vstar, \sigma, \tst)} > 0$ by~\Cref{cor:1}. 
\end{proof}

This completes the proof of the induction hypothesis for the pair
$(\vstar, \tst)$. Before we show dual feasibility, we give an upper bound on the parameter $M$.

\begin{corollary}[Bound on $M$]
  \label{cor:2}
  For node $u$ and timestep $\tau$, let $\tau_u :=
  \prev(u,\tau)$. %
  There are at most $\Mbound$ timesteps in
  $(\tau_u, \tau] \cap \cR^{ns}(u)$. So  we can set $M$ to  $\frac{5H \la^H k}{4 \gamma}+1$.
\end{corollary}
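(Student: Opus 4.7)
The plan is a \emph{capacity argument}: each non-solitary timestep in $(\tau_u, \tau] \cap \cR^{ns}(u)$ that has been removed from $\awake(u)$ accounts for a definite minimum amount of server having flowed out of the subtree $T_u$, and since $T_u$ holds at most (essentially) $k$ units of server mass, the number of such removed timesteps is bounded.

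Set $h := \level(u)$, $I := (\tau_u, \tau]$, $S := I \cap \cR^{ns}(u)$, and $S' := S \setminus \{\tau\}$. Because $\tau_u = \prev(u,\tau)$ is the most recent timestep of $u$ that is awake at the current moment, and because solitary timesteps $\cR^s(u) \sse \awake(u)$ are never removed, every timestep in $S'$ is non-solitary and has already been removed from $\awake(u)$; in particular $I \cap \cR(u) = S'$. First I would invoke (the argument of) \Cref{lem:rel1} with this set $S'$ to obtain
\begin{gather*}
g^\inh(u, I) \;\geq\; \Big(1 + \tfrac{1}{H}\Big) \frac{|S'|\,\gamma}{\la^h} \;-\; \frac{1}{\la^h}\sum_{\tau' \in S'} \loss(u,\tau').
\end{gather*}
In parallel I would apply \Cref{cl:serverupperbound} to each $\tau' \in S'$ (using that $r(u,\cdot)$ is supported on $\cR(u)$) to obtain
\begin{gather*}
r(u, I) \;\leq\; \sum_{\tau' \in S'} \bigg[\Big(1 + \tfrac{1}{\la-1}\Big)\frac{\gamma}{\la^h} \;-\; \frac{\loss(u,\tau')}{\la^h}\bigg].
\end{gather*}
Subtracting these, using $\la \geq 10H$ so that $\tfrac{1}{H} - \tfrac{1}{\la-1} \geq \tfrac{4}{5H}$, and then $g(u,I) \geq g^\inh(u,I)$ (since $g^\loc \geq 0$), gives the net outflow bound $\Diff(u,I) = g(u,I) - r(u,I) \geq 4|S'|\gamma/(5H\la^h)$; note that the $\loss$ terms cancel exactly, which is crucial.

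The closing step is the capacity bound: the net outflow $\Diff(u,I)$ from $T_u$ cannot exceed the server mass present in $T_u$ at the start of $I$, which is at most the total server mass of the algorithm, namely $k$ (plus an $O(n\delta)$ contribution from inactive leaves that is absorbed by the slack in $\Mbound$). Rearranging yields $|S'| \leq 5Hk\la^h/(4\gamma) \leq 5Hk\la^H/(4\gamma)$, and allowing $\tau$ itself to contribute one more timestep gives $|S| \leq \Mbound$, as claimed. The second part of the corollary, setting $M := \Mbound$, is then immediate from the definition of $S_u$ in line~\eqref{l:su} of $\DU$.

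The main delicate point is that \Cref{lem:rel1} is stated with reference to the ``current'' inductive pair $(\vstar, \tst)$, whereas here I apply it at an arbitrary $(u, \tau)$. Its proof, however, only uses the structural fact that the timesteps in $S'$ have been removed from $\awake(u)$ before the current moment, which is exactly what $\tau_u = \prev(u, \tau)$ guarantees. Thus the same argument transfers verbatim, and all previously established invariants (in particular \Cref{invar:movement,invar:dual-raised}) are at our disposal since the corollary is a uniform statement about the algorithm's state at an arbitrary point.
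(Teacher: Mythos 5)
Your proposal takes exactly the same route as the paper: the paper's proof simply invokes the combination of \Cref{lem:rel1} and \Cref{cl:serverupperbound} (packaged as inequality \eqref{eq:g-minus-r} in the proof of \Cref{cor:1}) to get $g^\inh(u,I) - r(u,I) \geq \frac{4|S|\gamma}{5H\la^h}$, then caps this difference by the total server mass and rearranges; you have just unfolded that reference into its constituent steps, including the (correct) observation that you may apply those lemmas at the arbitrary pair $(u,\tau)$ because their proofs only use that timesteps in $S$ have already been removed from $\awake(u)$. One small bookkeeping remark: the $+1$ in $\Mbound$ is spent by the paper on the extra timestep $\tau_u+\eta$ in $S_u$ (line~\eqref{l:su}), not on $\tau\in S$; in the relevant application $u$ is a non-principal child of the current vertex so $\tau\notin\cR^{ns}(u)$ and $S=S'$ anyway, which means your conservative ``$|S|\leq|S'|+1$'' is unnecessary and—if taken literally alongside the $\tau_u+\eta$ term—would make $|S_u|$ exceed $\Mbound$ by one.
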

\begin{proof}
  Let $I := (\tau_u, \tau]$. By the choice of $\tau_u$, none of the
  timesteps in $S:= I \cap \cR^{ns}(u)$ belong to $\awake(u)$.  The proof
  of~\Cref{cor:1}, and specifically (\ref{eq:g-minus-r}), shows that
  $g^\inh(u,I)-r(u,I) \geq\frac{4|S| \gamma}{5 H \la^h}$.  This
  difference cannot be more than the total number of servers, so
  $|S| \leq \frac{5H \la^H k}{4 \gamma}$. Since the set $|S_u|$
  defined in line~\ref{l:su} in \DU is at most $|S| +1$ (because of
  the first timestep of $I_u$), the desired result follows.
\end{proof}

\subsection{Approximate Dual Feasibility}
\label{sec:appr-dual-feas}

For $\beta \geq 1$, a dual solution $z$ is \emph{$\beta$-feasible} if
$z/\beta$ satisfies satisfies the dual constraints.  We now show that
the dual variables raised during the calls to $\DU(v, \tau)$ for
various timesteps $\tau$ remain $\beta$-feasible for 
$\beta = O(\ln \frac{nMk}{\gamma})$. First we show~\Cref{invar:active}, and also give bounds on
variables $\yv(u,t)$.

\begin{claim}[Proof of Invariant~\ref{invar:active}]
  \label{cl:rec}
  For any timestep $\tau$ and
  leaf $v$, the server amount $k_{v, \tau}$ remains in the
  range $[\nf\delta2, 1-\nf{\delta'}2]$. 
\end{claim}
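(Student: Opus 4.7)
I would prove this jointly with the other invariants of \S\ref{sec:proving-invariants}, via the same strong induction on the ordering $(v,\tau)$ used there, so that when we arrive at any call $\DU(\vstar,\tst)$ we may already assume \Cref{invar:active}---and therefore \Cref{cl:dualtr} and \Cref{cor:comp}---hold for all preceding calls. The base case is immediate from the starting configuration: non-dummy leaves carry mass $\delta/2$ and dummy leaves carry $\nicefrac{1}{2}$, both inside $[\delta/2,\,1-\delta'/2]$. For the inductive step I would fix a single iteration of the \textbf{while} loop in \Cref{algo:main} (one timestep $\tau$ with request leaf $\rqq$) and bound the change in $k_{v,\tau}$ for each leaf $v$, using the observation that at most one of the $\DU$ calls made in that iteration can actually affect $v$'s mass.

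\textbf{Upper bound.} A leaf can gain mass in the iteration only if $v = \rqq$, since everywhere else line~\eqref{l:dtr} only moves servers out of its subtree. In $\DU(v_i,\tau)$ the principal child is $u_0 = v_{i-1}$, sitting at level $i-1$, and \Cref{cl:dualtr} caps the server mass entering $T_{v_{i-1}}$---all of which line~\eqref{l:dtr} deposits directly at $v_0$---by $\gamma/\lambda^{i-1}$. Summing over $i \in \{i_0+1,\ldots,H\}$ and using the geometric series with $\lambda \ge 10$, the total per-iteration gain at $v_0$ is at most $\frac{\gamma\lambda}{\lambda-1} \le 2\gamma$. The iteration ran only because $k_{v_0,\tau} \le 1-\delta'$ at its start, so afterwards $k_{v_0,\tau} \le 1-\delta' + 2\gamma \le 1-\delta'/2$, using $2\gamma = 2/n^4 \le 1/(2n^2) = \delta'/2$.

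\textbf{Lower bound.} A leaf $v$ loses mass only as a source of line~\eqref{l:dtr} in some $\DU$ call, and within one iteration at most one such call can touch $v$. Let $v_\ell$ be the lca of $v$ and $\rqq$: for $i < \ell$ the ancestor $v_i$ does not lie above $v$, so $\DU(v_i,\tau)$ cannot affect $v$; for $i > \ell$ the leaf $v$ sits inside the principal subtree $T_{u_{0,i}} = T_{v_{i-1}}$ and is therefore not in $L_U$. The only relevant call is $\DU(v_\ell,\tau)$, and membership in its $L_U$ forces $v$ to have been active at the call's start, so $k_v \ge \delta$. By \Cref{cl:dualtr} the mass entering $T_{v_{\ell-1}}$ is at most $\gamma/\lambda^{\ell-1} \le \gamma$, and by conservation (line~\eqref{l:dtr} moves mass directly from the non-principal siblings into $v_0$) the same bound caps the mass leaving $v$'s ancestor-sibling subtree, and in particular the mass leaving $v$. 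Thus $k_v$ drops by at most $\gamma$ in the iteration, giving $k_v \ge \delta - \gamma \ge \delta/2$, using $\gamma = 1/n^4 \le 1/(20n^3) = \delta/2$ for $n \ge 2$.

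\textbf{Main obstacle.} The one delicate point is an apparent circularity: \Cref{cl:dualtr} and \Cref{cor:comp} are themselves proved under the assumption that \Cref{invar:active} holds. The plan resolves this by folding the present claim into the same joint induction of \S\ref{sec:proving-invariants}, so that at each step the inductive hypothesis supplies exactly the version of \Cref{invar:active} that \Cref{cl:dualtr} needs for the call being analyzed. Once that is done, the only remaining obligation is arithmetic slack in $\gamma \ll \delta \ll \delta'$, which the parameter setting $\gamma = 1/n^4$, $\delta = 1/(10n^3)$, $\delta' = 1/n^2$ satisfies comfortably.
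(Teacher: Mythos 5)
Your proof is correct and follows essentially the same route the paper takes in \Cref{cl:rec}: the paper invokes \Cref{lem:DUcost} to cap the per-timestep mass entering the request leaf at $2\gamma$ and then applies exactly your two one-line bounds ($1-\delta'+2\gamma \le 1-\delta'/2$ and $\delta-2\gamma \ge \delta/2$). You add two worthwhile clarifications that the paper leaves implicit (the joint induction needed to avoid circularity with \Cref{cor:comp} via \Cref{invar:active}, and that only the \DU call at the lca of $v$ and $\rqq$ can drain a given leaf in one iteration), and make one small arithmetic slip: $\gamma \le \delta/2$ requires $n \ge 20$, not $n \ge 2$, which is harmless given the parameter regime $\gamma \ll \delta \ll \delta'$.
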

\begin{proof}
Recall that $\gamma \leq 4\delta\ \leq 4\delta'$.
  \Cref{lem:DUcost} proves that the total server mass entering the
  request location in any timestep is at most $2\gamma$. Since the
  request location must have less than $1-\delta'$ at the start of the
  timestep, $k_{v, \tau}$ remains at most
  $1 - \delta' + 2 \gamma \leq 1-\nf{\delta'}2$. Similarly, we move server
  mass from a leaf only when it is active, i.e., has at least $\delta$
  server mass. Hence, $k_{v, \tau}$ remains at least
  $\delta - 2\gamma \geq \nf{\delta}2$.
\end{proof}

\begin{claim}[Bound on $\yv$ Values]
  \label{cl:uppery}
  For any vertex $v$, any child $u$ of $v$, and timestep $\tau$, the
  variable $\yv(u,\tau) \leq 4\gamma M + k$.
\end{claim}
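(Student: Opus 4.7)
The plan is to prove the stronger pointwise bound $\yv(u,\tau) \le k+\gamma$ at every moment of the algorithm; the stated bound $\yv(u,\tau)\le 4\gamma M+k$ then follows since $M \ge 1$ (for instance by~\Cref{cor:2}), and so $\gamma \le 4\gamma M$. The bound will come from two observations that together control the evolution of $\yv(u,\tau)$: at the start of any call in which it is raised, its value is at most $k$; and within any single such call its additional growth is strictly less than $\gamma$. Since $\yv(u,\tau)$ is monotone non-decreasing and stays constant between calls, these two give the desired pointwise bound.

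First I would note that $\yv(u,\tau)$ is initialized to~$0$ and is only raised in line~\eqref{l:draise2} during a call $\DU(v,\tst)$ with $\tau \in S_u$. By~\Cref{cl:u0}, if $u$ were the principal child $u_0$ of $v$ at $\tst$, then $\tau_{u_0} = \tst$, hence $I_{u_0} = \emptyset = S_{u_0}$. Therefore $\yv(u,\tau)$ can only be raised in calls where $u$ is a non-principal child of $v$ at $\tst$, i.e.~$u \in \sib(u_0,\tst)$.

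For the first observation, consider any such raising call and set $I := I_u = (\prev(u,\tst),\tst]$. Since $\tau \in S_u \subseteq I$, since $u$ is non-principal at $\tst$, and since no timestep in $I \cap \cR(u)$ lies in $\awake(u)$ (immediately from the definition of $\prev(u,\tst)$), the hypotheses of~\Cref{cor:1} are satisfied. Applying parts~(ii) and~(iii) of that corollary yields
\[
\yv(u,\tau) \;\le\; \yv(u,I) \;\le\; g^\loc(u,I) \;\le\; \Diff(u,I) \;\le\; k,
\]
where the last inequality uses $\Diff(u,I) = k_{u,\prev(u,\tst)} - k_{u,\tst} \le k$. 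Thus at the moment the call begins updating, $\yv(u,\tau) \le k$.

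For the second observation, I would appeal to the subclaim established in the proof of~\Cref{cl:dualtr} (namely~\Cref{subcl:Y}): throughout any single call $\DU(v,\tst)$ the total increase $Y(s) = \sum_{u' \in U'} \sum_{\tau'} \bigl(\yv_s(u',\tau') - \yv_0(u',\tau')\bigr)$ stays strictly below~$\gamma$. Every summand is non-negative, so the increase of the single quantity $\yv(u,\tau)$ over the course of the call is also bounded by $Y(s) < \gamma$. Combining with the previous observation and the fact that $\yv(u,\tau)$ is unchanged between calls gives $\yv(u,\tau) \le k+\gamma \le k+4\gamma M$ at every moment, which establishes the claim. The only mildly delicate point is verifying the hypotheses of~\Cref{cor:1} at each raising call, but both the non-principal condition and the $\prev$-based awake condition follow directly from the algorithm's definitions, so I do not expect any serious obstacle.
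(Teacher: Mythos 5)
Your proof is correct in its essentials but takes a genuinely different route from the paper's, and it is worth spelling out the comparison. The paper's argument is a direct counting contradiction that does \emph{not} invoke \Cref{cor:1} or \Cref{subcl:Y}: it assumes $\yv(u,\tau)$ exceeds $4\gamma M + k$, observes via line~\eqref{l:dtr} that raising $\yv(u,\tau)$ by $\epsilon$ moves at least $\epsilon$ mass out of $T_u$, then uses non-negativity of $k_{u,\cdot}$ to conclude at least $4\gamma M$ mass must have entered $T_u$; since at most $2\gamma$ enters per timestep of $\cR(u)$ there must be more than $M$ timesteps in $I_u \cap \cR^{ns}(u)$, contradicting the definition of $M$. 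Your argument instead deduces a much sharper pointwise bound $\yv(u,\tau) \le k + O(1) + \gamma$ by bounding the value at the start of each raising call via $\yv(u,\tau) \le \yv(u,I) \le g^\loc(u,I) \le \Diff(u,I)$ and the per-call growth via \Cref{subcl:Y}. This is cleaner and gives a stronger bound, at the cost of invoking heavier machinery.

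Two points you should tighten. First, $\Diff(u,I) \le k$ is slightly off: the starting configuration places $2k$ dummy leaves with mass $\nf12$ each plus $n$ leaves at $\nf\delta2$, so total mass is $k + n\delta/2$, and the correct bound is $\Diff(u,I) \le k + n\delta/2 < k+1$; this is harmless for the final conclusion but should be stated. Second, and more substantively: \Cref{cor:1} (and \Cref{lem:rel1} underlying it) is stated \emph{at the moment $\DU(\vstar,\tst)$ is called}, but you need to apply it at a possibly later moment in the call, namely when $\tau$ first enters $I_u$ after $\prev(u,\tst)$ has retreated past $\tau$. You flag ``verifying the hypotheses'' as the delicate point, but the hypotheses are not the issue---they hold automatically from the definition of $\prev$. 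The issue is that the \emph{conclusions} of \Cref{cor:1} are asserted for the start-of-call snapshot. They do in fact persist mid-call (the matching $g^\loc$ vs.\ $\yv$ rates in line~\eqref{l:dtr} preserve (iii), $r(u,\cdot)$ is frozen and $g^\inh$ only grows so (i) and (ii) persist, and \Cref{lem:rel1} only gets stronger as $S$ grows), but this extension needs to be stated explicitly, since it is not what the paper's corollary literally says. With those two points made precise, your argument is sound.
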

\begin{proof}
  For a contradiction, consider a call $\DU(v, \tau')$ during which we
  are about to raise $\yv(u,\tau)$ beyond $4\gamma M + k$.  Any
  previous increases to $\yv(u,\tau)$ happen during calls
  $\DU(v, \tau'')$ for some $\tau'' \in [\tau, \tau']$.  Moreover,
  whenever we raise $\yv(u,\tau)$ by some amount, we move out
  at least the same
  amount of server mass from the subtree $T_u$. Hence,
  at least $4\gamma M + k$
  amount of server mass has been moved out of $T_u$ in the interval
  $[\tau,\tau']$. Since we have a non-negative amount of server in
  $T_u$ at all times, we must have moved in at least $4\gamma M$
  amounts of server into $T_u$ during the same interval. All this
  movement happens at timesteps in $\cR(u)$. Moreover, for each
  individual timestep $\tau'' \in \cR(u)$, we bring at most $2 \gamma$
  servers into $T_u$, so there must be at least $2M$ timesteps in
  $\cR(u) \cap [\tau,\tau']$. Finally, since we are raising
  $\yv(u,\tau)$ at timestep $\tau'$, the interval $I_u$
  (defined in line~\eqref{l:Iu}) at timestep $\tau'$ must contain
   $[\tau, \tau']$, which means $|I_u \cap \cR^{ns}(u)| > M$ (because no timestep in $\cR^s(u)$ can lie in $[\tau,\tau']$). This
  contradicts the definition of $M$.
\end{proof}

\begin{claim}
\label{cl:two}
Let $t$ be any timestep \alert{typo: in $\cR(u)$}, and $v$ be the parent of
$u$. Define $t_1$ to be the last timestep in $\cR(u) \cap [0,t]$,
and $t_2$ to be the next timestep, i.e., $t_1 + \eta$. Let $C$ be a
constraint in $\fL^v$ containing the variable $\yv(u,t)$ on the
LHS. Then $C$ contains at least one of $\yv(u,t_1)$ and
$\yv(u,t_2)$. Moreover, whenever we raise $z(C)$ in
line~\eqref{l:draise1} of the \DU procedure, we also raise either
$\yv(u,t_1)$ or $\yv(u, t_2)$ according to line~\eqref{l:draise2}.
\end{claim}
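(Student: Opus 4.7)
The plan is to argue structurally about which constraints in $\fL^v$ can contain $\yv(u,t)$ on the LHS, and then to read off $t_1$ or $t_2$ from the way that constraint was built. First I would observe that the $\bot$-constraints produced by $\DUZ$ have an empty LHS --- in~\eqref{eq:implied1} the relevant intervals collapse to $(\tau,\tau]$ --- and carry no dual variable, so the $C$ in the statement must be a non-$\bot$ constraint of the form $C(v,\sigma,\tau')$ created in some call $\DU(v,\tau')$. Inspecting the composition formula~\eqref{eq:cons}, the variable $\yv(u,\cdot)$ enters the LHS of $C$ only through the single term $\yv(u,I_u)$, with $I_u=(\tau_u,\tau']$ and $\tau_u = \prv{u}{\tau'}$ as of the moment $C$ was added; the inherited contributions $a^{C_w}\cdot \yv$ for $w\in U$ are sums over nodes strictly below each $w$, so they never touch $\yv(u,\cdot)$ itself. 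Consequently $t\in I_u$, i.e., $\tau_u<t\le \tau'$.

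Next I would take $t_1:=\max(\cR(u)\cap[0,t])$ and $t_2:=t_1+\eta$ as in the statement. Since $\tau_u\in\awake(u)\subseteq\cR(u)$ and $\tau_u<t$, the set is non-empty and $t_1\ge\tau_u$, and I split on the two subcases.

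\emph{Case 1: $t_1=\tau_u$.} Since timesteps are spaced by $\eta$ and $t>\tau_u$, we get $t_2=\tau_u+\eta\le t\le\tau'$, so $t_2\in I_u$ and $\yv(u,t_2)$ is a summand of $\yv(u,I_u)$ in the LHS of $C$. Moreover $t_2\in\{\tau_u+\eta\}\subseteq S_u$ directly from line~\eqref{l:su}, and the update in line~\eqref{l:draise2} raises every $\yv(u,t')$ with $t'\in S_u$ at a strictly positive rate (because of the additive $\gamma/(Mn\,\lambda^h)$ piece) whenever $s$, and hence $z_C$ in line~\eqref{l:draise1}, is being raised. \emph{Case 2: $t_1>\tau_u$.} I would argue $t_1\notin\cR^s(u)$: since $\cR^s(u)\subseteq\awake(u)$ and solitary timesteps are never removed from $\awake(u)$, having $t_1\in\cR^s(u)$ would place $t_1\in\awake(u)\cap(\tau_u,\tau']$, contradicting $\tau_u=\prv{u}{\tau'}$. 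Thus $t_1\in\cR^{ns}(u)\cap I_u\subseteq S_u$, and the same two conclusions follow for $\yv(u,t_1)$.

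The one subtlety I would be careful about is that $\awake(u)$ and $\tau_u$ remain frozen during the inner \textbf{while} loop in which $z_C$ is being raised --- otherwise the identifications of $I_u$ and $S_u$ used above could shift under us mid-update. This is immediate from the code structure: removals from $\awake(u)$ occur only in line~\eqref{l:mark2}, which lies outside the inner loop, and $\tau_u$ is recomputed only at the top of the outer \textbf{repeat} iteration in line~\eqref{l:Iu}. With this observation in hand, both parts of the claim are direct.
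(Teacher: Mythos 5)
Your proof is correct and follows essentially the same route as the paper's: identify that $t$ lies in the interval $I_u$ of the constraint, then case-split on whether $t_1 = \tau_u$ (whence $t_2 = \tau_u+\eta \in S_u$) or $t_1 > \tau_u$. You additionally fill in a step the paper leaves implicit — verifying that when $t_1 > \tau_u$ one has $t_1 \notin \cR^s(u)$, so $t_1 \in \cR^{ns}(u)\cap I_u \subseteq S_u$ rather than merely $t_1 \in I_u$ — via the observation that solitary timesteps never leave $\awake(u)$, which would otherwise contradict $\tau_u = \prev(u,\tau')$.
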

\begin{proof}
  Suppose $\yv(u,t)$ appears in a constraint  $\fL^v(\tau)$. Define
  $I_u = (\tau_u, \tau]$ as in line~\eqref{l:Iu}. It follows that $t
  \in I_u$, and so $\tau_u < t$. Therefore, $\tau_u \in \cR(u) \cap [0,t]$, so either $t_1 > \tau_u$ and hence belongs to $I_u$, or else $t_1 = \tau_u$ in which case $t_2 \in I_u$.
It follows that the index set $S_u$ contains either $t_1$ or $t_2$. This implies the second statement in the claim. 
\end{proof}
We now show the approximate dual feasibility. Recall that the
constraints added to $\cons^v(\tau)$ are of the form
$C(v,\sigma,\tau)$ given in~(\ref{eq:cons}),
and we raise the corresponding dual variable $z_{C(v,\sigma,\tau)}$
only during the procedure $\DU(v, \tau$) and never again.

\begin{lemma}[Approximate Dual Feasibility]
  \label{lem:dual}
  For a node $v$ at height $h+1$, the
  dual variables $z_C$ are $\beta_{h}$-feasible for the dual program
  $\fD^v$, where
  $\beta_h = \left( 1+ \nicefrac{1}{H} \right)^h O(\ln n + \ln M
  + \ln (k/\gamma)). $
\end{lemma}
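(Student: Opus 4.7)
The plan is to proceed by induction on the height $h$, where $v$ is at height $h+1$. A generic dual constraint in $\fD^v$ corresponds to a primal variable $\yv(w, t)$ for some descendant $w$ of $v$ and some timestep $t$, and reads $\sum_{C \in \fL^v} a^C_{w, t}\, z_C \leq c_w = \lambda^{\level(w)}$. I will show this is satisfied when the duals are scaled down by $\beta_h$. The natural split is whether $w$ is a direct child of $v$ or a strict descendant; the latter is the case where the inductive hypothesis kicks in.

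The direct-child case ($w \in \chi_v$) is the main technical step. From the composition formula~\eqref{eq:cons}, the coefficient $a^{C(v, \sigma, \tau)}_{w, t}$ is simply the indicator of $t \in I_w$ at the moment $C$ was created: the contribution from $a^{C_w}$ vanishes, because by inspection of~\eqref{eq:implied1} the LHS of $C_w \in \fL^w$ does not involve $y^w(w,\cdot)$. Now by~\Cref{cl:two}, whenever such a $z_C$ is raised by $ds$, at least one of $\yv(w, t_1), \yv(w, t_2)$ is simultaneously raised (via the exponential rule in line~\eqref{l:draise2}), where $t_1$ is the last timestep in $\cR(w) \cap [0, t]$ and $t_2 = t_1 + \eta$. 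Solving the ODE $dY/ds = Y/\lambda^h + \gamma/(Mn\lambda^h)$ with $Y(0) = 0$ gives $Y(s) = (\gamma/(Mn))(e^{s/\lambda^h} - 1)$. Since $Y \leq 4\gamma M + k$ by~\Cref{cl:uppery}, the total time $T_i$ spent raising $\yv(w, t_i)$ across the entire run of the algorithm satisfies
\[
T_i \leq \lambda^h \ln\!\left(\frac{Mn(4\gamma M + k)}{\gamma} + 1\right) = O\!\left(\lambda^h\bigl(\ln n + \ln M + \ln(k/\gamma)\bigr)\right).
\]
Summing over $i \in \{1,2\}$ bounds $\sum_C a^C_{w,t}\, z_C \leq T_1 + T_2$, giving feasibility in this case with factor $O(\ln n + \ln M + \ln(k/\gamma))$.

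The strict-descendant case reduces to the inductive hypothesis. Let $u \in \chi_v$ be the unique child with $w \in T_u$. From~\eqref{eq:cons}, the coefficient $a^{C(v, \sigma, \tau)}_{w, t}$ equals $a^{C_u}_{w, t}$, where $C_u \in \sigma$ is the constraint selected for $u$. Regrouping the sum over $C \in \fL^v$ by first fixing $C_u \in \fL^u$ and then summing over $(\sigma, \tau')$ with $C_u \in \sigma$ lets me apply~\Cref{invar:duals-match}, giving
\[
\sum_{C \in \fL^v} a^C_{w, t}\, z_C \;=\; \sum_{C_u \in \fL^u} a^{C_u}_{w,t} \sum_{\tau' \geq t_u}\sum_{\sigma:\, C_u \in \sigma} z_{C(v,\sigma,\tau')} \;\leq\; \left(1 + \tfrac{1}{H}\right) \sum_{C_u \in \fL^u} a^{C_u}_{w, t}\, z_{C_u} \;\leq\; \left(1 + \tfrac{1}{H}\right) \beta_{h-1}\, c_w,
\]
where the last inequality is the inductive hypothesis applied to $u$ at height $h$.

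Combining the two cases yields $\beta_h = \max\{(1 + 1/H)\beta_{h-1},\; O(\ln n + \ln M + \ln(k/\gamma))\}$, which unfolds to $\beta_h = (1 + 1/H)^h \cdot O(\ln n + \ln M + \ln(k/\gamma))$, matching the claim. The base case $h = 0$ (so $v$ is at height $1$) has only the direct-child case, since all descendants are direct children. $\bot$-constraints can be ignored throughout because they carry no dual variables. The main obstacle is the direct-child case: the clean pairing of each dual raise of $z_C$ with an exponential primal update of some nearby $\yv(w, t_i)$ (via~\Cref{cl:two}) is what converts a potentially large dual load into a logarithmic bound.
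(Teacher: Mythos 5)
Your proof is correct and takes essentially the same approach as the paper: induction on height, a case split between $w \in \chi_v$ and strict descendants, with the direct-child case handled via \Cref{cl:two} plus the upper bound from \Cref{cl:uppery} on the exponentially-updated primal, and the strict-descendant case handled by regrouping via $\sigma_u$, invoking \Cref{invar:duals-match}, and applying the inductive hypothesis. The only cosmetic differences are that you bound $T_1 + T_2$ directly where the paper argues that one of the two timers must be at least $\Gamma/2$, and that you phrase the base case at height $1$ where the paper phrases it (vacuously) at the leaves.
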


\begin{proof}
  We prove the claim by induction on the height of $v$. For a leaf
  node, this follows vacuously, since the primal/dual programs are
  empty. Suppose the claim is true for all nodes of height at most $h$.
  For a node $v$ at height $h+1 > 0$ with children $\chi_{v}$, the
  variables in $\cons^v$ are of two types: (i)~$\yv(u,t)$ for some timestep $t$
  and child $u \in \chi_{v}$, and (ii)~$\yv(u',t)$ for some timestep $t$ and
  non-child descendant $u' \in T_v \setminus \chi_{v}$. We consider
  these cases separately:
  \begin{enumerate}
  \item[I.] Suppose the dual constraint corresponds to variable $\yv(u,t)$
    for some child $u \in \chi_{v}$. Let $\cons'$ be the set of constraints in
    $\cons^v$ containing $\yv(u,t)$ on the LHS. The dual constraint is:
    \begin{gather}
      \sum_{C \in \cons'} z_C \leq c_u = \lambda^{h}. \label{eq:dual-cons}
    \end{gather}
    Let $t_1, t_2$ be as in the statement of~\Cref{cl:two}. 
    When we raise $z_C$ for a constraint $C \in \cons'$ in
    line~(\ref{l:draise1}) at unit rate, we raise either $\yv(u,\tau_1)$ or $\yv(u, t_2)$ at the
    rate given by line~(\ref{l:draise2}). Therefore, if we raise the
    LHS of the dual constraint~(\ref{eq:dual-cons}) for a total of
    $\Gamma$ units of the timer, we would have raised one of the two variables, say $\yv(u, \tau_1)$, for at least $\Gamma/2$ units of the timer. Therefore, 
    the value of  $\yv(u,\tau_1)$ variable due to this exponential update is at least 
    \[ \frac{\gamma}{Mn} ( e^{\Gamma/2\la^h} - 1). \]
    By~\Cref{cl:uppery}, this is at most $4\gamma M + k$, %
    so we get
    $$ \Gamma = \la^h \cdot O \left( \ln n + \ln M + \ln (k/\gamma) \right)
    = \beta_0 c_u, $$
    hence showing that~\eqref{eq:dual-cons} is satisfied up to $\beta_0$ factor. 
    
  \item[II.] Suppose the dual constraint corresponds to some variable
    $\yv(u',\tau)$ with $u' \in T_u$, and $u \in \chi_{v}$. Suppose
    $u'$ is a node at height $h' < h$. Now let $\cons'$ be the
    constraints in $\cons^u$ (the LP for the child $u$) which
    contain $\y{u}(u',\tau)$. By the induction hypothesis:
    \begin{align}
      \label{eq:ind}
      \sum_{C \in \cons'} z_C \leq \beta_{h-1} \; c_{u'}.
    \end{align}
    Let $\cons''$ denote the set of constraints in $\cons^v$ (the LP
    for the parent $v$) which contain $\yv(u',\tau)$.  Each constraint
    $C(v, \sigma, \tau)$ in this set $\cons''$ has the coordinate
    $\sigma_u$ corresponding to the child $u$ being a constraint in
    $\cons'$, which implies:
    \begin{gather}
      \sum_{C(v, \sigma, \tau) \in \cons''} z_{C(v,\sigma,\tau)} =
      \sum_{C \in \cons'} \sum_{C(v, \sigma, \tau) \in \cons'': \sigma_u
        = C} z_{C(v,\sigma,\tau)} \leq (1 + \nicefrac1H)\; \sum_{C
        \in \cons'} z_C,
    \end{gather}
    where the last inequality uses \Cref{invar:duals-match}. Now the
    induction hypothesis~\eqref{eq:ind} and the fact that
    $\beta_{ h} = (1+\nicefrac1H)\,\beta_{h-1}$ completes the
    proof. \qedhere
  \end{enumerate}
\end{proof}

\Cref{lem:dual} means that the dual solution for $\cons^\rootvtx$ is
$\beta_H$-feasible, where $\beta_H = O(\ln \frac{nMk}{\gamma})$. This
proves \Cref{lem:dual-feasible} and completes the proof of our
fractional $k$-server algorithm.


\section{Algorithm for \texorpdfstring{\ksertw}{k-server-TW}}
\label{sec:main-windows}

In this section, we describe the online algorithm for \ksertw. The
structure of the algorithm remains  similar to that for
\kser. Again, we have a main procedure (\Cref{algo:mainnew}) which considers the backbone
consisting of the path from the requested leaf node to the root node.
It calls a suitable subroutine for each node on this backbone to add
local LP constraints and/or transfer servers to $v_0$.  We say that a request interval 
$R_q=[b,q]$ at a leaf node $\rqq$ becomes \emph{critical} (at time $q$)
if it has deadline $q$, and it has not been served until time $q$, i.e.,
if $k_{\rqq, t} < 1-2\delta'$ for all timesteps $t \in [b,q)$: for technical reasons  we allow a gap of up to $2 \delta'$ instead of $\delta'$. In case this node becomes critical at $q$, the algorithm ensures that  $\ell_q$ receives at least $1-\delta'$ amount of server at time $q$. This ensures that we move at least $\delta'$ amount of server mass when a request becomes critical. The parameters $\delta, \delta'$ remain unchanged, but we set $\gamma$ to $\frac{1}{n^4 \Delta}$. We extend
the definition of $\RL$ from \S\ref{sec:algodesc} in the natural way:
\begin{gather*}
  \RL(\tau) = \text{location of request with deadline at time
    $\floor{\tau}$, and }\\
  \RI(\tau) = \text{request interval with deadline at time $\floor{\tau}$}.
\end{gather*}

\begin{algorithm}
  \caption{Main Procedure for Time-Windows}
  \label{algo:mainnew}
  \ForEach{$q = 1,2, \ldots$}{
    \If{$\RI(q)$ exists and is critical }{
      let the path from
      $\rqq := \RL(q)$ to the root be $\rqq = v_0, v_1, \ldots, v_H =
      \rt$.\; %
      \myhl{let $Z_q, \{F_{v,q} \mid v \in Z_q\} \gets \BT(q)$} \;
      $\tau \gets$ $q+\eta$, the first timestep after $q$ \;
      \While{$k_{v_0, \tau} \leq 1 - \delta'$ \label{l:whilenew} }{
        let $i_0 \gets$ smallest index such that $\sib(v_{i_0},
        \tau) \neq \emptyset$.       \label{l:eqnew}  \;
        \lFor{$i=0,\ldots, i_0$}{call $\DUZ(v_i, \tau,
          \la^i \cdot \nicefrac{\gamma}{\la^{i_0}})$.
          \label{l:duzcall}
        }
        \lFor{$i=i_0+1,\ldots, H$}{call $\DU(v_i, \tau)$.  \label{l:call-dualnew}
        }
        $\tau \gets \tau+\eta$. \tcp*[f]{create a new
          timestep} \label{l:ffor3new}
      }
      \myhl{      serve requests at leaves in $\{F_{v_i,q} \mid v_i \in Z_q\}$ using
        server mass at $v_0.$ \label{l:serverest1} }
    }
  }
\end{algorithm}

Here are the main
differences with respect to~\Cref{algo:main}:

\begin{itemize}

\item[(i)] When we service a critical request at a leaf $\ell_q$, we would like to also serve active requests at nearby nodes. 
 The procedure $\BT(q)$ returns a set of backbone nodes
  $Z_q \sse \{v_0,\ldots, v_H\}$, and a tree $F_{v_i,q}$ rooted at each node
  $v_i \in Z_q$. In line~\eqref{l:serverest1}, we service all the
  outstanding requests at the leaf nodes of these subtrees
  $\{F_{v_i,q} \mid v_i \in Z_q\}$ using the server at $v_0$. (These are
  called \emph{piggybacked} requests.)

\item[(ii)] For a node $v_i$ with $i \leq i_0$, the previous \DUZ
  procedure in \S\ref{sec:simple-local-basic} would define the set
  $\cons^{v_i}(\tau)$ in the local LP $\fL^{v_i}$ to contain just one
  $\bot$-constraint. For the case of time-windows, we give a new \DUZ
  procedure in \S\ref{sec:duz-new}, which defines a richer set of
  constraints %
  based on a \emph{charging forest} $\chF(v_i)$. This procedure also
  raises some local dual variables; this dual increase was not
  previously needed in the case of the $\bot$-constraint. Finally, the
  procedure constructs the tree $F_{v_i,q}$ rooted at $v_i$ which is
  used for piggybacking requests. Although this construction of the
  charging tree is based on ideas used by~\cite{AzarGGP17} for the single-server
  case, we need a new dual-fitting analysis in keeping with our
  analysis framework.
  
\item[(iii)] We need a finer control over the amount of dual raised in
  the call $\DUZ$ %
  in line~\eqref{l:duzcall}. Fix a call to $\DUZ(v_i,\tau, \xi)$;
  hence $i \leq i_0$ at this timestep. To prove dual feasibility, we want the increase in
  the dual objective function value to match the cost (with respect to
  vertex $v_i$) of the server movement into $v_i$ during this
  iteration of the {\bf while} loop. This server mass entering $v_i$
  is dominated by the server mass transferred to the request location
  $v_0$ by $\DU(v_{i_0+1}, \tau)$, which is roughly
  $\nicefrac{\gamma}{\la^{i_0}}$. The cost of transferring this server
  mass to $v_i$ from its parent is
  $\la^i \cdot \nicefrac{\gamma}{\la^{i_0}}$. We pass this value as an
  argument $\xi$ to \DUZ in line~\eqref{l:duzcall}, indicating the
  extent to which we should raise dual variables in this
  procedure.

  Moreover, we need to remember these values: for each node $v$ and
  timestep $\tau \in  \cR^{ns}(v)$, we maintain a quantity
  $\Gamma(v, \tau)$, which denotes the total dual objective value
  raised for the constraints in $\fL^v(\tau)$.  If these constraints were
  added by $\DUZ(v, \tau, \xi)$, we define it as $\xi$; and finally,
  if they were added by $\DU(v, \tau)$ procedure, this stays equal to
  the usual amount $\gamma$ (as in the algorithm for $\kser$). In case %
  $\tau \in \cR^s(v)$, this quantity is undefined.

\end{itemize}

We first explain \BT and \BW in \S\ref{sec:build-tree}, which build
the set $Z_q$ and the trees to satisfy the piggybacked requests, and
the charging forest. Then we describe the modified local update
procedures in \S\ref{sec:duz-new} and
\S\ref{sec:local-updatenew}: the main changes are to \DUZ, but small changes also
appear in \DU.

\subsection{The \BT procedure}
\label{sec:build-tree}
\begin{figure}
    \centering
    \includegraphics[width=6in]{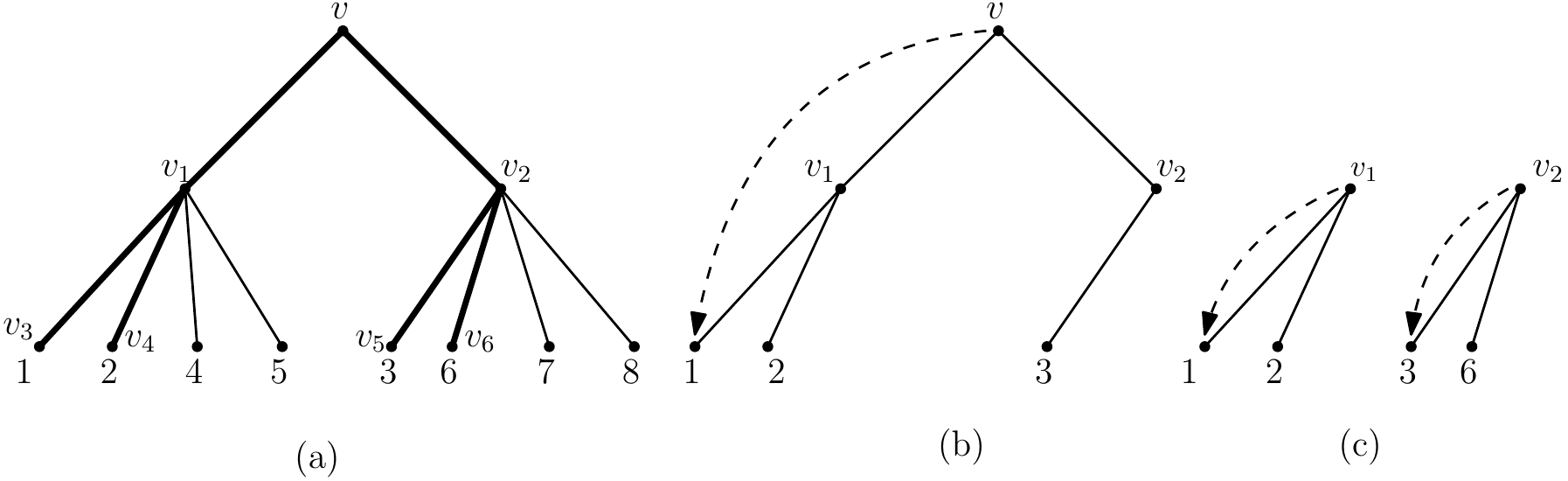}
    \caption{Example of \BT procedure when processing $v$: (a) tree
      rooted at $v$, with $c_v =5, c_{v_1} = c_{v_2} = 2$, all leaves
      have cost 1. For each leaf, the earliest deadline of an active
      request is shown.  $\FL(q,v)$ returns the subtree in (b) with
      $S=\{v_1, v_2\}$. $\FL(q,v_1)$ and $\FL(q,v_2)$ return trees in
      (c) with $S$ being $\{v_3, v_4\}$ and $\{v_5, v_6\}$
      respectively. The dashed arrows indicate the associated leaf
      requests. The heavier edges in (a) indicate the tree returned by
      $\BT(q,v)$. The nodes
      $(w,q)$ for $w \in \{v,v_1, v_2, v_3, v_4, v_5, v_6\}$ get added to
      $\chF(v)$.}
    \label{fig:BTproc}
\end{figure}

To find the piggybacked requests, the main procedure calls the \BT
procedure (\Cref{alg:btree}). This procedure first obtains an estimate
$\cost(q)$ of the cost incurred to satisfy the critical request at
time $q$, and defines $Z_q$ to be the first
$\floor{\log_\lambda 2\lambda \cost(q)}$ nodes on the backbone. The
estimate $\cost(q)$ is the minimum cost of moving servers to $\RL(q)$
so that it has $1-\delta'$ amount of server mass while ensuring that
all leaf nodes have at least $\delta-\gamma$ server mass. Since our
algorithm moves servers from active leaf nodes only, and $\DU$
procedure never moves more than $\gamma$ amount of server in one
function call (see \Cref{cl:dualtrnew}), $\cost(q)$ is a lower bound on the cost incurred by the algorithm to move server mass to $v_0$.
For
each node $v$ in $Z_q$, $\BT$ then finds a tree $F_{v,q}$ of cost at most
$H^2 \cdot c_v$.

Given a node $v \in Z_q$, the tree $F_{v,q}$ is built by calling the
sub-procedure \FL (\Cref{alg:findedge}) on nodes at various levels,
starting with node $v$ itself. (See \Cref{fig:BTproc}.) When called for a node $w$, \FL returns
a subtree $G$ of cost at most $H c_w$ by adding paths from $w$ to
some set of leaves. Specifically, it sorts the leaves in increasing
order of deadlines of the current requests (i.e., in \emph{Earliest
  Deadline First} order). It then adds paths from $w$ to these leaves one by
one until either (a) all leaves with current requests have been
connected, or (b)~the union of these paths contains some level with
cost at least $c_w$. In the latter case, \BT calls $\FL$ for the set
$S$ of nodes at this ``tight'' level. (If $\FL(q,w)$ returns a set of
nodes $S$, nodes in $S$ are said to be \emph{spawned} by $w$, and
necessarily lie at some level lower than $w$.) A simple induction
shows that the total cost of calls to $\FL(q,w)$ for nodes $w$ at any
level cost at most $H c_v$, and hence the tree $F_{v,q}$ returned by
$\BT(q)$ costs at most $H^2 c_v$.

\begin{algorithm}[H]
  \caption{$\BT(q)$}
  \label{alg:btree}
  $\cost(q) \gets$ min-cost to increase server at
  $\RL(q)$ to $1-\delta'$, $\lcost(q) \gets \floor{\log_\la
   (2\la\, \cost(q))}$\;
  $Z_q \gets \{ v_0, v_1, \ldots, v_{\lcost(q)} \}$ \tcp*[f]{$v_0 =
    \RL(q)$ is the request location, $v_1, v_2, \ldots$ are its
    ancestors.} \;
  \ForEach{$v \in Z_q$}{
    initialize a queue $Q \leftarrow \{v\}$, subtree $F_{v,q} \leftarrow \emptyset$. \;
    \While{$Q \neq \emptyset$}{  
      $w \gets$ dequeue$(Q)$. \;
      $(G, S) \leftarrow \FL(q,w)$; we say that nodes of $S$ are \emph{spawned} by $w$ at time $q$.\;
      $F_{v,q} \leftarrow F_{v,q} \cup G$. \;
      \lForEach{$u \in S$}{enqueue$(Q,u)$.}
      $\BW(q,w,v)$. \;  
    }
  }
  \Return set $Z_q$ and subtrees $\{F_{v,q}\}$. 
\end{algorithm}

\begin{algorithm}
  \caption{$\FL(q, w)$}
  \label{alg:findedge}
  $\ell_1, \ell_2, \ldots \gets$ leaves of $T_w$ in increasing order of
  deadline of outstanding requests at them. \;
  Initialize $G \leftarrow \emptyset$. \;
  \For{$i=1, 2, \ldots$}{
    add the $\ell_i$-$w$ path in $T_w$ to the subtree $G$. \;
    \If{cost of vertices in $G$ at some level $\ell$ (strictly
      below $w$'s level) is at
      least $c_w$}{ \label{l:stop}
      \Return $(G,  S)$, where $S$ is the set of vertices at level $\ell$ in $G$.
    }
  }
  \Return $(G, \emptyset)$.
\end{algorithm}

For each node $w$ that is either the original node $v$ or else is
spawned during \FL, 
the algorithm calls the procedure $\BW(q,w,v)$ to construct the
charging tree: we describe this next.

\subsubsection{\BW and the Charging Forest}
\label{sec:charging-forest}

\begin{algorithm}[H]
  \caption{$\BW(q,w,v)$}
  \label{alg:bwitness}
  \textbf{add} node $a := (w,q)$ to $\chF(v)$ \;
  \textbf{let} $(\ell,I) \leftarrow \leafreq(w,q).$ %
  \label{l:bwleaf} \;
  \textbf{let} $q' \gets \arg\max\{ q'' \mid q'' < q , (w,q'') \in \chF(v) \}$. \;
  \If{$q' \in I$}{
    \ForEach{node $w'$ spawned by $w$ in \BT$(q',v)$}{
      make $(w',q')$ a child of $(w,q)$ in the witness forest $\chF(v)$.
    }
  }
\end{algorithm}

\begin{figure}
    \centering
    \includegraphics[width=5.5in]{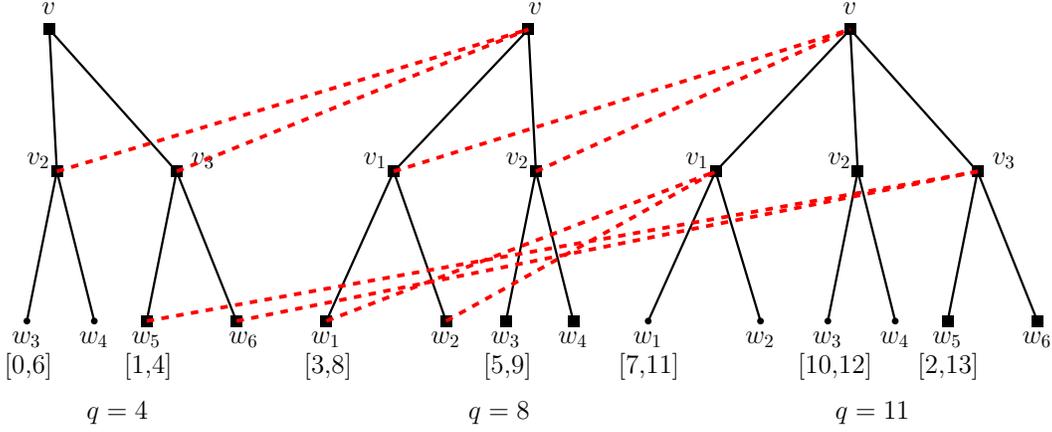}
    \caption{The trees $F_{v,q}$ returned by $\BT(q)$ for node $v$ at times $q=4,8,11$ are shown
      in black/solid. For each, the bold squares are nodes that lie in
      $S$. These are also nodes of the charging forest $\chF(v)$,
      whose edges are shown in red/dashed. The request intervals for only
      the relevant leaf nodes are shown: each bold square is
      associated with the leaf request below it with the earliest end
      time. For example, the interval associated with $v_3$ at time
      $q=11$ is $[2,13]$: since this interval is active at time $q=4$,
      we have edges from $(v_3, q=11)$ to nodes $(w_5, q=4), (w_6, q=4)$
      spawned by $v_3$ at time $q=4$. }
    \label{fig:Charge}
\end{figure}
Each node $v$ maintains a charging forest $\chF(v)$, which we use to
build a lower bound on the value of the optimal solution for servicing
the outstanding requests below $v$, assuming there is just one
available server. The construction here is inspired by the analysis
of~\cite{AzarGGP17}. We use this charging forest to add constraints to
$\fL^v$ (during \DUZ procedure) and to build a corresponding dual
solution. We need one more piece of notation: for node $w$ and time
$q$, let $\ell$ be the leaf below $w$ such that the active request at
$\ell$ has the earliest deadline after $q$. (In case no active request lies
below $w$ at time $q$, this is undefined). Let $I$ be the
corresponding request interval at $\ell$. We use $\leafreq(w,q)$ to
denote the pair $(\ell, I)$.

The procedure $\BW(q,w,v)$ adds a new vertex called $(w,q)$ to the charging forest $\chF(v)$.
To add edges, let
$q' < q$ be the largest time such that $\chF(v)$ contains a vertex
of the form $(w,q')$. Let $(\ell, I)$ denote $\leafreq(w,q)$.
If time
$q'$ also belongs to $I$, we add in an edge from $(w,q)$ to $(w',q')$ for every node $w'$ that was spawned by $w$ in
the call to $\BT(v,q')$. (See~\Cref{fig:Charge}.)

Here's the intuition behind this construction: at time $q'$, there
were outstanding leaf requests below each of the nodes $w'$ which were
spawned by $w$. The reason that interval $I$ was not serviced at time
$q'$ (i.e., the leaf $\ell$ was not part of the tree returned by
$\BT(q',v)$) was because the intervals chosen in that tree were preferred over
$I$, and the total cost of servicing them was already too high. This
allows us to infer a lower bound.

\subsection{Reminder: Truncated Constraints}
\label{sec:reminder-of-trunc}

We now describe the procedures $\DUZ$ and $\DU$ in detail; both of these procedures will add (truncated) constraints of the form $\varphi_{A,f,\btau,v}$ to the local LP for a node $v$ as defined in~\eqref{eq:implied1new}. For sake of completeness, we formally define this notion here:
\begin{defn}[Truncated Constraints]
\label{def:twtrunc}
Consider  a node $v$, a subset $A$ of nodes in $T_v$ (where no two them have an ancestor-descendant relationship), a function $f: A \to \mathfrak{R}$ 
mapping each node $u \in A$ to a request $(\ell_u, [b_u,e_u])$ at some leaf $\ell_u$ below $u$, and an assignment $\tau_u$ of timesteps to each $u \in T^A_v$. 
The timesteps $\btau$ must satisfy the following two (monotonicity)
properties: (a) For each node $u \in T^A_v$, $\tau_u \geq \max_{w \in A \cap T_u} e_w$; (b) If $v_1, v_2$ are two  nodes in $T^A_v$ with $v_1$ being the ancestor of $v_2$, then $\tau_{v_1} \geq \tau_{v_2}.$
Given such a tuple
$(A, f,\btau, v)$, 
the truncated constraint $\varphi_{A,f,\btau, v}$ (ending at timestep $\tau_v$) is defined as follows:
\begin{align*}
  \sum_{u \in A \cap T_v, u \neq v} \yv(u, (b_u, \tau_{p(u)}]) + \sum_{u \in T^A_v \setminus A, u \neq v} \yv(u, (\tau_u, \tau_{p(u)}]) \geq |A \cap T_v| - k_{v, \tau_v} -2\delta(n-n_v).
\end{align*}
\end{defn}

\subsection{The Simple Update Procedure}
\label{sec:duz-new}

The \DUZ procedure is called with parameters: node $v_i$, timestep $\tau$ with
$\floor{\tau} = q$, and target dual increase $\xi$.  In
the case without time-windows, this procedure merely added a single
$\bot$-constraint. Since we may now satisfy requests due to
piggybacking, the new version of \DUZ adds other constraints
and raises the dual variables corresponding to them.

Recall that $\BT$ defines an estimate $\cost(q)$ %
and sets $\lcost(q) = \floor{\log_\la 2\la \cost(q)}$.
After 
defining $\Gamma(v_i, \tau) := \xi$, \DUZ tries to add a constraint to
$\fL^{v_i}$---for this purpose we use the highest index $\ist \leq i$
for which we have previously added a node to the charging forest
$\chF(v_\ist)$ at time $q$. Hence we set $\ist := \min(i,
\lcost(q))$. As explained in \S\ref{sec:charging-forest},
$\chF(v_\ist)$ has a tree rooted at $(v_\ist,q)$, call it $\cT$.  The
algorithm now splits in two cases:

\begin{enumerate}
\item[(i)] Tree $\cT$ is just the singleton
  vertex $(v_\ist,q)$: we add a $\bot$-constraint in
  line~\eqref{l:addp1new} and add $\tau$ to $\cR^s(v_i)$. The intuition is that the tree $\cT$ gives
  us a lower bound for serving the piggybacked requests. So if it has
  no edges, we cannot add a non-$\bot$ constraint. %

\item[(ii)] Tree $\cT$ has more than one vertex: in this case we add a (non-$\bot$) constraint to $\fL^{v_i}$, details of which are described below. 

\end{enumerate}

\begin{algorithm}
  \caption{$\DUZ(v_i, \tau, \xi)$}
   $\Gamma(v_i,\tau) \gets \xi$, add $\tau$ to $\awake(v_i)$. \;
   $i^\star \gets \min(i, \lcost(q))$.  \;
   \eIf{the charging tree $\cT$ in $\chF(v_\ist)$ containing $(v_\ist,q)$ is a singleton}{
     $\cons^{v_i}(\tau) \gets$ $\bot$-constraint for
     $\varphi_{A,f,\btau, v_i}$, where $A = \{v_i\}, f(v_i) =
     (\RL(\tau), \RI(\tau)), \tau_{v_i} = \tau$; add $\tau$ to
     $\cR^s(v_i)$ \label{l:addp1new} \tcp*[f]{solitary timestep for $v_i$} %
   }
   {
     Let $s$ be such that level-$s$ leaves $\cL_s$ in $\cT$ have cost
     at least $c_{v_\ist}/H$. \tcp*[f]{$\cT$ not a singleton} \;
     \ForEach{$a_j=(u_j, q_j) \in \cL_s$}
     {
       Let $(\ell_j, R_j=[b_j, e_j])$ be $\leafreq(a_j)$ as defined in line~\eqref{l:bwleaf} of $\BW(q_j, u_j, v_\ist)$. \label{l:leafduz} \; 
       add $u_j$ to $A$, define $\tau_{u_j} \leftarrow b_i$, $f(u_j) = (\ell_j, R_j)$.  \;
   }
   define $\tau_w \gets \tau$ for each internal node $w$ in $T^A_{v_i}$. \;
   add the constraint $\varphi_{A,f,\btau,v_i}$ (as shown
   in~\eqref{eq:consz}) to $\cons^{v_i}(\tau)$, and set the dual
   variable accordingly; add $\tau$ to $\cR^{ns}(v_i)$
   \tcp*[f]{non-solitary timestep for $v_i$} \label{l:addp2new}
   } 
\end{algorithm}

It remains to describe how to add the local constraints and set the dual variables in case $\cT$ contains more than one node. Recall that $\cT$ is rooted at $(v_\ist,q)$. 
The \BT procedure ensures that the nodes spawned by any node
$w$ cost at least $c_w$; applying this inductively ensures that if
$\cL$ is the set of leaves of this tree $\cT$, we have
$\sum_{a \in \cL} c_a \geq c_{v_\ist}$. (Here we abuse notation by defining
the cost of a tuple $a = (w, q)$ in $\cT$ to equal the cost of the
node $w$.) Hence, there is some level $j$ such that leaves in $\cT$
corresponding to level-$j$ nodes have cost at least $\nf{c_{v_\ist}}H$. Let these
leaves of $\cT$ be denoted
$\cL_s := \{ a_j = (u_j, q_j) \}_{j = 1}^r$.

For each leaf $a_j = (u_j, q_j) \in \cL_s$ of this charging tree, 
let $(\ell_j, R_j = [b_j,e_j])$ denote $\leafreq(a_j)$ (as in
line~\eqref{l:bwleaf} of $\BW(q_j, u_j, v_i)$). 
Define
$A := \{u_j \mid \exists q_j \text{ s.t. } (u_j,q_j) \in \cL_j\}$ to
be the subset of nodes of the original tree $T$ corresponding to the
nodes $\cL_s$ from the charging tree, and define
$f: u_j \mapsto (\ell_j, R_j)$. Recall that $T^A_{v_i}$ denotes the
minimal subtree rooted at $v_i$ and containing $A$ (as
leaves). %
For each node $u_j \in A$, define $\tau_{u_j} = b_j$. Define the
timestep $\tau_w$ for each internal node $w$ in $T^A_{v_i}$ to be
$\tau$. (Note that $\cT$ was rooted at $v_\ist$, but we define $\tau$
for the portion of the backbone from $v_\ist$ up to $v_i$ as well.)
 We will show in~\Cref{cor:alive} that setting
$\tau_{v_\ist}$ to $\tau$ does not violate the monotonicity property,
i.e., $e_j \leq q \leq \tau$ for all request intervals
$R_j$.

Now we add to $\cons^{v_i}(\tau)$ the truncated constraint $\psi_{A,f,\btau,v_i}$, which can be written succinctly as
\begin{align}
\label{eq:consz}
\sum_{u_j \in A} y^{v_i}(u_j, (b_j, \tau])  
  \geq |A| - k_{v_i, \tau_{v_i}} -2\delta(n-n_{v_i}) ,
\end{align}

Observe that the RHS above is positive because $|A| \geq 1$ and $k_{v_i, \tau_{v_i}} < 1-\delta'$. 
Finally, we set the dual variable for this single constraint to
$\xi/(|A| - k_{v_i, \tau_{v_i}} -2\delta(n-n_{v_i}))$, so that the dual objective
increases by exactly $\xi$. We end by declaring timestep
$\tau$ non-solitary, and hence adding it to $\cR^{ns}(v_i)$.

\subsection{The Full Update Procedure}
\label{sec:local-updatenew}

\newcommand{\VI}{v}
\newcommand{\VIM}{u_0}
\newcommand{\yvi}{\yv{\VI}}

The final piece is procedure $\DU(\VI, \tau, \gamma$).  This is
essentially the version in \S\ref{sec:full-local-basic}, with one
change. Previously, if $\sib(\VIM, \tau)$ was not empty, we could have
had very little server movement, in case most of the dual increase was
because of $b^{C_{\VIM}}$. To avoid this, we now force a non-trivial
amount of server movement. When the dual growth reaches $\gamma$, we
stop the dual growth, but if there has been very little server
movement, we transfer servers from active leaves below
$\sib(\VIM, \tau)$ in line~\eqref{l:more}. 

The intuition for this step is as follows:
in the $\DUZ(v_{i}, v_0, \xi)$ procedure for $v_i$ below $v$, we need to
match the dual increase (given by $\xi$) by the amount of server that
actually moves into $v_{i}$. This matching is based on the assumption
that at least $\gamma/\la^h$ transfer happens during the \DU
procedure. By adding this extra step to $\DU$, we ensure that a
roughly comparable amount of transfer always happens.

Finally, let us elaborate on the constraint $C(\VI,\sigma,
\tau)$. This is written as in~\eqref{eq:cons}, using the modified
composition rule for \ksertw from~\Cref{cor:comp}. Since we did not
spell out the details, let us do so now. As before, $\VIM$ is the
principal child of $\VI$ at $\tau$, and
$U := \{\VIM\} \cup \sib(\VIM, \tau)$.  Each of the constraints
$C_u \in \fL^u(\tau_u), u \in U$ has the form
$\varphi_{A(u), f(u), \btau(u), u}$ for some  tuple
$(A(u),f(u),\btau(u))$ for node $u$ ending at $\tau_u
:=\tau(u)_u$. Partition the set $U$ into two sets based on whether
$\cons^{u}(\tau_u)$ is a $\{\bot\}$-constraint (i.e., whether $\tau$
is in $\cR^s(u)$ or in $\cR^{ns}(u)$):
$U' := \{ u \in U: C_u \text{ is a $\bot$-constraint}\}$, and
$U'' := U \setminus U'$. Recall that $I_u$ denotes the interval
$(\tau_u, \tau]$.  For a node $u \in U'$, the $\bot$ constraint is
given by $\varphi_{A(u), f(u), \btau(u), u}$, where $A(u)=\{u\}$, and
let $b_u$ is the starting time of the request interval corresponding
to $f(u)$. Let $I_u'$ denote the interval $(b_u, \tau_u]$.
The new constraint $C(\VI,\sigma, \tau)$ is the composition
$\varphi_{A,f,\btau,\VI}$ of these constraints, and by~\Cref{cor:comp}
implies:
\begin{align}
  \label{eq:consnew}
  \underbrace{\sum_{u \in U'} \yv(u, I_u') + \sum_{u \in U} \left( \yv(u,
  I_u) + a^{C_u} \cdot \yv \right)}_{a^{C(\VI,\sigma,\tau)} \cdot \yv} \geq 
  \underbrace{\sum_{u \in U} \left( \Diff(u, I_u) + b^{C_u} \right)  + 
  (n_{\VI} - \sum_{u \in U} n_u) \delta}_{\leq b^{C(\VI,\sigma, \tau)}}.
\end{align}

Observe that the dual update process itself in $\DU$ remains unchanged
despite these new added variables corresponding to $I_u'$: these
variables $\{\yv(u,\tau)\}_{\tau \in I_{u}', u \in U'}$ do not appear in
line~\eqref{l:draise2new}. Hence all the steps here \emph{exactly
  match} those for the \kser setting, except for
line~\eqref{l:more}. This completes the description of the local
updates, and hence of the algorithm for \ksertw.

\begin{algorithm}[H]
  \caption{$\DU(v, \tau$)}
  \label{algo:dualnew}
  let $h \gets \level(v)-1$ and $u_0 \in \chi_{v}$ be child containing
  the current request $v_0 := \RL(\tau)$. \;
  let $U \gets \{u_0\} \cup \sib(u_0,\tau)$; say $U = \{u_0, u_1,
  \ldots, u_\ell\} $, $L_U \leftarrow $ active leaves below $U \setminus \{u_0\}$. \;
  add timestep $\tau$ to the event set $\cR^{ns}(v)$ and to  $\awake(v)$. \tcp*[f]{``non-solitary''
    timestep for $v$}\;
  \myhl{set timer $s \gets 0$, $\Gamma(\VI,\tau) \gets \gamma$.} \;
  \Repeat{the dual objective corresponding to constraints in $\cons^v(\tau)$ becomes $\gamma$.}{
    \For{$u \in U$}{
      let $\tau_u \gets \prev(u,\tau)$ and $I_u = (\tau_u,\tau]$.
          \label{l:Iunew} \;
      let $C_u$ be a slack constraint in $\cons^u(\tau_u)$. \tcp*[f]{slack constraint exists since $\prev(u,\tau)$ is awake}
      \label{l:chooseCnew}
    }
    let $\sigma \gets (C_{u_0}, C_{u_1}, \ldots, C_{u_\ell})$ be the resulting tuple of
    constraints. \;
    add new constraint $C(v,\sigma, \tau)$ to the constraint set $\cons^v(\tau)$.\;

    \While{all constraints $C_{u_j}$ in $\sigma$ are slack
      \textbf{and} dual objective for $\cons^v(\tau)$ less than $\gamma$}{ \label{l:dualnew}
      increase timer $s$ at uniform rate. \;
      increase $z_{C(v,\sigma, \tau)}$ at the same rate as $s$. \label{l:draise1new}\;
      for all $u \in U$, define $S_u := I_u \cap \left( \cR^{ns}(u) \cup \{\tau_u+\eta\} \right).$ \label{l:sunew} \;
      increase $\yv(u, t)$ for $u \in U, t \in S_u$
      according to 
      $\frac{d \yv(u,t)}{d s} = \frac{\yv(u,t)}{\lambda^{h}} +
        \frac{\gamma}{Mn \cdot \lambda^h}$.     \label{l:draise2new}
      \;
      \underline{\textbf{transfer}} server mass from $ T_u$ into $v_0$ at rate $\frac{d
        \yv(u,I_u)}{d s} + \frac{
          b^{C_u}}{\lambda^h}$ using the leaves in $L_U \cap T_u$, for each $u \in U \setminus \{u_0\}$
     \label{l:dtrnew}\;
    }
    \ForEach{constraint $C_{u_j}$ that is depleted}{
      \label{l:mark2new}
      \lIf{\textit{all} the constraints in $\cons^{u_j}(\tau_{u_j})$
        are depleted}{remove $\tau_{u_j}$ from $\awake(u_j)$.     }
    }
  }
  let $\alpha \leftarrow$ total amount of servers transferred to $v_0$ during this function call. \;    
  \myhl{\If{$\sib(\VIM, \tau) \neq \emptyset$ and  $\alpha < \frac{\gamma}{4H \la^h}$}{
       transfer more servers from $L_U$ to $v_0$ until total transfer equals $\frac{\gamma}{4H \la^h}$. 
       \label{l:more}    
    }}
\end{algorithm}


\section{Analysis for \texorpdfstring{\ksertw}{k-server-TW}}
\label{sec:tw-analysis}

The analysis for \ksertw closely mirrors that for \kser; the principal
difference is due to the additional intervals $I_u'$ on the LHS
of~\eqref{eq:consnew}. If the intervals $I_u'$ are very long, we may
get only a tiny lower bound for the objective value of the LPs:
raising only a few $\yv$ variables variables could satisfy all such
constraints. The crucial argument is that the intervals $I_u'$ are
disjoint for any given vertex $v$ and descendant $u'$: this gives us
approximate dual-feasibility even with these $I_u'$ intervals, and
even with the dual increases performed in the \DUZ procedure. To show
this disjointness, we have to use the properties of the charging
forest.  A final comment: timesteps in $\cR^{ns}(v)$ are now added by
both $\DU$ and $\DUZ$, whereas only $\DUZ$ adds timesteps to
$\cR^s(v)$.

\subsection{Some Preliminary Facts}

\begin{claim}[Facts about $\Gamma$]
  \label{cl:GammaFacts}
  Fix a node $u$ with parent $v$, and timestep  $\tau \in \cR^{ns}(u)$. 
  \begin{OneLiners}
  \item[(i)] \label{cl:matchinggamma} $\frac{\gamma}{\la^H} \leq \Gamma(u, \tau) \leq \gamma$.
  \item[(ii)] \label{cl:gammarange} 
  If $\DU(v, \tau)$ is called, then
    $\Gamma(u,\tau) = \gamma$.
  \item[(iii)] \label{cl:dualsimple} 
   If $\tau$ gets added to $\cR^{ns}(u)$ by $\DUZ$ procedure, then 
    the dual objective value for the sole constraint in
    $\cons^u(\tau)$ is $\Gamma(u, \tau)$.
  \end{OneLiners}
\end{claim}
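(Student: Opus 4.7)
The plan is to verify each of the three facts by direct inspection of the two procedures that can ever set $\Gamma(u,\tau)$, namely $\DUZ$ and $\DU$, and to trace back to where they are invoked from \Cref{algo:mainnew}. Since all calls to $\DUZ(v_i,\tau,\cdot)$ and $\DU(v_i,\tau)$ with a fixed timestep $\tau$ happen inside one iteration of the $\textbf{while}$ loop at request time $q = \floor{\tau}$, and only for backbone nodes $v_0,\ldots,v_H$, any $u$ with $\tau \in \cR^{ns}(u)$ relevant to the claim must lie on this backbone.

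For part~(i), $\Gamma(u,\tau)$ is assigned exactly once: either in $\DU(u,\tau)$, where it is set to $\gamma$, or in $\DUZ(u,\tau,\xi)$, where it is set to $\xi$. The first case makes the bound immediate. In the second case, $\DUZ$ is invoked from line~\eqref{l:duzcall} of \Cref{algo:mainnew} with $\xi = \la^{i}\,\gamma/\la^{i_0}$ for indices $0 \le i \le i_0 \le H$, so $\la^{i-i_0} \in [\la^{-H},1]$ and hence $\Gamma(u,\tau) = \xi \in [\gamma/\la^H,\,\gamma]$.

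For part~(ii), if $\DU(v,\tau)$ is called then $v = v_j$ on the backbone for some $j > i_0$, and the child $u$ on the backbone is $v_{j-1}$. Either $j-1 > i_0$, in which case $\DU(v_{j-1},\tau)$ was executed earlier in the same iteration and set $\Gamma(u,\tau) = \gamma$; or $j-1 = i_0$, in which case $\DUZ(v_{i_0},\tau,\xi)$ was executed with argument $\xi = \la^{i_0}\,\gamma/\la^{i_0} = \gamma$, again setting $\Gamma(u,\tau) = \gamma$. For any other child $u'$ of $v$, $\tau \notin \cR^{ns}(u')$, so the statement is vacuous there.

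For part~(iii), $\tau$ gets added to $\cR^{ns}(u)$ by $\DUZ(u,\tau,\xi)$ only on line~\eqref{l:addp2new}, which is reached exactly when the charging tree $\cT$ in $\chF(v_{\ist})$ containing $(v_{\ist},q)$ is not a singleton. In that branch a single constraint $\varphi_{A,f,\btau,u}$ is added to $\cons^u(\tau)$ with RHS $b^C = |A| - k_{u,\tau_u} - 2\delta(n-n_u)$, and its dual variable is set to $\xi/b^C$, so the dual objective contribution equals precisely $\xi = \Gamma(u,\tau)$. A one-line positivity check $b^C > 0$ is needed, and follows from $|A|\geq 1$, $k_{u,\tau_u} \leq 1-\delta'$ (since the $\textbf{while}$ loop has not yet terminated at $\tau$), and $2\delta n \ll \delta'$. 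I expect no real obstacle in any of the three parts; the only places warranting care are matching the exponent $\la^{i-i_0}$ in part~(i) against the stated lower bound, and the small positivity check for $b^C$ in part~(iii).
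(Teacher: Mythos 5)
Your proposal is correct and follows essentially the same approach as the paper's proof: trace the two assignment sites of $\Gamma$ (in \DU and \DUZ), plug in the argument passed on line~\eqref{l:duzcall} of \Cref{algo:mainnew}, and for part~(iii) inspect the constraint~\eqref{eq:consz} and how its dual value is set. The only cosmetic difference is that you allow $i\ge 0$ for the \DUZ case (the literal loop range), whereas the paper writes $1 \le i$; both give the same bound $\Gamma(u,\tau) \ge \gamma/\la^H$, and your range is the one actually executed, so no gap there.
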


\begin{proof}
  The first claim follows from the fact that $\Gamma(u,\tau)$ is
  either set to $\gamma$ (in \DU) or
  $\lambda^i \frac{\gamma}{\lambda^{i_0}}$ (in \DUZ), and that
  $1 \leq i \leq i_0 \leq H$ in line~\eqref{l:duzcall}.
  For the second claim, if $\tau$ gets added to $\cR^{ns}(u)$ by $\DU$, then the statement follows immediately. Otherwise it must be the case that 
  $u = v_{i_0}$, and we  call
  $\DUZ(u, \tau, \gamma)$ in line~\eqref{l:duzcall} of ~\Cref{algo:mainnew}, giving
  $\Gamma(u, \tau) = \gamma$ again.

  For the final claim, observe that $\cons^u(\tau)$ contains a single
  constraint $C$ given by~\eqref{eq:consz}, and we set $z(C)$ to be
  $\frac{\Gamma(u, \tau)}{|A| - k_{u, q_u} -2\delta(n-n_u)}$. 
\end{proof}

\begin{claim}[Facts about $Z_q$]
  \label{cl:addi0}
  Suppose the leaf $\ell_q$ becomes critical at time $q$, and $v_i$ is an ancestor
  of $\ell_q$ such that all leaves in $T_{v_i}$ (including $\ell_q$) are inactive at time $q$. Then $v_{i+1}$ gets added to the
  set $Z_q$.
\end{claim}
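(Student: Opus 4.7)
The plan is to show that $\lcost(q) \geq i+1$, so that $v_{i+1}$ is included among $\{v_0,\ldots,v_{\lcost(q)}\} = Z_q$. Since $\lcost(q) = \floor{\log_\la (2\la\,\cost(q))}$, it suffices to establish the inequality $\cost(q) \geq \la^i/2$.

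To bound $\cost(q)$ from below, I would analyze any feasible server movement achieving the minimum. By definition, such a movement must raise $k_{\ell_q}$ from its current value to at least $1-\delta'$, while keeping every leaf at mass $\geq \delta - \gamma$. Since $\ell_q$ is inactive at time $q$, we have $k_{\ell_q,q} < \delta$, so the total mass that must be delivered to $\ell_q$ is at least $1 - \delta' - \delta$.

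The key step is to argue that almost all of this delivered mass must come from outside $T_{v_i}$. Indeed, every leaf $\ell$ in $T_{v_i}$ is inactive at time $q$, so $k_{\ell,q} < \delta$ (by hypothesis) and the constraint $k_\ell \geq \delta-\gamma$ allows extracting at most $\gamma$ from each such leaf. Thus the total server mass available from within $T_{v_i} \setminus \{\ell_q\}$ is bounded by $\gamma(n_{v_i}-1) \leq \gamma n$, which is at most $1/n^3$ by our choice of $\gamma = 1/(n^4\Delta)$. Consequently the amount of server that must cross the edge $(v_i, p(v_i))$ into $T_{v_i}$ is at least $1 - \delta' - \delta - \gamma n \geq 1/2$ (using $\delta' \leq 1/n^2$, $\delta \leq 1/(10n^3)$). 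Since this edge has length $c_{v_i} = \la^i$, every unit of server entering $T_{v_i}$ pays at least $\la^i$, giving
\[
\cost(q) \;\geq\; \la^i \cdot (1 - \delta' - \delta - \gamma n) \;\geq\; \la^i/2.
\]

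Plugging into the definition, $2\la \cdot \cost(q) \geq \la^{i+1}$, and therefore $\lcost(q) = \floor{\log_\la(2\la\,\cost(q))} \geq i+1$, so $v_{i+1} \in Z_q$. The only subtlety in this argument is making sure the parameter inequalities ($\delta,\delta',\gamma n$ all much smaller than $1$) are invoked correctly so that the $1/2$ constant is valid; the rest is bookkeeping against the definitions of $\cost(q)$ and $Z_q$.
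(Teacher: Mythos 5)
Your proof is correct and follows essentially the same route as the paper: lower-bound $\cost(q)$ by observing that nearly all of the required $1-\delta'$ mass must cross the edge $(v_i,p(v_i))$ of cost $\la^i$, and then conclude $\lcost(q)\geq i+1$. You are in fact slightly more careful than the paper in explicitly accounting for the up-to-$\gamma$ mass extractable from each inactive leaf inside $T_{v_i}$ (the paper simply asserts all mass comes from outside, giving the bound $(1-\delta'-\delta)\la^i$), but this is a refinement of the same argument, not a different one.
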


\begin{proof}
  We claim that $\lcost(q)$ is at least $i+1$. Since there are no
  active leaves in $T_{v_i}$, all the server mass needs to be brought
  into $\ell_q$ from leaves which are outside $T_{v_i}$, and so the
  total cost of this transfer is at least $(1-\delta'-\delta) c_{v_i} =
  (1-\delta'-\delta) \la^i$. It follows that $\lcost(q) = \floor{\log_\la (2 \la \, \cost(q))} \geq i+1$. 
\end{proof}

\newcommand{\istr}[1]{i^\star_{#1}}

\subsection{Congestion of Intervals for \texorpdfstring{$\bot$}{bot}-constraints}

Recall from line~\eqref{l:addp1new} that $\cons^v(\tau)$ is a
$\bot$-constraint (i.e., timestep $\tau \in
\cR^s(v)$) %
exactly when the component $\cT$ of the charging forest
$\chF(v_\ist)$ containing the vertex
$(v_\ist,\floor{\tau})$ is a singleton.

\begin{lemma}[Low Congestion I]
  \label{cl:lowcong1}
  For a vertex $v$, let $Q$ be a set of times such that for each $q \in
  Q,$ there exists a timestep $\tau_q \in
  \cR^s(v)$ satisfying $\floor{\tau_q} = q$.
  Let $\ell_q := \RL(q)$ and $[b_q, q] =
  \RI(q)$ be the request location and interval corresponding to time
  $q$. Then the set of intervals $\{[b_q,q]\}_{q \in
    Q}$ has congestion at most $H$.
\end{lemma}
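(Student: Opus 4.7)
\medskip

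The plan is to fix a time $t$ and consider $Q_t := \{q \in Q : t \in [b_q, q]\}$, aiming to show $|Q_t| \le H$. For each $q \in Q_t$, I will associate the \emph{witness node} $w_q := v_{i^\star_q}$, where $i^\star_q = \min(\level(v), \lcost(q))$; the definition of $\cR^s(v)$ guarantees that $(w_q, q)$ is the singleton component of $\chF(w_q)$ that was checked inside $\DUZ(v,\tau_q,\xi)$. The heart of the proof will be to show that the witnesses $\{w_q : q \in Q_t\}$ are distinct and form a chain in $T_v$ under the ancestor-descendant relation. Since $v$ can never be the root of $T$ (the root has no siblings, so $i_0 \le H-1$, forcing $\level(v) \le H-1$ whenever $\DUZ$ is called on $v$), any chain inside $T_v$ has at most $H$ nodes, which yields the bound.

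To establish the chain and distinctness properties, I will consider any two $q_1 < q_2$ in $Q_t$. Since both request intervals contain $t$, we have $b_{q_2} \le t \le q_1 < q_2$, so $q_1 \in [b_{q_2}, q_2)$. I will examine the singleton condition at $(w_{q_2}, q_2)$ by reading off how $\BW(q_2, w_{q_2}, w_{q_2})$ behaves, which depends on the latest earlier time $q' < q_2$ with $(w_{q_2}, q') \in \chF(w_{q_2})$. There are two cases. In the first case, $(w_{q_2}, q_1) \notin \chF(w_{q_2})$, which happens precisely when $w_{q_2} \notin Z_{q_1}$, i.e., either $w_{q_2}$ is not an ancestor of $\ell_{q_1}$, or $\level(w_{q_2}) > \lcost(q_1)$; in either subcase one deduces (using $\level(w_{q_1}) = i^\star_{q_1} \le \lcost(q_1)$ and the fact that $w_{q_2}$ lies in $T_v$) that $w_{q_2}$ is a strict ancestor of $w_{q_1}$, which both places $w_{q_2}$ above $w_{q_1}$ in the chain and makes them distinct.

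In the second case, $(w_{q_2}, q_1) \in \chF(w_{q_2})$, and the singleton condition forces $w_{q_2}$ to spawn nothing during $\FL(q_1, w_{q_2})$ in $\BT(q_1)$ (or, more carefully, nothing at the latest qualifying $q' \in [b_{q_2},q_2)$, which I will handle by iterating through intermediate critical times in $T_v$). I plan to derive a contradiction here by exploiting two structural facts that follow from $\tau_{q_1}, \tau_{q_2} \in \cR^s(v)$: firstly, at $\tau_{q_j}$ the only active leaf of $T_v$ is $\ell_{q_j}$ (since $i_0 \ge \level(v)$, inductively forces $\sib(v_{j'}) = \emptyset$ for all $j' < \level(v)$); and secondly, servicing the critical request at $\ell_{q_j}$ must move significant server mass into $T_v$ from outside, so that $\lcost(q_j)$ is at least $\level(v)+1$. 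These lower bounds on the $\lcost$ values, combined with the EDF-based FL procedure aggregating sufficiently many request-to-$w_{q_2}$ paths by the time of $q_1$, will force the cost at some level strictly below $w_{q_2}$ in $\FL(q_1, w_{q_2})$ to reach $c_{w_{q_2}}$, producing a spawn and contradicting singleton.

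The main obstacle is exactly this spawning argument in the second case: controlling which leaves are active-request leaves below $w_{q_2}$ at time $q_1$ and showing that their paths in $G$ accumulate enough cost at some level to trigger FL's stopping condition. The cleanest way I see is to argue by contradiction that if $w_{q_2}$ spawns nothing at $q_1$, then the combinatorial sparsity imposed on $G$ at every level below $w_{q_2}$ is incompatible with the lower bounds on $\lcost(q_2)$ that we derive from $\tau_{q_2} \in \cR^s(v)$, thereby ruling out the second case altogether and leaving only the strict-ancestor outcome needed for the chain.
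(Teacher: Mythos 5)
The proposal tracks the paper's overall architecture (fix a time, assign each $q$ the witness node $v_{i^\star_q}$, show the witnesses or their levels are strictly ordered, count $H$), but the case analysis has a genuine logical gap.

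Your Case~1, subcase~(a) asserts: ``if $w_{q_2}$ is not an ancestor of $\ell_{q_1}$, one deduces that $w_{q_2}$ is a strict ancestor of $w_{q_1}$.'' This cannot be right as stated --- $w_{q_1}$ is an ancestor of $\ell_{q_1}$, so if $w_{q_2}$ were a strict ancestor of $w_{q_1}$ it would also be an ancestor of $\ell_{q_1}$, contradicting the subcase hypothesis. What is actually true is that subcase~(a) \emph{cannot occur}, but establishing this is precisely the crux of the lemma and requires the core argument your plan is missing: one must show that if $w_{q_2}$ lies strictly below the lca of $\ell_{q_1}$ and $\ell_{q_2}$, then because all leaves of $T_v$ other than $\ell_{q_1}$ are inactive at time $q_1$ (a consequence of $\tau_{q_1}\in\cR^s(v)$), there must exist some intermediate critical time $q'\in(q_1,q_2)$ with $w_{q_2}\in Z_{q'}$ --- otherwise $\lcost(q_2)$ would exceed $i^\star_{q_2}$, contradicting $i^\star_{q_2}=\lcost(q_2)$. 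Once such a $q'$ exists, $\FL(q',w_{q_2})$ must return a non-empty spawn set, which contradicts the singleton condition behind $\tau_{q_2}\in\cR^s(v)$. This ``there must be an intermediate criticality'' step (the paper's Subclaim~\ref{cl:upper2}) is the idea your proposal lacks, and without it subcase~(a) is unresolved. A related structural issue: you split cases based on whether $(w_{q_2},q_1)\in\chF(w_{q_2})$, but the singleton test actually looks at the \emph{latest} $q''<q_2$ with $(w_{q_2},q'')\in\chF(w_{q_2})$, which need not be $q_1$; your Case~1 can therefore hide exactly the situation you intend to treat in Case~2.

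Separately, your Case~2 plan --- deriving the contradiction from ``combinatorial sparsity'' of $G$ at every level, tracking enough leaf requests to force a spawn by counting accumulated cost --- is overengineered and likely unprovable in that form. The argument the paper uses (Subclaim~\ref{cl:upper1}) is much simpler and robust: if $\FL(q'',w_{q_2})$ returns an empty spawn set for any $q''$ with $w_{q_2}\in Z_{q''}$ and $q''<q_2$, then the entire tree $F_{w_{q_2},q''}$ covers every leaf of $T_{w_{q_2}}$ with an outstanding request, including $\ell_{q_2}$, and line~\eqref{l:serverest1} would then service the request at $\ell_{q_2}$ at time $q''<q_2$ --- contradicting that it remains unserviced until its deadline $q_2$. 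No cost-accumulation accounting is needed.

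Finally, note the paper only proves that the $i^\star$-levels strictly increase, not that the witnesses form an ancestor chain; the former already gives the $H$ bound. Your stronger chain claim does turn out to hold (since the contradiction in subcase~(a) implies $i^\star_{q_2}$ is at or above the lca level, forcing ancestry), but you should either prove it or weaken the goal to level-monotonicity, which is what the counting really needs.
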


\begin{proof}
  For brevity, let $J_q := [b_q, q]$; $\istr{q}$ be the value of
  $\ist$ used in the call to $\DUZ$ on $v$ at timestep $\tau_q$ that
  added the $\bot$-constraint.

  \begin{claim}
    \label{cl:monot}
    Suppose there are times $p < q \in Q$ such that $p \in J_q$. Then
    $\istr{p} < \istr{q}$.
  \end{claim}

  \begin{proof}
    Let $v_{\lca}$ be the least common ancestor of leaves $\ell_p$ and
    $\ell_{q}$, and $v_m$ be the higher of $v_{\lca}$ and
    $v_{\istr{p}}$. 
    We first
    give two useful subclaims.

    \begin{subclaim}
      \label{cl:upper1}
      Let $v_i$ be an ancestor of $\ell_{q}$. Suppose $v_i$ gets
      added to the set $Z_{q'}$ for some $q' \in [p,q)$. Then the
      set $S$ returned by $\FL(q',v_i)$ is non-empty.
    \end{subclaim}

    \begin{subproof}
      Since (a)~the request at $\ell_{q}$ starts before $p$ (and hence
      before $q'$), (b)~the node $\ell_q \in T_{v_i}$, and (c)~the
      request at $\ell_q$ is not serviced until time $q$ and hence is
      still active at time $q'$, the set $S$ cannot be empty.
    \end{subproof}

    \begin{subclaim}
      \label{cl:upper2}
      If $\istr{q} < m$, then there exists a $q' \in [p,q)$ such that
      $v_{\istr{q}} \in Z_{q'}$. 
    \end{subclaim}
    \begin{subproof}
      \begin{figure}
        \centering
        \includegraphics[width=1.5in]{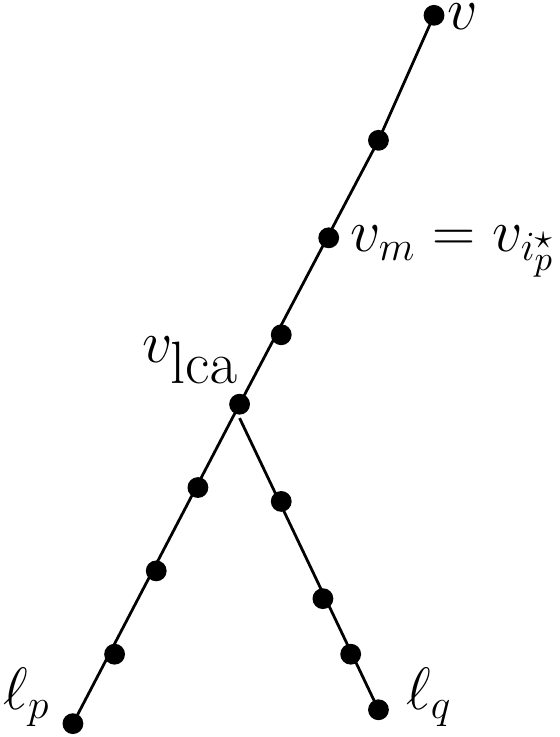}
        \caption{Illustration of proof of~\Cref{cl:upper2}: we consider the interesting case when $v_m = v_{\istr{p}}.$}
        \label{fig:proof-upper2}
    \end{figure}
    First consider the case when $\istr{q} < \lca$.  At time $p$, the
    fact that $\tau_p \in \rs(v)$ implies that no leaf in $T_v$ other
    than $\ell_p$ is active (i.e., has more than $\delta$ amount of
    server mass). Therefore, at time $p$, no leaf below $v_{\istr{q}}$
    is active. We claim that there must have been a time
    $q' \in (p,q)$ at which a request below $v_{\istr{q}}$ became
    critical. Indeed, if not, all leaves below $v_{\istr{q}}$ continue
    to remain inactive until time $q$. But then
    $\cost(q) \geq (1-\delta'-\delta) \la^{\istr{q}}$, and so
    $\lcost(q) > \istr{q}$, a contradiction. So let $q'$ be the first
    time in $(p,q)$ when a request below $v_{\istr{q}}$ became
    critical. Repeating the same argument shows that
    $\istr{q'} \geq \istr{q}$, and so $v_{\istr{q}}$ would be added to
    $Z_{q'}$.
 
    The other case is when $\lca \leq \istr{q} < m$, which means that
    $m > \lca$ and so $m = \istr{p}$. In that case $v_{\istr{q}}$ is
    added to the set $Z_p$ itself.
 \end{subproof}
    
 Now if $\istr{q} < m$, then \Cref{cl:upper2} says that $v_{\istr{q}}$
 is added to some $Z_{q'}$ for $q' \in [p,q)$. By \Cref{cl:upper1} the
 set returned by $\FL(q', v_{\istr{q}})$ is non-empty: this means
 $(v_{\istr{q}},q)$ cannot be a singleton component. This would
 contradict the fact that $\tau_q \in \cR^s(v)$. Similarly, if
 $\istr{q} = m = \istr{p}$, then $v_{\istr{p}} = v_{\istr{q}}$ and the
 argument immediately above also holds for $q' = p$.  Hence, it must
 be that $\istr{q} > \istr{p}$, which proves \Cref{cl:monot}.
  \end{proof}

  \Cref{cl:monot} implies that $p$ belongs to at most $H$ other
  intervals $\{J_{q}\}_{q \in Q}$. Indeed, if $p$ lies in the
  intervals for $q < q' < \cdots$, then $q$ also lies in the
  interval for $q'$, etc. Hence the $\ist$ values for
  $p, q, q', q'', \ldots$ must strictly increase, but then there can be
  only $H$ of them, proving \Cref{cl:lowcong1}.
\end{proof}

\subsection{Relating the Dual Updates to \texorpdfstring{$\ist$}{i*}}
\label{sec:relating-ist-dual}

We first prove a bound on the number of iterations of the {\bf while}
loop in~\Cref{algo:mainnew}: this uses the lower bound on the server
transfer that is ensured by  %
line~\eqref{l:more}).

\begin{claim}
  \label{cl:duiterations}
  Suppose a request at $v_0$ becomes critical at time $q$. The total number of iterations of {\bf while} loop in~\Cref{algo:mainnew} is at most $\frac{8H \cost(q)}{\gamma }$. 
\end{claim}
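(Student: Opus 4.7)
My plan exploits the lower bound on server transfer provided by line~\eqref{l:more} of \DU, combined with a level-by-level charging argument against the offline optimum $\cost(q)$.

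First, I observe that in each iteration of the while loop, with $i_0$ as in line~\eqref{l:eqnew}, the call $\DU(v_{i_0+1},\tau)$ triggers line~\eqref{l:more}, which forces at least $\gamma/(4H\la^{i_0})$ units of server mass to enter $v_0$ through the edge from $v_{i_0+1}$ to some child in $\sib(v_{i_0},\tau)$. Since that edge has length $\la^{i_0}$, the cost of this forced transfer alone is at least $\gamma/(2H)$.

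Second, I group iterations by the value of $i_0$. Within a single while loop the only mass movement is into $v_0$ (both \DU and \DUZ either move mass toward $v_0$ or do nothing), so the set of active off-backbone leaves can only shrink and $i_0$ is monotonically nondecreasing. Write $N=\sum_{i=0}^{I^*}N_i$, where $N_i$ is the number of iterations with $i_0=i$ and $I^*$ is the maximum value attained. Let $\mathcal{L}_{i+1}$ denote the initial active mass in $T_{v_{i+1}}\setminus T_{v_i}$ (leaves whose LCA with $v_0$ is $v_{i+1}$). The forced-transfer observation shows the mass entering $v_0$ from these leaves during $i_0=i$ iterations is at least $N_i\cdot\gamma/(4H\la^i)$; since it is at most $\mathcal{L}_{i+1}$, we get $N_i\le 4H\la^i\mathcal{L}_{i+1}/\gamma$ for $i<I^*$. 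For $i=I^*$, I will replace $\mathcal{L}_{I^*+1}$ by the residual mass $R=M-\sum_{j<I^*}\mathcal{L}_{j+1}$ still needed to saturate $v_0$ (plus an overshoot of at most the mass of a single iteration).

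Third, I charge these bounds against $\cost(q)$. Both the algorithm and the offline optimum follow a "nearest-first" consumption pattern: the algorithm because $i_0$ is nondecreasing, and the optimum because in an HST all leaves at LCA level $i+1$ lie at equal distance from $v_0$, so optimal transport uses the closest levels first. Hence $\cost(q)\ge 2\sum_{i<I^*}\la^i\mathcal{L}_{i+1}+2\la^{I^*}R$, and summing the per-level bounds gives $N\le 4H\cost(q)/\gamma+O(H\la^{I^*})$. The additive error is absorbed using $\la^{I^*}\le\Delta$, $\gamma=1/(n^4\Delta)$, and $\cost(q)\ge 2$ (since any transfer into $v_0$ crosses at least two unit edges), yielding $N\le 8H\cost(q)/\gamma$.

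The main obstacle I expect is the per-level bound in the second step. During an iteration with $i_0=i$, the algorithm may simultaneously transfer mass at higher levels $j>i$ (via $\DU(v_{j+1},\tau)$ whenever $\sib(v_j,\tau)\neq\emptyset$), which can partially deplete $\mathcal{L}_{j+1}$ before $i_0$ ever reaches level $j$. I have to argue that this does not invalidate $N_i\le 4H\la^i\mathcal{L}_{i+1}/\gamma$; this works out because that bound only charges the \emph{forced} transfer at level $i+1$ during $i_0=i$ iterations (which is at most $\mathcal{L}_{i+1}$ no matter what happens at higher levels), rather than the total consumption at that level.
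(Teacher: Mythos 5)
Your overall strategy---charging iterations to the forced transfer from line~\eqref{l:more} and then comparing to $\cost(q)$---is the same as the paper's, and your per-level step $N_i\cdot\gamma/(4H\la^i)\le \mathcal{L}_{i+1}$ is correct. But the partition by $i_0$-value together with the charge against initial active mass $\mathcal{L}_{i+1}$ creates a gap in step three that the paper avoids by a different choice of what to charge against.

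The unjustified step is the claim $\cost(q)\ge 2\sum_{i<I^*}\la^i\mathcal{L}_{i+1}+2\la^{I^*}R$. The quantity $\mathcal{L}_{i+1}$ is the initial \emph{active} mass at leaves off the backbone at level $i+1$, but the offline transport used to define $\cost(q)$ can move at most $\mathcal{L}_{i+1}-(\delta-\gamma)\cdot(\#\text{leaves at that level})$ from those leaves: it is required to leave $\delta-\gamma$ at each. "Nearest-first" tells you the \emph{order} in which $\cost(q)$ consumes levels, not that it consumes all of $\mathcal{L}_{i+1}$. Worse, $\cost(q)$ only moves the critical deficit $M\approx\delta'$ in total, so it may terminate at a level $I_{\opt}<I^*$: levels $I_{\opt}\le i<I^*$ then contribute nothing to $\cost(q)$, yet $\la^i\mathcal{L}_{i+1}$ still appears in your sum. (Your $R=M-\sum_{j<I^*}\mathcal{L}_{j+1}$ can also go negative, since $\sum_{j<I^*}\mathcal{L}_{j+1}$ exceeds the mass actually transported by up to $\delta n$.) To patch this you would have to track, per level, both the residual $\delta$-mass the algorithm leaves at each leaf and the extra mass OPT can drain ($\delta$ vs.\ $\delta-\gamma$), and argue that the $\la^i$-weighted accumulation of these errors is dominated by $\cost(q)$; none of that is done, and it is not a one-liner. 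Also, $\cost(q)\ge 2$ is false --- the correct a priori lower bound is $\cost(q)\ge\delta'$ --- though the absorption of the additive term still works with the actual parameter values.

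The paper sidesteps all of this by letting $\alpha_i$ be OPT's \emph{own} mass from level $i$, defining $t_i$ as the earliest timestep when the algorithm has moved at least $\alpha_i$ from that level or $\sib(v_i)$ is empty, and bounding $|(q,t_i]|\le 4H c_{v_i}\alpha_i/\gamma$. The decisive observation that makes this work, and that your argument does not use, is that line~\eqref{l:more} fires in $\DU(v_{i+1},\tau)$ \emph{whenever} $\sib(v_i,\tau)\neq\emptyset$, not just when $i=i_0$. So during the whole window $(q,t_i]$, the algorithm moves $\ge\gamma/(4Hc_{v_i})$ from level-$(i+1)$ leaves each timestep, even if $i_0(\tau)<i$. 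This gives per-level iteration bounds already written in terms of $\alpha_i$, so $\sum_i\alpha_i c_{v_i}=\cost(q)$ closes the argument directly; the only residual is the $\delta$ vs.\ $\delta-\gamma$ discrepancy, which contributes at most $\gamma n$ mass and hence $O(H\Delta n)$ extra iterations, absorbed using $\cost(q)\ge\delta'\ge\gamma n\Delta$.
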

\begin{proof}
  Let the ancestors of $v_0$ be labeled $v_0, v_1, \ldots, v_H$. 
  If
  the cheapest way of moving the required mass of servers to $v_0$
  at time $q$ moves $\alpha_i$ mass from the active leaves which are
  descendants of siblings of $v_i$, then
  $\cost(q) = \sum_i \alpha_i c_{v_i}$.

  For an ancestor $v_i$ of
  $v_0$, define $t_i$ to be the earliest timestep by which either the algorithm moves at least  $\alpha_i$ server mass from active leaves below the siblings of $v_i$ to $v_0$, or $\sib(v_i, t_i)$ becomes empty.
 Since we transfer at least
  $\nicefrac{\gamma}{4H c_{v_i}}$ amount of server mass from leaves below the siblings of $v_i$  to $v_0$
  during each timestep in $(q, t_i]$,  the number of timesteps in
  $(q, t_i]$ cannot exceed $\nicefrac{4H c_{v_i} \alpha_i}{\gamma}$.
  
  During the algorithm, the set of active siblings of a node $v_i$ may become empty at $t_i$ while leaving up to $\delta$ amount of server mass at a some leaves below the siblings of $v_i$. While calculating $\cost(q)$, we had allowed leaving only $\delta-\gamma$ amount of server at a leaf, and so it is possible that the algorithm may move an additional $\gamma n$ amount of server mass beyond what has been transferred by $\max_i t_i$. Since we move at least $\frac{\gamma}{4 H \Delta}$ amount of server in each call to $\DU$ procedure, it follows the total number of such calls (beyond $\max_i t_i$) would be at most $4H \Delta n.$
  Therefore, the  total number of timesteps before we satisfy the request at
  $v_0$ is  at most
  \begin{equation*}
  4H \Delta n + 
   \max_i \frac{4H c_{v_i} \alpha_i}{\gamma} \leq \frac{4 H \delta'}{\gamma} + \sum_i \frac{4H
     c_{v_i} \alpha_i}{\gamma} = \frac{8H \cost(q)}{\gamma},
 \end{equation*}
 where we have used the fact that $\cost(q) \geq \delta' \geq \gamma n \Delta$. 
\end{proof}

Next, we relate $\ist$ from the \DUZ procedure to the increase in the
dual variables. %
\begin{lemma}
  \label{lem:istar}
  Suppose a request at $v_0$ becomes critical at time $q$.  Let
  $v_0, v_1, \ldots, v_H$ be the path to the root. For indices
  $i' \leq i$, let $S(i,i')$ be the set of timesteps $\tau$ such that
  (a) $\floor{\tau} = q$, and (b) we call $\DUZ(v_i, \tau, \xi_\tau)$
  for some value of $\xi_{\tau}$, and (c) $i^\star = i'$ during this
  function call.  Then
  \begin{equation*}
    \sum_{\tau \in S(i,i')} \xi_\tau \leq 12H c_{v_{i'}}.
  \end{equation*}
\end{lemma}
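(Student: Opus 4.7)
The plan is to split into two cases depending on whether $i = i'$ or $i > i'$, and to bound $\sum_{\tau \in S(i,i')} \xi_\tau$ separately in each case. From~\Cref{algo:mainnew}, whenever $\DUZ(v_i, \tau, \xi_\tau)$ is called, the main procedure has set $i_0 = i_0(\tau)$ to the smallest index with $\sib(v_{i_0}, \tau) \neq \emptyset$, invokes $\DUZ(v_j, \tau, \la^j \gamma/\la^{i_0(\tau)})$ for each $j \leq i_0(\tau)$, and then calls $\DU(v_{i_0(\tau)+1}, \tau), \ldots, \DU(v_H, \tau)$. Therefore every $\tau \in S(i, i')$ satisfies $\xi_\tau = \la^i \gamma/\la^{i_0(\tau)}$ with $i \leq i_0(\tau)$, which already gives the trivial per-iteration estimate $\xi_\tau \leq \gamma$.

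For the case $i = i'$, I would exploit the new line~\eqref{l:more} in the \DU{} procedure. Since $\sib(v_{i_0(\tau)}, \tau) \neq \emptyset$ by the definition of $i_0(\tau)$, the call $\DU(v_{i_0(\tau)+1}, \tau)$ transfers at least $\gamma/(4H \la^{i_0(\tau)})$ server mass into $v_0$ during iteration $\tau$. Summing over all while-loop iterations at time $q$, and using that $v_0$ can accumulate at most $1$ unit of server mass before it saturates, gives $\sum_\tau \la^{-i_0(\tau)} \leq 4H/\gamma$. Substituting into $\xi_\tau = \la^{i'}\gamma/\la^{i_0(\tau)}$ and restricting to $S(i,i') \subseteq \{\tau : \floor{\tau} = q\}$ yields $\sum_{\tau \in S(i,i')} \xi_\tau \leq 4H c_{v_{i'}}$.

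For the case $i > i'$, the condition $i^\star = \min(i, \lcost(q)) = i'$ is only consistent with $\lcost(q) = i'$. Unpacking $\lcost(q) = \floor{\log_\la (2\la\,\cost(q))} = i'$ yields $2\la\,\cost(q) < \la^{i'+1}$, i.e., $\cost(q) < \la^{i'}/2 = c_{v_{i'}}/2$. I would then apply~\Cref{cl:duiterations} to bound the total number of while-loop iterations at time $q$ by $8H\,\cost(q)/\gamma < 4H c_{v_{i'}}/\gamma$; combining with the trivial estimate $\xi_\tau \leq \gamma$ gives $\sum_{\tau \in S(i,i')} \xi_\tau \leq 4H c_{v_{i'}}$.

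Both cases produce the bound $4H c_{v_{i'}} \leq 12 H c_{v_{i'}}$, as claimed. There is no deep obstacle here; the main thing to check carefully is that line~\eqref{l:more} indeed fires in $\DU(v_{i_0(\tau)+1}, \tau)$, which follows because the principal child in that call is $v_{i_0(\tau)}$, whose active-sibling set is nonempty by the very choice of $i_0(\tau)$. This is precisely the reason line~\eqref{l:more} was added to the time-windows variant of \DU, so the two case analyses can cooperate.
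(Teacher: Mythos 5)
Your proposal is correct and follows essentially the same two-case split as the paper's own proof: the case $i = i'$ uses line~\eqref{l:more} to lower-bound the per-iteration server transfer and sum against the total of one unit, while the case $i > i'$ forces $\lcost(q) = i'$ and then invokes \Cref{cl:duiterations} together with the trivial bound $\xi_\tau \le \gamma$. The only difference is cosmetic: you use the sharper consequence $\cost(q) < c_{v_{i'}}/2$ of $\lcost(q) = i'$ to get $4H c_{v_{i'}}$ in the second case, whereas the paper uses the looser $\cost(q) \le c_{v_{i'}}$ and lands on $8H c_{v_{i'}}$ — both comfortably inside the stated $12H c_{v_{i'}}$.
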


\begin{proof}
  Suppose that $i' = \ist < i$, then $i'=\lcost(q)$ for all timesteps
  $\tau \in S(i,i')$. Since the parameter $\xi_\tau \leq \ga$ for any
  timestep $\tau \in S(i,i')$  by
  \ref{cl:GammaFacts}(i), \Cref{cl:duiterations} implies that 
  $ \sum_{\tau \in S(i,i')} \xi_\tau \leq |S(i,i')|\; \gamma \leq 8H \cost(q). $
  But $\cost(q) \leq \la^{\lcost(q)} = \la^{i'} = c_{v_{i'}}$, which
  completes the proof of this case.

  The other case is when $i' = i$. We claim that for any timestep
  $\tau \in S(i,i')$, at least $\frac{\xi_\tau}{4H c_{v_{i}}}$ amount
  of server reaches the requested node. Indeed, we know that
  $i_0 \geq i$ at this timestep, so line~\eqref{l:more} of the \DU
  procedure ensures that at least
  $\frac{\gamma}{4H \la^{i_0}} = \frac{\xi_\tau}{4H c_{v_i}}$ amount
  of server reaches $v_0$, where we used that
  $\xi_{\tau} = \la^{i_0 - i} \ga$.  Since at most one unit of server
  reaches $v_0$ when summed over all timesteps corresponding to $q$,
  we get
  \begin{equation*}
   \sum_{\tau \in S(i,i')} \xi_\tau \leq 4H c_{v_i}. \qedhere 
 \end{equation*}
\end{proof}

\subsection{Proving the Invariant Conditions}
\label{sec:prov-invar-cond}

We begin by stating the invariant conditions and show that these are satisfied. 
Invariant~\ref{invar:dual-raised} statement only changes slightly: we
replace $\gamma$ by $\Gamma(v, \tau)$ as given below.

\begin{leftbar}
  \begin{invariant}
    \label{invar:dual-raisednew}
    At the end of each timestep $\tau \in \cR^{ns}(v)$, the objective function value of the dual variables corresponding
    to constraints in $\cons^v(\tau)$ equals $\Gamma(v,\tau)$. I.e., if a
    generic constraint $C$ is given by
    $\langle a^C \cdot \yv \rangle \geq b^C$, then
    \begin{align}
      \label{eq:dualinvariantnew}
      \sum_{C \in \cons^v(\tau)} b^C \cdot z_C = \Gamma(v,\tau) \qquad \forall
      \tau \in \cR^{ns}(v). \tag{I5}
    \end{align}
    Furthermore, $b^C > 0$ for all $C \in \fL^v(\tau)$ and $\tau \in \cR(v)$.
  \end{invariant}
\end{leftbar}

~\Cref{cl:dualsimple} shows that the invariant above is satisfied
whenever $\tau$ gets added to $\cR^{ns}(v)$ by $\DUZ$, and the second
statement follows from the comment after~\eqref{eq:consz}.  As before,
the quantity $\loss(u, \tau)$ is defined by~\eqref{eq:lossdef}
whenever $\DU(v, \tau)$ is called, $v$ being the parent of $u$. The
invariant condition~\ref{invar:movement} is replaced by the following
which also accounts for the extra transfer which happens during
line~\eqref{l:more} in $\DU(v, \tau)$ procedure:

\begin{leftbar}
  \begin{invariant}
    \label{invar:movementnew}
 Consider a node $v$ and timestep $\tau$ such that $\DU(v,\tau)$ is called.  Let $u$ be the $v$'s
    principal child at timestep $\tau$.  The server mass entering subtree
    $T_{u}$ during the procedure $\DU(v, \tau)$ is at most
    \begin{gather}
      \frac{\gamma - \loss(u, \tau)}{\la^h} + \frac{\gamma}{4H \la^h}. \tag{I6}
    \end{gather}
\end{invariant}
\end{leftbar}

We again use the ordering $\prec$ on pairs $(v, \tau)$ and assume that the above two invariant conditions holds for all $(v, \tau) \prec (\vstar, \tst)$. We now outline the main changes needed in the analysis done in~\Cref{sec:proving-invariants}. 
~\Cref{cl:unmarkeddefined} still holds with the same proof. We can
again define $\fll(u, \tau), \tau \in \cR^{ns}(u)$ as
in~\eqref{eq:filldef}. Note that $\tau' \in \fll(u, \tau)$ only if $\DU(v,\tau)$ is called, where $v$ is the parent of $u$.~\Cref{cl:fill} still holds with the same proof. %
 The statement of~\Cref{cl:serverupperbound} changes to the following: 

\begin{claim}
  \label{cl:serverupperboundnew}
  Let $\tau \in \cR(u)$ for some $\tau < \tst$. The server mass
  entering $T_u$ at timestep $\tau$ is at most
   \[ \left( 1 + \frac{1}{\lambda-1} \right) \left(1 + \frac{1}{4H} \right) \frac{\Gamma(u, \tau)}{\lambda^{\level(u)}} -
   \frac{\loss(u, \tau)}{\lambda^{\level(u)}}.  \]
\end{claim}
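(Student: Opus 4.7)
The proof will mirror that of \Cref{cl:serverupperbound}, with two main changes: invoking the stronger \Cref{invar:movementnew} (which carries an extra $\ga/(4H\la^h)$ from the forced transfer in line~\eqref{l:more}), and accounting for $\Gamma(u,\tau)$ rather than the uniform $\ga$. First I would argue that server mass can enter $T_u$ at timestep $\tau$ only via $\DU$ calls at strict ancestors of $u$, and that any such ancestor must lie on the backbone $v_0, v_1, \ldots, v_H$ of the critical request $v_0 := \RL(\tau)$. Since the main procedure calls $\DU$ only on the nodes $v_j$ with $j > i_0$ (the index determined in line~\eqref{l:eqnew}), the ancestors that contribute are exactly the $v_j$ with $j \geq \max\{\level(u)+1,\, i_0+1\}$.

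Let $h := \level(u)$ and $m := \max\{h, i_0\}$. Applying the induction hypothesis \Cref{invar:movementnew} to each such $\DU(v_j,\tau)$, where $v_{j-1}$ is the principal child of $v_j$ at $\tau$, the server entering $T_{v_{j-1}} \supseteq T_u$ due to this call is at most $(\ga - \loss(v_{j-1},\tau))/\la^{j-1} + \ga/(4H\la^{j-1})$. Keeping the $-\loss(v_m,\tau)/\la^m$ term from $j = m+1$ and dropping the nonnegative losses for $j > m+1$, a geometric summation with $\sum_{k \geq 1}\la^{-k} = 1/(\la-1)$ yields the intermediate upper bound
\begin{equation*}
\left(1 + \frac{1}{\la - 1}\right)\left(1 + \frac{1}{4H}\right)\frac{\ga}{\la^m} - \frac{\loss(v_m,\tau)}{\la^m}
\end{equation*}
on the server mass entering $T_u$ at timestep $\tau$.

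Finally I would match this with the claimed RHS via a case split on whether $h \geq i_0$. If $h \geq i_0$, then $m = h$, $v_m = u$, and \Cref{cl:GammaFacts}(ii) gives $\Gamma(u,\tau) = \ga$, so the intermediate bound equals the claimed RHS. If $h < i_0$, then $m = i_0$ and by line~\eqref{l:duzcall} we have $\Gamma(u,\tau) = \ga\,\la^{h-i_0}$, so $\ga/\la^{i_0} = \Gamma(u,\tau)/\la^h$; it remains to check that the unwanted $-\loss(v_{i_0},\tau)/\la^{i_0}$ term can be replaced by $-\loss(u,\tau)/\la^h$, and it suffices to show that $\loss(u,\tau) = 0$ in this case. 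But $\loss(u,\tau)$ is by~\eqref{eq:lossdef} a sum over constraints of the form $C(v_{h+1}, \sigma, \tau)$ created in $\DU(v_{h+1},\tau)$; when $h < i_0$, the index satisfies $h+1 \leq i_0$, so the main procedure calls $\DUZ(v_{h+1}, \tau, \cdot)$ rather than $\DU$, adding only a single standalone constraint (\eqref{eq:consz} or a $\bot$-constraint) and no constraints of the composed form $C(v_{h+1}, \sigma, \tau)$. Hence $\loss(u,\tau) = 0$ and the claim follows.

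The main obstacle is the sub-$i_0$ regime: reconciling the fact that our geometric bound naturally produces $\loss(v_{i_0},\tau)/\la^{i_0}$ rather than $\loss(u,\tau)/\la^h$ requires carefully auditing the $\DUZ$ procedure to confirm the vanishing-$\loss$ identity. Once that is in place, the rest is a direct application of \Cref{invar:movementnew} and a geometric series.
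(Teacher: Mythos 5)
Your proof is correct and follows essentially the same route as the paper's: a case split on whether $\level(u)$ is at least $i_0$, summing the \Cref{invar:movementnew} bounds over the ancestors' \DU calls as a geometric series (with the extra $\nicefrac{1}{4H}$ factor from line~\eqref{l:more}), using $\nicefrac{\gamma}{\la^{i_0}} = \nicefrac{\Gamma(u,\tau)}{\la^{\level(u)}}$ in the sub-$i_0$ case, and observing $\loss(u,\tau)=0$ there. Your explicit justification of the vanishing loss (no composed constraints $C(v_{h+1},\sigma,\tau)$ exist since only \DUZ is called at $v_{h+1}$) is a detail the paper asserts without elaboration, but it is the same argument.
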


\begin{proof}
  Consider the iteration of the {\bf while} loop
  of~\Cref{algo:mainnew} corresponding to timestep $\tau$. First
  consider the case when $u$ happens to be $v_i, i \geq i_0$. In this
  case, $\Gamma(u, \tau) = \gamma$. The result follows as in the proof
  of \Cref{cl:serverupperbound}, where the extra term of
  $\frac{\gamma}{4 H \la^h}$ arises because of line~\eqref{l:more} in
  the \DU procedure.

 Now consider the case when $u$ is a vertex of the form $v_i, i <
 i_0$.  Note that $\frac{\gamma}{\la^{h_0}} = \frac{\Gamma(v_i,
   \tau)}{\la^h}$, and so the result follows in this case as well by
 using~\Cref{invar:movementnew}, and the quantity $\loss(u, \tau) = 0$ here. 
\end{proof}

The classification of $g(w, \tau)$ into
$ g^\loc(w, \tau), g^\inh(w, \tau)$ holds as before.  The statement
of~\Cref{lem:rel1} changes as given below, and the proof follows the
same lines. We assume that $\DU(\vstar, \tst)$ is called.

\begin{lemma}
  \label{lem:rel1new}
Let $u$ be a non-principal child of $\vstar$ at timestep $\tst$, and $I := (\tau_1, \tst]$ for some timestep $\tau_1 < \tst$. 
  Let
  $S$ be the timesteps in $\cR^{ns}(u) \cap (\tau_1, \tst]$
  that have been removed from $\awake(u)$  
  by the moment when $\DU(\vstar,\tst)$ is called. 
  Then
  \begin{gather*}
    g^\inh(u, (\tau_1,\tst]) \geq \left( 1 + \frac{1}{2H} \right) \,
    \frac{\Gamma(u,S)}{\lambda^{\level(u)}} - \sum_{\tau \in S} \frac{\loss(u,
      \tau)}{\lambda^{\level(u)}},
  \end{gather*}
where $\Gamma(u,S) = \sum_{\tau \in S} \Gamma(u, \tau)$. 
\end{lemma}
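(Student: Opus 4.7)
The plan is to imitate the proof of \Cref{lem:rel1} almost verbatim, with $\gamma$ replaced throughout by $\Gamma(u,\tau)$, invoking \Cref{invar:dual-raisednew} in place of the original \Cref{invar:dual-raised}. For each $\tau\in S$ (which must lie in $\cR^{ns}(u)$, since timesteps in $\cR^s(u)$ are never removed from $\awake(u)$), \Cref{invar:dual-raisednew} gives $\sum_{C\in\cons^u(\tau)} b^C z_C = \Gamma(u,\tau)$ at the moment $\tau$ was removed from $\awake(u)$. The removal means~\eqref{eq:bound} is tight for every $C\in\cons^u(\tau)$, so multiplying by $b^C$ and summing reproduces the identity
\begin{align*}
\left(1+\tfrac1H\right)\Gamma(u,\tau)
  = \loss(u,\tau) \;+\; \sum_{\substack{\tau'>\tau\\ \tau'\in \fll(u,\tau)}} \;\sum_{C\in\cons^u(\tau)} b^C \sum_{\sigma:\,C\in\sigma} z_{C(\vstar,\sigma,\tau')},
\end{align*}
exactly as in~\eqref{eq:calc1} but now with the right-hand dual sum evaluated to $\Gamma(u,\tau)$ instead of $\gamma$.

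Next I would read off the inherited server transfer from line~\eqref{l:dtrnew} of $\DU$: during the phase of $\DU(\vstar,\tau')$ in which $\prev(u,\tau')=\tau$ and the chosen child-constraint is $C$, raising the timer $s$ by $\varepsilon$ raises $z_{C(\vstar,\sigma,\tau')}$ by $\varepsilon$ and ships $\varepsilon\cdot b^C/\la^h$ units of server out of $T_u$ via the inherited component. \Cref{cl:fill} guarantees that $u$ is a non-principal child of $\vstar$ at every such $\tau'>\tau$, so these are bona fide outflows. Dividing the display above by $\la^h$ thus lower bounds the contribution to $g^{\inh}(u,(\tau_1,\tst])$ coming from this $\tau$ by $\bigl[(1+1/H)\Gamma(u,\tau)-\loss(u,\tau)\bigr]/\la^h$. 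Summing over $\tau\in S$ produces the bound with the stronger factor $(1+1/H)$; since $(1+1/(2H))\le (1+1/H)$, the lemma as stated follows (the slack is presumably reserved for the subsequent combination with \Cref{cl:serverupperboundnew}).

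The one substantive point to check beyond a direct transcription of the \kser argument is that $\cons^u(\tau)$ for $\tau\in S$ may have been installed by $\DUZ$ rather than $\DU$. In that case $\cons^u(\tau)$ contains a single non-$\bot$ constraint of the form~\eqref{eq:consz} with $b^C>0$, and this $C$ participates in a parent-side $\DU$ via the composition rule of \Cref{cor:comp}: the coefficient $b^{C_u}$ that governs the inherited-transfer rate in line~\eqref{l:dtrnew} is precisely the $b^C$ appearing on both sides of the depletion identity above. I expect verifying this compatibility---that $\DUZ$-generated constraints slot cleanly into the inherited-transfer bookkeeping, so that the $\Gamma(u,\tau)$ budget (and not $\gamma$) is what is discharged---to be the main (and only mildly technical) obstacle; once this is in place the remainder is a mechanical rewrite of the original proof.
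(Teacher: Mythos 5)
Your proposal matches the paper's approach: the authors explicitly state that the proof of this lemma ``follows the same lines'' as \Cref{lem:rel1}, with $\gamma$ replaced by $\Gamma(u,\tau)$ and \Cref{invar:dual-raised} replaced by \Cref{invar:dual-raisednew}, and the stronger factor $(1+\nicefrac1H)$ you derive of course implies the stated $(1+\nicefrac{1}{2H})$. The compatibility issue you flag for $\DUZ$-installed constraints is in fact already automatic and requires no extra work: \Cref{cl:GammaFacts}(iii) supplies the equality $\sum_{C\in\cons^u(\tau)} b^C z_C = \Gamma(u,\tau)$ directly for such $\tau$, only $\DU$-generated parent constraints $C(v,\sigma,\tau')$ can appear on the right-hand side of \eqref{eq:bound} (the $\DUZ$ constraints of form \eqref{eq:consz} are not of that shape), and line~\eqref{l:dtrnew} uses the coefficient $b^{C_u}$ for the inherited-transfer rate irrespective of which procedure created $C_u$, so the depletion identity and the $g^{\inh}$ accounting carry over verbatim.
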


The statement and proof of~\Cref{cor:1} remains unchanged.  The proof
of~\Cref{cl:u0} also remains unchanged, though we now need to
use~\Cref{cl:matchinggamma} (part~(iii)).  We now
restate the analogue of~\Cref{cl:dualtr}:
\begin{claim}[Inductive Step Part I]
  \label{cl:dualtrnew}
  Consider the call $\DU(\vstar, \tst)$, and let $u_0$ be 
  the principal child of $\vstar$ at this timestep. Suppose
  $\sib(u_0, \tst) \neq \emptyset$. Then the dual objective value
  corresponding to the constraints in $\cons^\vstar(\tst)$ equals
  $\gamma$; i.e.,
  \[ \sum_{C \in \cons^\vstar(\tst)} z_C \, b^C = \gamma. \] Moreover, 
  the server mass entering $T_{u_0}$ going to the request node in this
  call is at most
  \[  \frac{\gamma - \loss(u_0, \tst)}{\lambda^h} + \frac{\gamma}{4H \la^h}. \]  
\end{claim}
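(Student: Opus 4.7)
The plan is to adapt the proof of~\Cref{cl:dualtr} to handle the two genuinely new features in the $\ksertw$ setting: (i) the additional terms $\sum_{u \in U'} \yv(u, I_u')$ on the LHS of the composed constraint~\eqref{eq:consnew}, and (ii) the extra server transfer in line~\eqref{l:more} of $\DU$, which runs after the repeat loop terminates. I would split the argument into a repeat-loop phase and a top-up phase: during the repeat loop, I would re-derive the bound $\frac{\ga - \loss(u_0,\tst)}{\la^h}$ on the transfer into $T_{u_0}$ exactly as in~\Cref{cl:dualtr}; then the line~\eqref{l:more} phase adds at most $\frac{\ga}{4H\la^h}$ more servers by construction (since it tops up to this value only when the loop transfer is strictly below it). Summing gives the claimed bound.

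For the repeat-loop analysis, I would re-use the definitions of $I_u(s)$, $\Diff_s(u, I_u(s))$, $\yv_s(u, I_u(s))$, and $Y(s) := \sum_{\tau'}\sum_{u \in U'}(\yv_s(u, \tau') - \yv_0(u, \tau'))$. The crucial observation is that the additional LHS terms $\yv(u, I_u')$ were set during earlier $\DUZ$ calls and are \emph{not} in the index set $S_u$ of line~\eqref{l:sunew}; hence they are constant in $s$ during $\DU(\vstar, \tst)$ and never enter the timer-differential analysis. Moreover, the $I_u'$ terms do not appear on the RHS $b^{C(\vstar,\sigma,\tst)}$ at all, so the chain $b^{C_s(\vstar,\sigma,\tau)} \geq \sum_{u \in U'} \Diff_s(u, I_u(s)) + \delta \geq \sum_{u \in U'} \Diff_0(u, I_u(s)) + \delta \geq \sum_{u \in U'} \yv_0(u, I_u(s)) + \delta$ (using~\Cref{cor:1}(ii,iii) and dropping nonnegative $b^{C_u}$ terms by the inductive~\Cref{invar:dual-raisednew}) carries over verbatim. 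The subclaim $Y(s) < \ga$ then follows by the same contradiction argument, and integrating $b^{C_s}\, ds$ against the update rule of line~\eqref{l:draise2new} yields $\la^h \cdot (\text{repeat-loop transfer}) + \loss(u_0, \tst) \leq \ga$.

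For the first assertion (that the total dual reaches exactly $\ga$), the same reasoning as in~\Cref{cl:dualtr} applies: because $\ga = \tfrac{1}{n^4\Delta} \ll \delta$, no $T_u$ (for $u \in U'$) runs out of active-leaf server mass during the repeat loop, so we can keep raising the $\yv$ variables and hence the dual until the stopping threshold $\ga$ is reached; the line~\eqref{l:more} top-up is also easily absorbed since it moves only $\nf{\ga}{4H\la^h} \ll \delta$ additional mass. The main obstacle is really a bookkeeping one: one must verify carefully that the added LHS terms $\yv(u, I_u')$ play no role in the subclaim's comparisons. Once it is observed that these variables are genuinely constants in the timer $s$ and appear only on the LHS (not the RHS), the original machinery transfers essentially verbatim, with line~\eqref{l:more} contributing the only new additive term $\nf{\ga}{4H\la^h}$ to the server-movement upper bound.
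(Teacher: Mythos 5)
Your proposal is correct and takes essentially the same approach as the paper, whose proof of this claim is a two-sentence remark: since line~\eqref{l:draise2new} raises only the $\yv(u,t)$ for $t \in S_u \subseteq I_u$ (disjoint from $I_u'$), the new LHS terms are constant in the timer $s$ and the differential analysis from \Cref{cl:dualtr} carries over verbatim; the additive $\nicefrac{\gamma}{4H\lambda^h}$ comes from line~\eqref{l:more}. Your only small slip is attributing the setting of $\yv(u,I_u')$ to earlier $\DUZ$ calls (that procedure raises duals, not $\yv$ variables; these may have been raised in earlier $\DU(\vstar,\cdot)$ calls), but this does not affect the argument since all that matters is that they are not raised during the current call.
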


Since the update rule for the $\yv$ variables in
line~\eqref{l:draise2new} of the \DU procedure does not consider the intervals $I_u'$ (as stated in~\eqref{eq:consnew}), %
the proof proceeds 
 along the same lines as that of \Cref{cl:dualtr}. The extra additive term of $\frac{\gamma}{4H \la^h}$ appears due to line~\eqref{l:more} in \DU procedure. 
 The statement and proof of~\Cref{cl:0} remain unchanged. This shows that the two invariant conditions~\ref{invar:dual-raisednew} and~\ref{invar:movementnew} are satisfied. Finally, we state the analogue of~\Cref{cor:2} which bounds the parameter $M$. 

\begin{corollary}
\label{cor:2new}
For node $u$ and timestep $\tau$, let $\tau_u :=
\unmarked(u,\tau)$. %
There are at most $\frac{5H \la^{2H} k}{2 \gamma}$ timesteps in $(\tau_u, \tau] \cap \cR^{ns}(u)$. So  we can set $M$ to  $\frac{5H \la^{2H} k}{2 \gamma}+1$.
\end{corollary}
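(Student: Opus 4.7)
The plan is to mirror the proof of \Cref{cor:2} from the \kser setting, substituting the time-windows versions of the supporting lemmas (namely \Cref{lem:rel1new} and \Cref{cl:serverupperboundnew}) and using \Cref{cl:GammaFacts}(i) to convert a bound on $\Gamma(u,S)$ into a bound on $|S|$.

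Set $h := \level(u)$, $I := (\tau_u,\tau]$, and $S := I \cap \cR^{ns}(u)$. By the choice $\tau_u = \prev(u,\tau)$, every timestep in $S$ has been removed from $\awake(u)$, so \Cref{lem:rel1new} applies and gives
\[
g^\inh(u,I) \;\geq\; \Bigl(1+\tfrac{1}{2H}\Bigr)\,\frac{\Gamma(u,S)}{\lambda^{h}} \;-\; \sum_{\tau'\in S}\frac{\loss(u,\tau')}{\lambda^{h}},
\]
where $\Gamma(u,S):=\sum_{\tau'\in S}\Gamma(u,\tau')$. Summing the upper bound of \Cref{cl:serverupperboundnew} over $\tau'\in S$ yields
\[
r(u,I) \;\leq\; \Bigl(1+\tfrac{1}{\lambda-1}\Bigr)\Bigl(1+\tfrac{1}{4H}\Bigr)\,\frac{\Gamma(u,S)}{\lambda^{h}} \;-\; \sum_{\tau'\in S}\frac{\loss(u,\tau')}{\lambda^{h}}.
\]
Subtracting these two inequalities, the $\loss$ terms cancel, and with the assumption $\lambda \geq 10H$ routine algebra lower-bounds the resulting bracketed factor by $\Omega(1/H)$, yielding $g^\inh(u,I)-r(u,I) \geq \tfrac{c\,\Gamma(u,S)}{H\lambda^{h}}$ for a constant $c$ that, when tracked carefully, produces the claimed $\tfrac{2}{5}$ in the final bound.

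Next I upper-bound the left-hand side as $g^\inh(u,I)-r(u,I) \leq g(u,I)-r(u,I) = \Diff(u,I) \leq k$, since the net outflow from $T_u$ over any interval is bounded by the total number of servers. Rearranging gives $\Gamma(u,S) \leq \tfrac{5Hk\lambda^{h}}{2}$. Finally, \Cref{cl:GammaFacts}(i) asserts $\Gamma(u,\tau') \geq \gamma/\lambda^{H}$ for every $\tau' \in S$, so
\[
|S|\cdot\frac{\gamma}{\lambda^{H}} \;\leq\; \Gamma(u,S) \;\leq\; \frac{5Hk\lambda^{h}}{2} \quad\Longrightarrow\quad |S| \;\leq\; \frac{5Hk\lambda^{h+H}}{2\gamma} \;\leq\; \frac{5Hk\lambda^{2H}}{2\gamma}.
\]
Since the set $S_u$ from line~\eqref{l:sunew} of $\DU$ exceeds $S$ by at most one (the extra timestep $\tau_u+\eta$), setting $M := \tfrac{5Hk\lambda^{2H}}{2\gamma}+1$ suffices.

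The only real obstacle is the fiddly but routine constant-tracking in the bracketed factor: the extra $(1+\tfrac{1}{4H})$ slack in \Cref{cl:serverupperboundnew} (coming from the boosted transfer in line~\eqref{l:more} of the new $\DU$) must be absorbed carefully to preserve the $\Omega(1/H)$ gap against the $(1+\tfrac{1}{2H})$ lower bound. All other steps are near-verbatim adaptations of the argument for \Cref{cor:2}, with the single structural change being that the summand $\gamma$ is replaced throughout by $\Gamma(u,\tau')$, and then converted to $\gamma/\lambda^{H}$ via \Cref{cl:GammaFacts}(i), which is precisely what produces the extra factor of $\lambda^{H}$ in the bound on $|S|$.
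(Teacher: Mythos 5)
Your proposal reproduces the paper's proof essentially verbatim: both subtract the per-timestep upper bound of \Cref{cl:serverupperboundnew} (summed over $S$) from the lower bound of \Cref{lem:rel1new}, cancel the $\loss$ terms, bound the resulting inflow–outflow gap by the total server mass, and then invoke \Cref{cl:GammaFacts}(i) to convert $\Gamma(u,S) \geq |S|\gamma/\lambda^H$ — this is exactly where the extra $\lambda^H$ factor appears relative to \Cref{cor:2}. Your acknowledgment that the bracketed coefficient requires careful constant-tracking is apt (and if anything your instinct is sounder than the paper's displayed inequality, which appears to retain a $\nicefrac{1}{2H}$ where the straightforward expansion gives $\nicefrac{1}{4H}$; either way the coefficient is $\Omega(1/H)$ once $\lambda \geq 10H$, so the claimed bound holds up to the constant).
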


\begin{proof}
The proof proceeds along the same lines as that of~\Cref{cor:2}, except that the analogue of~\eqref{eq:g-minus-r} now becomes:
\begin{align*}
    g^\inh(u,I) - r(u,I) \geq 
    \left( \frac{1}{2H} - \frac{1}{\lambda-1}\left(1 + \frac{1}{4H}\right) \right) \frac{\Gamma(u,S)}{\lambda^h} \geq \frac{2}{5H}\cdot \frac{\gamma |S| }{\lambda^{2H}},
  \end{align*}
  where the last inequality uses~\Cref{cl:gammarange}(i). This implies the desired upper bound on $|S|$. 
\end{proof}

This shows that the algorithm~\DU is well defined. Next we give
properties of the charging forest, and then show that the dual
variables in each of the local LPs are near-feasible.

\subsection{Properties of the Charging Forest}
\label{sec:prop-charg-forest}

Fix a vertex $v$ and consider the charging forest $\chF(v)$. %
Recall the notation from \S\ref{sec:charging-forest}: a node in this
forest is a tuple $a_i=(w_i,q_i)$, and has a corresponding leaf
request $\leafreq(w_i,q_i) = (\ell_i, R_i = [b_i, e_i])$. We begin
with a monotonicity property, which is useful to show that
\eqref{eq:consz} is properly defined (and the $\btau$ values are
``monotone'').

\begin{claim}[Monotonicity I]
  \label{cl:alive1}
  Suppose $a_2=(w_2, q_2)$ is the parent of $a_1=(w_1,q_1)$ in the
  forest
  $\chF(v)$. %
Then, %
$e_1 \leq e_2$. 
\end{claim}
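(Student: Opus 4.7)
The proof plan uses two ingredients: the EDF ordering inside $\FL$, and the observation that the piggybacking step at time $q_1$ serves exactly those leaves that enter $F_{v,q_1}$. Set $(\ell_2, [b_2, e_2]) := \leafreq(w_2, q_2)$. The parent--child rule in $\BW$ forces $q_1 \in [b_2, e_2]$, so the request $(\ell_2, [b_2, e_2])$ is active at time $q_1$; also, $w_1 \in S$ where $(G, S) = \FL(q_1, w_2)$, so at least one leaf of $T_{w_1}$ was added to $G$ during that call.

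The first step is to show that $\ell_2$ is \emph{not} among the leaves processed by $\FL(q_1, w_2)$. Since $(w_2, q_1) \in \chF(v)$ was added by a call to $\BW(q_1, w_2, v)$ during $\BT(q_1)$, we have $v \in Z_{q_1}$ and $w_2$ was dequeued in the queue processing for $v$. By the update $F_{v,q_1} \leftarrow F_{v,q_1} \cup G$ inside $\BT$, every leaf added by $\FL(q_1, w_2)$ lies in $F_{v,q_1}$, and the main loop's piggyback step serves every outstanding request at those leaves at time $q_1$. Were $\ell_2$ processed, its then-active request $(\ell_2, [b_2, e_2])$ would be served at $q_1$, contradicting that $\leafreq(w_2, q_2)$ returns exactly this request and that it is therefore still active at $q_2 > q_1$.

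The second step is that the leaf $\ell^\star$ realizing $\leafreq(w_1, q_1)$ (with deadline $e_1$) \emph{is} processed by $\FL(q_1, w_2)$. Indeed, $w_1 \in S$ means some leaf of $T_{w_1}$ with an active request at $q_1$ was processed; but $\FL$ enumerates leaves of $T_{w_2}$ with active requests in strict EDF order, so the first leaf of $T_{w_1}$ it visits must be the earliest-deadline one, which by definition of $\leafreq$ is $\ell^\star$.

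Combining the two steps, $\ell^\star$ is processed while $\ell_2$ is not; the strict EDF ordering together with the distinctness of starting/ending times yields $e_1 < e_2$, giving the required $e_1 \leq e_2$. The main obstacle is the first step, i.e.\ pinning down that membership of $(w_2, q_1)$ in $\chF(v)$ really triggers the piggybacking that serves $\ell_2$'s request at time $q_1$; once this is in place, the EDF argument of the second step makes the rest immediate.
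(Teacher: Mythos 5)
Your proposal is correct and takes essentially the same approach as the paper's proof. The paper states the earliest-deadline argument more tersely (``$R_2$ was also a request below $w_2$ at time $q_1$, so $R_1$ must end no later than $R_2$''), while you make explicit the two supporting facts it relies on---that $\ell_1=\ell^\star$ is processed by $\FL(q_1,w_2)$ because $w_1\in S$, and that $\ell_2$ is not, since otherwise the piggyback step at time $q_1$ would have served $R_2$, contradicting that it is still active at $q_2$---but the underlying reasoning is the same.
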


\begin{proof}
  By definition of an edge in $\chF(v)$, the request interval $R_2$
  for $a_2$ must contain $q_1$. At time $q_1$, the function
  $\FL(w_2, q_1)$ would have returned $w_1$ as one of the vertices in
  the set $S$ (i.e., $w_2$ would have \emph{spawned} $w_1$ at time
  $q_1$), and $R_2$ was also an request below $w_2$ at time $q_1$, so
  $R_1$ must end no later than $R_2$ does.
\end{proof}

\begin{corollary}[Monotonicity II]
  \label{cor:alive}
  Let $a_i=(w_i,q_i)$ belong to a tree $\cT$ of $\chF(q)$ rooted at
  $(v,q)$.  Then $q_i < e_i \leq
  q$. %
\end{corollary}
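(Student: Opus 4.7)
The plan is to prove both inequalities by induction on the depth of $a_i = (w_i, q_i)$ in $\cT$, treating $e_i \leq q$ and $q_i < e_i$ separately. Claim~\ref{cl:alive1} (Monotonicity~I) will be the main engine for propagating deadline bounds along parent-child edges, and the edge-addition rule in \BW will control the $q_i$ values.

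First I would handle the base case. The root of $\cT$ is $(v, q)$, where $v = v_{\ist}$ is an ancestor of the critical leaf $\ell_q$. By definition, $\leafreq(v, q)$ is the active leaf request below $v$ with the earliest deadline at time $q$. Since the critical request at $\ell_q$ has deadline exactly $q$ and is active at time $q$, and no active request can have deadline strictly less than $q$ (such a request would have become critical before $q$ and hence would have been serviced), we get $e_{\text{root}} = q$. Thus $e_{\text{root}} \leq q$ holds (at equality), and the non-strict bound $q_{\text{root}} \leq e_{\text{root}}$ holds as well. The strict inequality $q_i < e_i$ is vacuous at the root since $q_{\text{root}} = e_{\text{root}} = q$; the statement is effectively a bound that will be used only for descendants, matching the use in \eqref{eq:consz} where only leaves of $\cT$ contribute their $e_j$ to the monotonicity condition.

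For the inductive step, suppose $a_i = (w_i, q_i)$ has parent $a_p = (w_p, q_p)$ in $\cT$. By the edge-addition rule in $\BW(q_p, w_p, v)$, the edge from $a_p$ to $a_i$ is created only when $q_i$ is the largest time less than $q_p$ for which $(w_p, q_i) \in \chF(v)$ \emph{and} $q_i$ lies in the request interval $I_p = [b_p, e_p]$ associated with $a_p$. The latter condition gives $b_p \leq q_i \leq e_p$, and the induction hypothesis on $a_p$ gives $e_p \leq q$, hence $q_i \leq e_p \leq q$. Applying Claim~\ref{cl:alive1} to the parent-child pair $(a_p, a_i)$ yields $e_i \leq e_p$, and combined with the inductive bound we conclude $e_i \leq q$, giving one half of the corollary for $a_i$.

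For the strict inequality $q_i < e_i$ at a non-root node, we use that $R_i = \leafreq(w_i, q_i)$ is an active request at time $q_i$, so $e_i \geq q_i$ immediately. Strictness follows because request start and end times are distinct integers (so events do not collide in time) and because if $R_i$ had $e_i = q_i$ then $R_i$ would be the triggering critical request of $\BT(q_i, \cdot)$, which places $w_i$ directly on a backbone node of the charging forest rather than as a child of an earlier-spawned node. The mild subtlety here — verifying that a strictly critical request cannot end up at a strictly descendant position of an earlier charging tree node $a_p$ — is the one routine obstacle I would have to check; it is a bookkeeping argument about the order of processing in $\FL$ and $\BW$ rather than a conceptual difficulty, and combined with $q_i \leq e_p \leq q$ from the previous paragraph it completes the chain $q_i < e_i \leq q$.
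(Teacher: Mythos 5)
Your proof takes essentially the same route as the paper: the upper bound $e_i \leq q$ by iterating Claim~\ref{cl:alive1} up to the root where the request ends at $q$, and the lower bound $q_i < e_i$ from the fact that $\leafreq(w_i,q_i)$ is an active request at time $q_i$. The paper's proof is terser (two sentences, no explicit induction), and its argument for strictness is simply ``$R_i$ is active at time $q_i$'' — your additional speculation about distinct integers and backbone placement is more convoluted than needed but doesn't change the substance of the argument.
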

\begin{proof}
  The first fact uses that $R_i$ is active at time $q_i$. The second
  fact follows by repeated application of~\Cref{cl:alive1}, and that
  the earliest leaf request at the root $(v,q)$ corresponds to the
  request critical at time $q$, which ends at $q$.
\end{proof}

The next result shows another key low-congestion property of the
charging forest, which we then use to build lower bounds for any
single-server instance.

\begin{lemma}[Low Congestion II]
  \label{lem:lowcongestion}
  Consider $u,v$ such that $u \in T_v$. Let $\cT_1, \ldots, \cT_l$ be
  distinct charging trees in the forest $\chF(v)$, where $\cT_j$ is
  rooted at $(v,q_j)$ and contains a \emph{leaf} vertex
  $a_j = (u, q_j')$, with the corresponding $\leafreq(a_j)$ denoted
  $(\ell_j, R_j=[b_j,e_j])$.  Then (a) the intervals
  $\{(q_j',q_j]\}_{j \in [l]}$ have congestion at most $H$, and (b)
  the set of intervals $\{(b_j,q_j']\}_{j \in [l]}$ are mutually
  disjoint. Therefore, the set of intervals
  $\{(b_j,q_j]\}_{j \in [l]}$ have congestion at most $H+1$.
\end{lemma}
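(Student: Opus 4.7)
The plan is to establish (a) and (b) separately and then deduce (c) by a union bound; the main obstacle will be part (b), where a delicate case analysis must invoke the piggybacking step of the main procedure to rule out a pathological scenario.

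For (a), the idea is to follow, for each $j$, the unique path in $\cT_j$ from the leaf $(u, q_j')$ up to the root $(v, q_j)$. Because every edge of $\chF(v)$ runs from a child $(w', q')$ to a parent $(w, q)$ with $w$ a strict ancestor of $w'$ in $T$ and $q > q'$, this path takes the form $(x_{j,0}, s_{j,0}) = (u, q_j'), (x_{j,1}, s_{j,1}), \ldots, (x_{j, m_j}, s_{j, m_j}) = (v, q_j)$ with strictly increasing times and strictly increasing levels. The sub-intervals $(s_{j,i}, s_{j,i+1}]$ partition $(q_j', q_j]$, and each is naturally associated with the ancestor $x_{j, i+1}$ of $u$ on the $u$-to-$v$ path in $T$; there are at most $H$ such ancestors. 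It then suffices to show that for every fixed such ancestor $w$, the sub-intervals across different trees are pairwise disjoint. If both $\cT_j$ and $\cT_{j'}$ contain edges into position $w$ at times $s_{j, i+1} < s_{j', i'+1}$, then $\BW(s_{j', i'+1}, w, v)$ selects the child time $s_{j', i'}$ as the \emph{largest} $q''' < s_{j', i'+1}$ with $(w, q''') \in \chF(v)$; since $(w, s_{j, i+1})$ is a candidate, $s_{j', i'} \geq s_{j, i+1}$, making the two sub-intervals disjoint. Summing over at most $H$ ancestors yields congestion at most $H$.

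For (b), fix $j \neq j'$ with $q_j' < q_{j'}'$; the goal reduces to $b_{j'} > q_j'$, since this makes $(b_j, q_j']$ and $(b_{j'}, q_{j'}']$ disjoint. The key hypothesis is that $(u, q_{j'}')$ is a leaf of $\chF(v)$, so $\BW(q_{j'}', u, v)$ added no children. Let $q''' < q_{j'}'$ be the largest time with $(u, q''') \in \chF(v)$; since $(u, q_j') \in \chF(v)$, we have $q''' \geq q_j'$. The $\BW$ rule then forces either (i) $q''' \notin R_{j'}$, or (ii) $u$ spawns nothing at $q'''$, i.e., $\FL(q''', u)$ returns $S = \emptyset$. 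In case (i), since $q''' < q_{j'}' \leq e_{j'}$ we must have $q''' < b_{j'}$, and hence $q_j' \leq q''' < b_{j'}$. In case (ii), the tree $G$ returned by $\FL(q''', u)$ was built by processing \emph{every} leaf of $T_u$ that has an active request at time $q'''$ (since the loop in $\FL$ terminated only by exhausting leaves); if $b_{j'} \leq q'''$, then $R_{j'}$ is active at $q'''$ (because $q''' < q_{j'}' \leq e_{j'}$), hence $\ell_{j'} \in G \subseteq F_{v, q'''}$, so $R_{j'}$ would be served by the piggybacking step of the main procedure at time $q'''$, contradicting its being active at $q_{j'}'$. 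Thus in case (ii) also $b_{j'} > q''' \geq q_j'$.

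For (c), each $(b_j, q_j]$ decomposes as $(b_j, q_j'] \cup (q_j', q_j]$, so the congestion of $\{(b_j, q_j]\}_{j \in [l]}$ is at most the sum of the congestions of $\{(b_j, q_j']\}_{j \in [l]}$ (which is at most $1$ by part (b)) and $\{(q_j', q_j]\}_{j \in [l]}$ (which is at most $H$ by part (a)), giving the claimed bound of $H+1$.
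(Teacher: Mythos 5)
Your proof is correct and follows essentially the same approach as the paper's, tracing paths in the charging trees for (a), invoking the $\BW$ selection rule together with the piggybacking step of the main procedure for (b), and summing congestions for (c). Your handling of (b) is in fact slightly more careful than the paper's: you order by the leaf times $q_j'$ and use the \emph{true} predecessor time $q'''$ of $(u,q_{j'}')$ in $\chF(v)$ (rather than $q_j'$ itself), which cleanly accounts for possible intervening $(u,\cdot)$ nodes and avoids the paper's implicit assumption that ordering by the root times $q_j$ also orders the leaf times $q_j'$.
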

\begin{proof}
  Consider a timestep $t$, and let $S_t \subseteq [l]$ be the set of
  indices $j$ such that $t \in(q_j',q_j]$. For each $j \in S_t$,
  consider the path $P_j$ from $a_j=(u,q_j')$ to the root $(v,q_j)$ in
  $\cT_j$. For sake of concreteness, let this path be
  $(u_j^1,q_j^1)=(u,q_j'), (u_j^2, q_j^2), \ldots, (u_j^{n_j},
  q_j^{n_j}) =(v,q_j)$. Since $t \in (q_j',q_j]$, there is an index
  $i$ such that $t \in (q_j^{i-1}, q_j^{i}]$---call this index
  $i(j)$.

  \begin{subclaim}
    For any two distinct $j,j' \in S_t$,
    $u_{j}^{i(j)} \neq u_{{j'}}^{i(j')}$.
  \end{subclaim}

  \begin{subproof}
    Suppose not. For sake of brevity, let $w$ denote
    $u_{j}^{i(j)} = u_{{j'}}^{i(j')}$, $i$ denote $i(j)$ and $i'$
    denote $i(j')$. Assume wlog that $q_{j'}^{i'} < q_{j}^{i}$. So
    $\cT_j$ and $\cT_{j'}$ have vertices $(w,q_j^i)$ and
    $(w',q_{j'}^{i'})$ respectively. Now consider the child of
    $(u_j^{i-1}, q_j^{i-1})$ of $(w,q_j^i)$. The rule for adding edges
    in $\cT_j$ states that we look at the highest time $q' < q_j^i$
    for which there is a vertex $(w,q')$ in the charging forest, and
    so $q_j^{i-1} = q'$. Also $q' \geq q_{j'}^{i'}$. But then, the
    intervals $(q_j^{i-1}, q_j^i]$ and $(q_{j'}^{i'-1}, q_{j'}^{i'}]$
    are disjoint, which is a contradiction because both of them
    contain $t$.
  \end{subproof}

  Since all these $u_j^{i(j)}$ vertices must lie on the path from $u$
  to $v$ in $T$, there are only $H$ of them. Since they are distinct
  by the above claim, the number of intervals containing $t$ is at
  most $H$, which proves the first statement.

  To prove the second statement, assume w.l.o.g.\ that
  $\{q_j\}_{j \in [l]}$ are arranged in increasing order. It suffices
  to show that for any $j \in [l]$, $(b_j, q_j']$ and
  $(b_{j+1}, q_{j+1}']$ are disjoint.  Suppose not. Since
  $e_{j+1} \geq q_{j+1}'$ (by \Cref{cor:alive}), we have that
  $R_{j+1}$ contains $q_j'$. Since $(u,q_{j+1}')$ has no children
  $\cT_{j+1}$, the construction of the charging forest means
  $\FL(u,q_j')$ should have returned the set $S = \emptyset$.
  Now since $v$ is added to the set $Z_{q_j'}$, all the active
  requests---in particular $R_{j+1}$--- below $u$ at time $q_j'$ would
  be serviced at time $q_j'$ due to line~\eqref{l:serverest1}). This
  contradicts the fact that $R_{j+1}$ is active at time
  $q_{j+1}' > q_j'$.
\end{proof}

\subsection{Dual Feasibility of \texorpdfstring{$\DUZ$}{SimpleUpdate}}
\label{sec:dual-feasibility-duz}

Fix a vertex $v$ and the local LP $\fL^v$, which has variables
$\yv(u, \tau)$ for $u \in T_v$ and timesteps $\tau$. First, consider
only the constraints in $\fL^v$ added by the $\DUZ$ procedure, i.e.,
using~\eqref{eq:consz}.

\begin{theorem}
  \label{thm:duzdual}
  For a variable $\yv(\ust, \tst)$, let $\Sst$ be the set of timesteps
  $\tau$ such that \DUZ is called on $v$, and the (unique) constraint
  $C_\tau$ (of the form~\eqref{eq:consz}) that it adds to
  $\fL^v(\tau)$ contains the variable $\yv(\ust, \tst)$ on the LHS. Then:
  \begin{align}
    \label{eq:dualzmain}
    \sum_{\tau \in \Sst} z(C_\tau) \leq O(H^4) \cdot c_{\ust}. 
  \end{align}
\end{theorem}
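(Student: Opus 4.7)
The plan is to invoke \Cref{invar:dual-raisednew} to reduce the claim to bounding $\sum_\tau \xi_\tau / b^{C_\tau}$, and then to estimate this sum by partitioning $\Sst$ according to the pair $(q, i') := (\lfloor \tau \rfloor, \ist(\tau))$. First I note that the $\bot$-constraints added by \DUZ have an empty LHS (the singleton branch sets $A = \{v\}$, so the sum in~\eqref{eq:consz} is empty), hence every $\tau \in \Sst$ arises from the non-singleton branch; since that branch creates exactly one constraint in $\fL^v(\tau)$, \Cref{invar:dual-raisednew} immediately gives $z(C_\tau) \cdot b^{C_\tau} = \Gamma(v,\tau) =: \xi_\tau$.

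For each fixed pair $(q, i')$ I would chain two estimates. The structural lower bound: because $\ust \in A_\tau = \cL_s$ forces $s = \level(\ust)$, and the \DUZ-criterion selects $s$ so that $|\cL_s|\cdot c_\ust \geq c_{v_{i'}}/H$, we get $|A_\tau| \geq c_{v_{i'}}/(H c_\ust)$, which I would then upgrade to $b^{C_\tau} = \Omega(c_{v_{i'}}/(H c_\ust))$. The per-pair dual cap: \Cref{lem:istar} gives $\sum_{\tau:\,\lfloor\tau\rfloor=q,\,\ist=i'} \xi_\tau \leq 12 H\, c_{v_{i'}}$. Dividing yields
\begin{equation*}
\sum_{\tau:\,\lfloor\tau\rfloor=q,\,\ist=i'} z(C_\tau) \;\leq\; \frac{12 H\, c_{v_{i'}}}{\Omega(c_{v_{i'}}/(H c_\ust))} \;=\; O(H^2)\,c_\ust.
\end{equation*}

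Next I would count the contributing pairs $(q, i')$. For fixed $i'$ and fixed $\tst$, a time $q$ can contribute only if the charging tree in $\chF(v_{i'})$ rooted at $(v_{i'}, q)$ contains $\ust$ as a level-$s$ leaf (added at some earlier time $q' \leq q$) with the associated interval $(b_j, q]$ covering $\tst$, where $[b_j, e_j]$ comes from $\leafreq$ at that leaf. \Cref{lem:lowcongestion} (the final ``$H+1$'' conclusion) then caps the congestion of these intervals by $H+1$, so at most $H+1$ such times $q$ exist per $i'$, and at most $H(H+1)$ pairs $(q, i')$ in total. Combining with the per-pair bound yields $\sum_{\tau \in \Sst} z(C_\tau) \leq H(H+1) \cdot O(H^2)\,c_\ust = O(H^4)\,c_\ust$.

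The main obstacle is establishing the lower bound $b^{C_\tau} = \Omega(|A_\tau|)$. The RHS of~\eqref{eq:consz} is $|A_\tau| - k_{v,\tau_v} - 2\delta(n-n_v)$, and one must show the server-mass correction $k_{v,\tau_v} + 2\delta(n-n_v)$ does not swallow $|A_\tau|$. The $2\delta(n-n_v)$ piece is negligible once $|A_\tau| \geq \Omega(1/\delta)$, which the lower bound on $|\cL_s|$ grants whenever $c_{v_{i'}}/(H c_\ust)$ is large enough. The tricky piece is $k_{v,\tau_v}$, which can be as large as $k$. Here I would exploit the \BT/\BW construction: the level-$s$ leaves of the charging tree correspond to genuine outstanding requests that have not yet been served, so the ``true deficit'' $|A_\tau| - k_{v,\tau_v}$ should remain proportional to $|A_\tau|$. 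If a small-$|A_\tau|$ boundary case remains, I would handle it separately using the cruder positivity bound $b^{C_\tau} \geq \delta'$ from the remark following~\eqref{eq:consz}, combined with a sharper accounting of how often such small-$|A_\tau|$ constraints arise.
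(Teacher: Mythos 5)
Your decomposition matches the paper's: you partition $\Sst$ by pairs $(q,\ist)$, cap each group's $\sum \xi_\tau$ via \Cref{lem:istar}, use $|\cL_s|\geq c_{v_\ist}/(H c_\ust)$ to control the per-constraint denominator, and count contributing pairs via \Cref{lem:lowcongestion} (the paper groups first by $v'=v_\ist$ and then by $q$, which is the same thing). Your congestion count of ``$H+1$ per $i'$'' is off by one---the paper splits $(b_j,\tau_j]$ into $(b_j,q_j]\cup(q_j,\tau_j]$ and gets $H+2$---but that only shifts constants.

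The genuine gap is exactly the step you flag as ``the main obstacle'': proving $b^{C_\tau}=\Omega(|A_\tau|)$, i.e.\ showing $k_{v,\tau_v}+2\delta(n-n_v)$ does not swallow $|A_\tau|$. Neither of your proposed resolutions lands. The speculation that the charging-tree leaves are ``genuine outstanding requests, so the true deficit should remain proportional'' is not the mechanism and does not obviously become one---the deficit $|A_\tau|-k_{v,\tau_v}$ involves server mass in $T_v$, which has no a priori relation to the number of unserved requests recorded in $\chF(v_\ist)$. Your fallback of using $b^{C_\tau}\geq\delta'$ also cannot work: it gives $z(C_\tau)\leq\xi_\tau/\delta'$, and $\gamma/\delta'$ is a fixed quantity bearing no relation to $c_\ust$, so after summing over the $\poly(H)$ groups the result would not be $O(\poly(H))\,c_\ust$. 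There is moreover no ``small-$|A_\tau|$ boundary case'' to mop up: whenever \DUZ\ adds a non-$\bot$ constraint, the charging tree is not a singleton, which forces $|A_\tau|\geq c_{v_\ist}/(Hc_\ust)\geq\lambda/H\geq 10$.

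The fact you are missing is simple and comes from where \DUZ\ sits in the control flow, not from \BT/\BW: in \Cref{algo:mainnew}, \DUZ\ is called only on backbone nodes $v_i$ with $i\leq i_0$, and by the definition of $i_0$ every $j<i_0$ has $\sib(v_j,\tau)=\emptyset$. Hence for the node $v=v_i$ on which \DUZ\ runs, every leaf of $T_v$ other than the requested leaf $\rqq$ is inactive (has at most $\delta$ server mass), so $k_{v,\tau_v}\leq 1+n\delta$ and $k_{v,\tau_v}+2\delta(n-n_v)\leq 1+2n\delta<2$. Combined with $|A_\tau|\geq 10$ this gives $b^{C_\tau}\geq|A_\tau|/2$, which is the $\Omega(|A_\tau|)$ bound you need. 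Note also that you do not need $|A_\tau|\geq\Omega(1/\delta)$ to kill the $2\delta(n-n_v)$ term as you feared: with $\delta=\nicefrac{1}{10n^3}$ that term is always below $1$, so $|A_\tau|\geq 10$ is already plenty.
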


\begin{proof}
  In the call to \DUZ, we use the charging tree rooted at a vertex
  $v_\ist$, where $v_\ist$ lies between $v$ and $\ust$. Motivated by
  this, for a node $v'$ on the path between $\ust$ and $v$, let
  $\Sst(v')$ denote the subset of timesteps $\tau$ for which the
  corresponding vertex $v_\ist$ is set to $v'$. We show:

  \begin{subclaim}
    \label{clm:duzdual}
    For any $v'$ on the path between $\ust$ and $v$,
    $$\sum_{\tau \in \Sst(v')} z(C_\tau) \leq O(H^3) \cdot
    c_{\ust}.$$ 
  \end{subclaim}

  \begin{subproof}
    Consider a timestep $\tau \in \Sst(v')$. Let $\cT_q$ be the tree
    in the charging forest $\chF(v')$ containing $(v',q)$ for
    $q = \floor{\tau}$. As described when defining~\eqref{eq:consz},
    let $\cL_{s_\tau}$ be the leaves of $\cT_q$ corresponding to level
    $s_\tau$. Since the variable involving $\ust$ appears in this
    constraint, we have $\ust \in \cL_{s_\tau}$. Therefore, each leaf
    in $\cL_{s_\tau}$ has cost equal to $c_\ust$, By the choice of
    level, the total cost of this set is at least $\frac{c_{v'}}{H}$,
    so $$ |\cL_{s_\tau}| \geq \frac{c_{v'}}{H
      c_{\ust}}.$$ Moreover, the tree $\cT_q$ was not a
      singleton so $c_{\ust} \leq c_{v'}/\la$, and since
      $\la \geq 10H$, we get $|\cL_{s_\tau}| \geq 10$. Recall that
    we set
    $$ z(C_\tau) = \frac{\xi_\tau}{|\cL_{s_\tau}| - k_{v',\tau_{v'}} -
      2\delta(n-n_{v'})} \leq \frac{2\xi_\tau}{|\cL_{s_\tau}|},$$
    where the inequality  uses that all leaves below $v$ (except for the requested
    leaf) have at most $\delta$ servers, and so
    $k_{v',\tau_{v'}} + 2\delta(n-n_{v'}) \leq 1 + 2n\delta \ll |\cL_{s_\tau}|/2$. Combine
    the above two facts, for any time $q$,
    \begin{align}
      \label{eq:singleq}
      \sum_{\tau: \tau \in \Sst(v'), \floor{\tau} = q} z(C_\tau) \leq
      \frac{2H c_\ust}{c_{v'}} \cdot \sum_{\tau: \tau \in \Sst(v'),
      \floor{\tau} = q}  \xi_\tau \leq 24H^2 \; c_{\ust},
    \end{align}
    where the last inequality follows
    from~\Cref{lem:istar}. 
    
    We need to sum over different values of $q$, so consider
    $Q := \{\floor{\tau} \mid \tau \in \Sst(v')\}$.  For each value
    $q_j \in Q$, choose $\tau_j$ to be one representative timestep in
    $\Sst$ (in case there are many). For each $q_j \in Q$, there is a
    vertex of the form $(u^\star_j, q_j')$ in the tree in $\chF(v')$
    rooted at $(v',q_j)$. The constraint $C_{\tau_j}$ involves
    $\yv(\ust,(b_j,\tau_j])$, where $b_j$ is the left end-point of the
    leaf request $\leafreq(\ust, q_j')$. And $\tst$ belongs to all of
    these intervals $(b_j, \tau_j]$. Write $(b_j, \tau_j]$ as
    $(b_j, q_j] \cup (q_j, \tau_j]$. The intervals $(q_j, \tau_j]$ are
    mutually disjoint for all $j$, and the intervals $(b_j, q_j]$ have
    congestion at most $H+1$ by Congestion Lemma II
    (\Cref{lem:lowcongestion}). Therefore the intervals $(b_j,\tau_j]$
    have congestion at most $H+2$, and so $|Q| \leq H+2$. Combining
    this with~\eqref{eq:singleq} completes the proof.
  \end{subproof}
  \noindent \Cref{thm:duzdual} follows from the above claim by summing over all
  $v'$ on the path from $\ust$ to $v$.
\end{proof}

\subsection{Dual Feasibility}
\label{sec:dual-feasibility-du}

In this section, we show approximate dual feasibility of the entire
solution due to both the \DUZ and \DU procedures. The proof is very
similar to the one for \kser (in \S\ref{sec:appr-dual-feas}) except
for two changes: (i) we need to account for the intervals $I_u'$ as
in~\eqref{eq:consnew}, and (ii) we have defined new sets of
constraints in \DUZ procedure, so the result of~\Cref{thm:duzdual}
needs to be combined with the overall dual feasibility result. Observe
that the statements of~\Cref{cl:rec} and~\Cref{cl:uppery} hold without
any changes. We now prove the analogue of~\Cref{lem:dual}. First, a
simple observation.

\begin{claim}
  \label{cl:duflowbound}
  Consider a time $q$ and vertex $v$ such that a critical request at
  time $q$ lies below $v$. Then the total objective value of dual
  variables raised during $\DU$ procedure at $v$ is at most
  $4H c_v/\la$.
\end{claim}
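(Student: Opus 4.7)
The plan is to match each iteration of the main while loop that invokes $\DU(v,\cdot)$ against a fixed amount of server mass delivered into the critical leaf $v_0$, and then close the loop using the global cap of (at most) one unit of server that can ever be delivered. First I would observe that $\DU(v,\tau)$ is invoked in \Cref{algo:mainnew} only when, writing $v = v_i$, we have $i_0 < i$ at that iteration (see line~\eqref{l:call-dualnew}); equivalently, $v$ sits strictly above the lowest backbone node $v_{i_0}$ with a nonempty active-sibling set. Observe moreover that by \Cref{cl:GammaFacts}(ii), $\Gamma(v, \tau) = \gamma$ for every such call since $\tau$ is added to $\cR^{ns}(v)$ by $\DU$ itself.

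Next I would lower-bound the server mass delivered to $v_0$ during each such iteration. In the same iteration the procedure $\DU(v_{i_0+1},\tau)$ is called, with principal child $v_{i_0}$ and $\sib(v_{i_0},\tau)\neq\emptyset$ by the choice of $i_0$ in line~\eqref{l:eqnew}. Hence the enforcement step in line~\eqref{l:more} of \DU (applied at height $h = i_0$) guarantees that at least
\[
\frac{\gamma}{4H\lambda^{i_0}} \;\geq\; \frac{\gamma}{4H\lambda^{i-1}} \;=\; \frac{\gamma}{4H\,(c_v/\lambda)}
\]
units of server mass are transferred into $v_0$ during that iteration, where I used $i_0 \leq i-1$ and $c_v = \lambda^i$.

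Finally I would combine these two facts. By \Cref{invar:active}, $k_{v_0,\tau}$ stays in $[0,\,1-\nicefrac{\delta'}{2}]$ throughout, and since servers are only delivered to (never removed from) the current request location $v_0$ during the while loop for request $q$, the total server mass delivered to $v_0$ across all iterations of that loop is at most $1$. Dividing by the per-iteration lower bound above shows that the number of iterations in which $\DU(v,\cdot)$ is called is at most $\nicefrac{4H c_v/\lambda}{\gamma}$. Each such call is at a distinct timestep $\tau$ and contributes exactly $\Gamma(v,\tau) = \gamma$ to the dual objective in $\cons^v(\tau)$ by \Cref{invar:dual-raisednew}, so summing gives total dual objective at most $\gamma \cdot \nicefrac{4H c_v/\lambda}{\gamma} = 4H c_v/\lambda$, as required.

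There is no serious obstacle here: the statement is essentially a direct consequence of the forced server movement in line~\eqref{l:more} paired with the per-call dual bound of $\gamma$. The only point needing care is keeping track of the exponent, i.e., verifying $i_0 \leq i-1$ so that $\lambda^{i_0} \leq c_v/\lambda$; but this is exactly the condition that selects \DU rather than \DUZ at $v$.
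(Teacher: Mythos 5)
Your proof is correct and follows essentially the same approach as the paper: pair the per-call dual increase of $\gamma$ at $v$ (from \Cref{cl:dualtrnew}) against the per-iteration lower bound on server mass delivered to the critical leaf, then use the global cap of one unit. In fact, your write-up is somewhat more careful than the paper's one-line argument: the paper states that "during each call of \DU at $v$ we transfer at least $\nicefrac{\gamma}{4H\lambda^h}$," which is not literally true of the call $\DU(v,\tau)$ itself when $\sib(u_0,\tau)=\emptyset$ there (line~\eqref{l:more} is gated on $\sib(u_0,\tau)\neq\emptyset$); you correctly locate the guaranteed transfer in the call $\DU(v_{i_0+1},\tau)$ made in the same while-loop iteration, and close the gap with $i_0 \le i-1$. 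One cosmetic nit: you cite \Cref{cl:GammaFacts}(ii) for $\Gamma(v,\tau)=\gamma$, but that item is about a child $u$ of $v$; the relevant fact is simply that $\DU(v,\tau)$ sets $\Gamma(v,\tau)\gets\gamma$ on entry, which is what \Cref{invar:dual-raisednew} then uses. This does not affect correctness.
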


\begin{proof}
  Let the height of $v$ be $h+1$. During each call of $\DU$ at $v$, we
  raise the dual objective by $\gamma$ units (by \Cref{cl:dualtrnew}),
  and transfer at least
  $\frac{\gamma}{4H \la^{h}} = \frac{\gamma\la}{4H c_v}$ server mass
  to the requested leaf node $\rqq$. Since we transfer at most one unit of
  server mass to $\rqq$, the result follows.
\end{proof}

\begin{theorem}[Dual Feasibility for Time-Windows]
  \label{lem:dualnew}
  For a node $v$ at height $h+1$, consider the dual variables $z(C)$
  for constraints added in $\fL^v$ during the \DU and the \DUZ
  procedure. These
  dual variables $z_C$ are $\beta_{h}$-feasible for the dual program
  $\fD^v$, where
  $\beta_h = \left( 1+ \nicefrac{1}{H} \right)^h O(H^4 + H(\ln n + \ln M
  + \ln (k/\gamma))). $
\end{theorem}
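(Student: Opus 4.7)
The plan is to prove the lemma by induction on the height $h$ of $v$, following the template of \Cref{lem:dual} for pure \kser but accounting for two new features: the extra $\yv(u, I_u')$ terms in the composed \DU\ constraint \eqref{eq:consnew} that appear when $u \in U'$ is a bot child, and the fresh constraints of the form \eqref{eq:consz} added by \DUZ. The base case $h=0$ is vacuous since the local LP at a leaf has no variables. For the inductive step at $v$ of height $h+1$, I will fix a dual constraint corresponding to a variable $\yv(w,t)$ with $w \in T_v$ and upper-bound $\sum_{C \in \cons^v,\, C \ni \yv(w,t)} z_C$, treating separately the cases $w \in \chi_v$ (child) and $w \in T_v \setminus \chi_v$ (non-child descendant).

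For a child variable $\yv(u,t)$ with $u \in \chi_v$, I would partition the relevant constraints of $\cons^v$ into three groups: (A) those added by \DUZ\ at $v$ via \eqref{eq:consz}; (B) \DU\ constraints $C(v,\sigma,\tau)$ where $t \in I_u$; and (C) \DU\ constraints where $u \in U'$ (so $C_u$ is a bot at $u$) and $t \in I_u' = (b_u,\tau_u]$. Group (A) is bounded by $O(H^4)\,c_u$ via direct application of \Cref{thm:duzdual}. For group (B), the primal update rule in line \eqref{l:draise2new} of \DU is unchanged from the \kser setting, so an analog of \Cref{cl:two} guarantees that whenever $z_C$ is raised, one of $\yv(u,t_1)$ or $\yv(u,t_2)$ is raised at the exponential rate of line \eqref{l:draise2new}; combined with the upper bound $\yv(u,\cdot) \le 4\gamma M + k$ from \Cref{cl:uppery}, this yields $\sum_{C \in \text{(B)}} z_C \le O(c_u\,\ln(nMk/\gamma))$.

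For a non-child descendant $\yv(u',t)$ with child $u \in \chi_v$ on the $u'$-to-$v$ path, I again split the constraints at $v$ into \DUZ\ and \DU\ contributions. The \DUZ\ piece is bounded by $O(H^4)\,c_{u'}$ by \Cref{thm:duzdual}. For the \DU\ piece, I observe that each $C(v,\sigma,\tau)$ containing $\yv(u',t)$ must select a non-bot $C_u \in \cons^u$ whose LHS contains $\y{u}(u',t)$, because bot $C_u$ has empty LHS and the only $u$-related variables inherited by the composition \eqref{eq:consnew} from a bot $C_u$ are of the form $\yv(u, I_u')$ (involving $u$ itself, not its strict descendants). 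Invariant \eqref{eq:bound}, which continues to hold via the analysis in \S\ref{sec:prov-invar-cond}, then gives
\[
\sum_{C = C(v,\sigma,\tau)\,\ni\,\yv(u',t)} z_C \;\le\; \bigl(1+\tfrac{1}{H}\bigr) \sum_{C_u \in \cons^u,\, C_u \ni \y{u}(u',t)} z_{C_u},
\]
and the induction hypothesis bounds the RHS by $(1+1/H)\,\beta_{h-1}\,c_{u'}$. Combining with the child-variable bound gives the recurrence $\beta_h = (1+1/H)\,\beta_{h-1} + O(H^4 + \ln(nMk/\gamma))$ with $\beta_0$ of the same form, which unrolls to the stated bound; the extra $H$ factor multiplying $\ln$ in the lemma statement is partly absorbed by noting that \Cref{cor:2new} forces $M = \Theta(\la^{2H} k/\gamma)$, so $\ln M$ already carries a $\Theta(H)$ factor.

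The main obstacle is handling group (C) for child variables, since there $\yv(u,t)$ appears on the LHS of $C$ only through $I_u'$ and is never raised when $z_C$ is raised, so the exponential-growth trick of (B) is inapplicable. My plan is to fix a bot $C^u_q$ at $u$ (created by \DUZ\ at timestep $\tau_u = q$ with request interval $(b_q, q]$ containing $t$) and appeal to Low-Congestion Lemma I (\Cref{cl:lowcong1}), which limits the number of such $q$'s for a fixed $t$ to at most $H$. For each such $q$, I would bound $\sum_{C \in \text{(C)},\, C_u = C^u_q} z_C$ by combining (a) the $\gamma$-cap on the dual objective per \DU\ call using this bot (\Cref{cl:dualtrnew}), (b) the lower bound $b^C \ge (n_v - \sum_u n_u)\,\delta \ge \delta$ read off from the composition \eqref{eq:consnew}, and (c) a charging-forest low-congestion argument in the spirit of \Cref{thm:duzdual}, using \Cref{lem:lowcongestion} to cap the number of \DU\ calls at $v$ that can reuse $C^u_q$ in the $U'$-slot for $u$. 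This should yield a $\poly(H)\cdot c_u$ bound for group (C), closing the base of the recurrence and completing the proof.
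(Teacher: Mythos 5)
Your overall decomposition and the handling of groups (A) and (B) and of non-child descendants match the paper's proof, but your mechanism for group (C) has a genuine gap. You propose to bound $\sum_{C:\,C_u = C^u_q} z_C$ per bot by combining (a) the per-call $\gamma$ cap, (b) $b^C \geq \delta$, and (c) a cap on the number of \DU calls at $v$ that reuse $C^u_q$, citing \Cref{lem:lowcongestion} for (c). Step (c) fails: $\bot$-constraints are \emph{always slack} by convention (\Cref{def:slack}), so \Cref{invar:duals-match} imposes no depletion on $C^u_q$, and nothing prevents it from being selected in the $U'$-slot of arbitrarily many later \DU calls at $v$ (as long as $\prev(u,\tau)$ keeps pointing back to the bot's timestep). \Cref{lem:lowcongestion} speaks to charging-forest intervals, not \DU iteration counts, and does not give the cap you want. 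By \Cref{cl:duiterations} the number of while-loop iterations per critical request is already $\Theta(H\,\cost(q)/\gamma)$, and these accumulate over many critical requests, so multiplying by the $\gamma/\delta$ per-call bound from (a)+(b) blows up far past $\poly(H)\,c_u$.

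The paper handles the $I_u'$ contributions by splitting group (C) into two sub-cases and using a different idea in each. For a \emph{non-principal} bot child $u$, it observes that any constraint $C(v,\sigma,\tau)$ for which $t \in I_u' = (b_u,\tau_u]$ also contains the group-(B) variable $\yv(u, q_j+1)$ with $q_j+1 \in I_u$ (here $q_j = \floor{\tau_u}$), so the dual weight of those constraints can be charged to the already-established group-(B) dual-feasibility bound for that single variable; Low Congestion Lemma~I (\Cref{cl:lowcong1}) then caps the number of distinct bots $q_j$ whose intervals contain $t$ at $H+1$. For the \emph{principal} bot child $u_0$, where $I_{u_0}$ is empty and no group-(B) variable is available to charge to, the paper instead invokes \Cref{cl:duflowbound}: line~\eqref{l:more} forces a minimum server transfer per \DU call, so the total dual objective raised at $v$ for a fixed critical time $q$ is at most $4H c_u$, and again Low Congestion~I caps the number of relevant $q$'s. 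Neither argument is a count of \DU calls reusing a fixed bot; you need to replace (c) with the charge-to-group-(B) reduction for non-principal bots and the server-flow argument for the principal one.
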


\begin{proof}
  We prove the claim by induction on the height of $v$. For a leaf
  node, this follows vacuously, since the primal/dual programs are
  empty. Suppose the claim is true for all nodes of height at most
  $h$.  For a node $v$ at height $h+1 > 0$ with children $\chi_{v}$,
  the variables in $\cons^v$ are of two types: (i)~$\yv(u,t)$ for
  some timestep $t$ and child $u \in \chi_{v}$, and
  (ii)~$\yv(u',t)$ for some timestep $t$ and non-child descendant
  $u' \in T_v \setminus \chi_{v}$.
  
  We consider variables of the first type. Fix a child $u$ and a
  timestep $\tau$, and let $\constr^F$ be the set of constraints in
  $\fL^v$ added in $\DU$ subroutine that contain the variable
  $\yv(u,\tau)$ in the LHS.  We group $\constr^F$ into three classes
  of constraints (and draw on the notation in~\eqref{eq:consnew}):
  \begin{itemize}
  \item[(i)] The timestep $t$ lies in $I_u = (\tau_u, \tau]$, where
    $u$ is a non-principal child of $v$ at the timestep $\tau$ at
    which this constraint is added: call this set of constraints
    $\constr_1(t)$.  The argument here is identical to that in the
    proof of~\Cref{lem:dual}, and so we get
    $$ \sum_{C \in \constr_1} z_C \leq O(\ln n + \ln M + \ln(k/\gamma)) c_u. $$
  \item[(ii)] The timestep $t$ lies in the interval
    $I_u'=(b_u,\tau_u]$, where $u$ is a non-principal child of $v$ at
    timestep $\tau_u$. Denote the set of such constraints by
    $\constr_2 = \{C_1, \ldots, C_s\}$. For sake of concreteness, let
    the interval $I_u'$ in $C_j$ be $I_j' = (b_j, \tau_j],$ and let
    $I_j$ denote the corresponding $I_u$ interval. Observe that
    $\tau_j$ corresponds to a $\bot$-constraint in $\cL^u$, and so always remains in $\awake(u)$.  Let $q_j$ denote $\floor{\tau_j}$. Note that any
    constraint in $\constr_2$ must contain one of the variables
    $\yv(u, q_j+1), j \in [s]$, and each of these variables belongs to  the
    corresponding $I_j$ interval. So if $X$ denotes the set
    $\{q_j: j \in [s]\}$, then
    \begin{align}
      \label{eq:constr2}
      \sum_{C \in \constr_2} z(C) \leq \sum_{t' \in X} \sum_{C \in \constr_1(t'+1)} z(C)  \leq  O(\ln n + \ln M + \ln(k/\gamma)) \cdot |X| \cdot c_u,
    \end{align}
    where the last inequality follows from case~(i) above. It remains
    to bound $|X|.$ We know from the Congestion Lemma~I (\Cref{cl:lowcong1}) that the intervals
    $(b_j, q_j], q_j \in X,$ have congestion at most $H$. Since the
    intervals $(q_j,q_j+1], q_j \in X,$ are mutually disjoint, it
    follows that the intervals $(b_j,q_j+1]$ have congestion at most
    $H+1$. Since all of them contain the timestep $t$, it follows that
    $|X| \leq H+1$. This shows that
    $$ \sum_{C \in \constr_2} z(C) \leq O(H(\ln n + \ln M + \ln(k/\gamma)))  c_u. $$
  \item[(iii)] The timestep $t$ lies in the interval
    $I_u'=(b_u, \tau]$, where $u'$ is the principal child of $u$ at
    timestep $\tau$: call such constraints $\constr_3$. For a time
    $q$, let $\constr_3(q)$ be the subset of constraints in
    $\constr_3$ which were added at timesteps $\tau$ for which
    $\floor{\tau} = q$. \Cref{cl:duflowbound} shows that
    $$ \sum_{C \in \constr_3(q)} z(C) \leq 4H \frac{c_v}{\la} = 4H\,c_u. $$
    Arguing as in case~(ii) above, and again using Congestion Lemma~I (\Cref{cl:lowcong1}), we see
    that there are at most $O(H)$ distinct time $q$ such that the set
    $\constr_3(q)$ is non-empty. Therefore,
    $$ \sum_{C \in \constr^F} z(C) \leq O(H^2) c_u. $$
  \end{itemize}
  
  Let $\constr$ be the set of all constraints containing $\yv(u,t)$.
  Combining the observations above, and using \Cref{thm:duzdual} for
  the constraints in $\constr$ added due to the $\DUZ$ procedure, we
  see that
  $$ \sum_{C \in \constr} z(C) \leq \beta_0\, c_u. $$
  It remains to consider the variables $\yv(u',\tau)$ with
  $u' \in T_u$ and $u \in \chi_{v}$. The argument here follows from
  induction hypothesis, and is identical to the one in the proof
  of~\Cref{lem:dual}.
\end{proof}

\subsection{The Final Analysis}

We can now put the pieces together: this part is also very similar to
\S\ref{sec:analysis}, except for the cost of the piggybacking trees.
Recall that $\lambda \geq 10H$.

\begin{enumerate}
\item \Cref{lem:dualnew} shows that the dual solution for the global
  LP (which is the same as the $\fL^{\rootvtx}$) is
  $O(H^4 + H \log \nf{Mnk}{\gamma})$-feasible. In each iteration of the {\bf while} loop
  in~\Cref{algo:mainnew}, we raise the dual objective corresponding to
  this LP by $\gamma$ units, as \Cref{cl:dualtrnew} shows.

\item 
  The total service cost in each call to the \DU procedure is
  $O(\gamma)$---again by~\Cref{cl:dualtrnew}, the amount of server
  mass transferred during $\DU$ procedure at vertex $v$ at height
  $h+1$ is at most $\frac{\gamma}{\la^h}$, and the cost of moving a
  unit of server mass below $T_v$ is $O(\la^h)$. Therefore, the cost
  to service the critical request in each iteration is $O(\gamma
  H)$. Therefore, the service cost for each critical request is $O(H)$
  times the dual objective value.

\item Now we consider the service cost for piggybacked
  requests. The cost of all the trees is dominated by the cost of
  tree for $v_{\lcost(q)}$, i.e., at most \[ O(H^2 c_{v_{\lcost(q)}}) =
    O(H^2 \; \la^{\lcost(q)}) \leq O(H^2 \la \; \cost(q)). \]
  Since $\cost(q)$ is the least cost to move the required amount of
  server to the request location, the cost of the trees is at most
  $O(H^2 \la)$ times the cost incurred in the previous step.

\end{enumerate}
Hence the competitiveness is
\[ O(H^4 + H \log \nf{Mnk}{\gamma}) \cdot O(H) \cdot O(H^2 \la ). \]
It follows that our fractional algorithm for the \ksertw problem is
$O(H^4\la (H^3 + \log  \nf{Mnk}{\gamma}))$-competitive. %
This proves \Cref{thm:frac-tw}.


\section{Closing Remarks}
\label{sec:open}

Our work suggests several interesting directions for future
research. 
Can our LP 
extend 
to variants and generalizations of \kser
 in the literature? 
One natural candidate is the hard 
version of the $k$-taxi problem.
Another interesting direction is to exploit the 
fact that our LP easily extends to time-windows. The special case of
\ksertw where $k=1$ is known as \emph{online service with 
delay}. While poly-logarithmic competitive ratios are
known for this problem (and also follow from our current work),
no super-constant lower bound on its 
competitive ratio bound is known. On the other hand, a sub-logarithmic 
competitive ratio is not known even for simple metrics like the line.
Can our LP (or a variant) bridge this gap?

More immediate technical questions concern the \ksertw problem itself.
For instance, can the 
competitive ratio of the \ksertw problem be improved from 
$\poly\log(n,\Delta)$ to $\poly\log(k)$? Another direction is to 
extend \ksertw to general delay penalties. 
Often, techniques for time-windows extend to general delay functions 
by reducing the latter to a prize-collecting version of the time-windows
problem. Exploring this direction for \ksertw would be a useful
extension of the results presented in this paper.

\subsection*{Acknowledgments}

AG was supported in part by NSF awards CCF-1907820, CCF-1955785, and CCF-2006953.
DP was supported in part by NSF awards CCF-1750140 (CAREER) and
CCF-1955703, and ARO award W911NF2110230.

{\small
\bibliographystyle{alpha}
\bibliography{paper,server,ref}
}

\appendix
\section{Relating \texorpdfstring{$\fM$}{MainLP} to the Min-cost Flow Formulation for \texorpdfstring{\kser}{k-server}}
\label{sec:min-cost-relation}

We show that the 
constraints \eqref{eq:basic-cons} of the LP relaxation $\fM$ for \kser are implied by the standard
min-cost flow formulation for  \kser on HSTs. 

We first describe the min-cost flow formulation in detail. Consider an instance of the \kser problem consisting of an HST $T$, and a sequence of $N$ request times. 
Recall that the set of timesteps $\cT$ varies from $1$ to $N$ in steps of $\eta$.  
 We construct
a time-expanded graph $G$ with vertices
$V(G) := \{ v_t \mid v \in V(T), t \in \cT \} \cup \{s_G, t_G\}$. 
The edges are of three types (there are no edge-capacities):
\begin{OneLiners}
\item[(i)] cost-$0$ edges $\{(s_G,v_1), (v_N, t_G) \mid v \in V(T)\}$
  connecting the source and sink to the first and last copies of each
  node, 
\item[(ii)] cost-$0$ edges
  $\{(v_t, v_{t+1}) \mid v \in V(T), t \in \cT \}$ between consecutive
  copies of the same vertex, and 
\item[(iii)] edges
  $\{(v_t, p(v)_t) \mid v \in V(T), t \in \cT \}$ of cost $c_v$
  between each node and its parent, and
  $\{(v_t, u_t) \mid v \in V(T), u \in \chi_v, t \in \cT \}$ of cost
  zero between a node and its children. (This captures that moving
  servers up the tree incurs cost, but moving down the tree can be done
  free of charge.)
\end{OneLiners}
The source $s_G$
has $k$ units of supply, and sink $t_G$ has $k$ units of demand (or
equivalently, a supply of $-k$). If the request for time $q$ is at
leaf $\ell$, we require that at least one unit of flow passes through
$\ell_q$. To model this, we assign a supply of $-1$ to $\ell_q$ and
$+1$ to $\ell_{q+1}$. This is consistent with the proof of~\Cref{cl:lp} where we assumed that after servicing $\ell$ at time $q$, the server stays at this leaf till time $q+1$. 

The integrality of the min-cost flow
polytope implies that an optimal solution to this transportation
problem captures the optimal $k$-server solution. Moreover, the
max-flow min-cut theorem says that if $x$ is a solution to this
transportation problem, then for all subsets $S \sse V(G)$,
\begin{align}
    \label{eq:supply}
     x(\partial^+(S)) \geq \suppl(S). 
\end{align}

We now consider special cases of these
constraints. Consider a tuple $(A, \btau)$ corresponding to the LP
constraint~\eqref{eq:basic-cons}: recall that $A$ is a subset of
leaves, and $\btau$ assigns a timestep $\tau_u$ to each $u \in T^A$,
with these timesteps satisfying the ``monotonicity'' constraints stated
before~\eqref{eq:basic-cons}.  
We now define a set $S_{A,\btau}$ as follows: 
for each node $v \in T^A$, we add the nodes $v_t, t > \tau_v$ to $S_{A, \btau}$. 
 Finally, add the sink $t_G$ to $S_{A,\btau}$ as
well. Since each leaf in $A$ contributes $+1$ to the supply of $S_{A, \btau}$, and $t_G$
contributes $-k$, we have $\suppl(S_{A,\btau}) = |A|- k$. Moreover, 
$$ x(\partial^+(S_{A,\btau})) = \sum_{v \neq \rt: v \in T^A} x(v, (q_v, q_{p(v)}]). $$
Thus, constraint~\eqref{eq:supply} for the set $S_{A,\btau}$ is identical
to the covering constraint $\varphi_{A,\btau}$ given by~\eqref{eq:basic-cons}.



\end{document}